\documentclass[10pt,journal,compsoc]{IEEEtran}

\usepackage[nocompress]{cite}

\usepackage{amsmath,amssymb,amsfonts}
\usepackage{amsthm}
\usepackage{mathtools}

\usepackage{algorithmic}

\usepackage{graphicx}
\usepackage{tikz}
\usetikzlibrary{arrows.meta,positioning,fit}
\usepackage{textcomp}
\usepackage{xcolor}
\usepackage{pifont}   % \ding{51}/\ding{55} used in the matrix table
\usepackage{enumitem} % [leftmargin=...] used on itemize environments

\usepackage{booktabs}
\usepackage{tabularx}
\usepackage{makecell}

\usepackage{listings}

\usepackage[hyphens]{url}
\usepackage{seqsplit}

\usepackage{microtype}
\usepackage[colorlinks=true,linkcolor=black,citecolor=black,urlcolor=black]{hyperref}
\hypersetup{
  pdftitle={Stop Means Stop: Measuring and Repairing the Enforcement Gap in Agent-Framework Control Primitives},
  pdfauthor={Sajjad Khan},
  pdfkeywords={LLM agents, multi-agent systems, human-in-the-loop, control plane, idempotency, cancellation, agent frameworks},
}

\newtheorem{property}{Property}
\newtheorem{definition}{Definition}
\newtheorem{lemma}{Lemma}
\newtheorem{remark}{Remark}
\newtheorem{proposition}{Proposition}

\newcommand{\CPUMODEL}{AMD Ryzen~7 PRO 6850U}
\newcommand{\LOADENV}{\CPUMODEL{} workstation (8C/16T), Rust 1.95, Linux}% concurrent-load primary environment; single-vCPU container floor in the Sec. 7.3 footnote
\newcommand{\SOCKETMED}{53\,\mu\mathrm{s}}% measured: bench_socket.py, 20k round-trips, Ryzen 7 PRO 6850U / Ubuntu 24.04
\newcommand{\sys}{\textsc{SoundGate}}

\begin{document}

\title{Stop Means Stop: Measuring and Repairing the\\ Enforcement Gap in Agent-Framework Control Primitives}

\author{Sajjad~Khan% <-this % stops a space
\thanks{S. Khan is an independent researcher, London, UK
(e-mail: sajjadanwar200@gmail.com).}%
\thanks{Artifact (probes, harness, formal models, and \sys{} reference
implementation), with a single-command audit (\texttt{reproduce.sh}) that
re-derives every headline number from committed data, available for review
at: \url{https://anonymous.4open.science/r/soundgate-paper-B576}. The
repository will be made public at
\url{https://github.com/sajjadanwar0/soundgate-paper} upon acceptance. The
gate is installable from PyPI: \texttt{pip install soundgate} (v0.1.0).}%
}

\IEEEtitleabstractindextext{%
\begin{abstract}
Production LLM-agent frameworks ship \emph{control primitives}---human-in-the-loop
approval gates, run cancellation, and execution timeouts---whose names and
documentation imply \emph{barrier} semantics: while a run is paused, cancelled,
or timed out, no gated side effect executes. This contract holds on none of six
widely used open-source frameworks. Model-free differential probes isolate a
recurring \emph{sibling leak}---an approval gate suspends its own branch while
a sibling's effect executes \emph{during the pause}, defeating rejection---in
every framework shipping a pre-execution gate (five of six, four execution
models, two language runtimes), and confirm replay double-execution,
cancellation orphans, and timeout zombies. The hazard is reachable: frontier
models emit the leak-triggering plan shape at rates up to 14\%, and live
models driving unmodified frameworks leak $215$ of $1{,}200$ runs
($P(\mathrm{leak}\mid\mathrm{emitted}){=}1.00$); on naturalistic $\tau$-bench
episodes models serialize writes---the everyday gap is \emph{latent}---while
injection induces it deterministically and a 13-incident public corpus
corroborates the replay and cancellation failures. We repair the gaps with
\sys{}, an \emph{environment-external} Rust gate through which every side
effect must be admitted, enforcing hold-until-decided, reject-cancels,
dedup-on-replay, and fence-on-cancel under a stated
\emph{complete-mediation} contract, discharged for network egress by two
kernel-enforced routes. The admission core is mechanically verified (Verus;
TLA+/TLC to $7.5{\times}10^{7}$ states; TLAPS; Loom on the deployed Rust) and
bridged to code by differential conformance over $1.2{\times}10^{7}$
operations with zero divergences. Under that contract \sys{} blocks every
measured violation on all six frameworks while releasing legitimate effects:
gated $\tau$-bench episodes complete with zero refusals at ${\sim}1$\,ms per
write, and durable admission sustains ${\sim}12$k admissions per second.
\end{abstract}

\begin{IEEEkeywords}
LLM agents, multi-agent systems, human-in-the-loop, control plane,
idempotency, cancellation, agent frameworks.
\end{IEEEkeywords}}

\maketitle
\IEEEdisplaynontitleabstractindextext
\IEEEpeerreviewmaketitle

% =====================================================================
\section{Introduction}
\label{sec:intro}
\IEEEPARstart{L}{arge}-language-model (LLM) agents increasingly act on external
systems: they send e-mail, open tickets, modify databases, deploy code, and
authorize payments. Because these actions are irreversible and occasionally
catastrophic, every major agent framework ships \emph{control primitives} that
insert human authority or hard limits into an otherwise autonomous loop: a
human-in-the-loop (HITL) \emph{approval gate} that pauses before a sensitive
action, run \emph{cancellation}, and execution \emph{timeouts}. Practitioner
guidance treats these as the load-bearing safety mechanism for agents with real
tool access, and emerging regulation is beginning to treat demonstrable human
oversight as a compliance obligation rather than a best practice.

The value of a stop primitive rests on an implicit \emph{barrier} assumption:
while a run is paused for approval, cancelled, or timed out, no gated side effect
executes; and the human's decision---approve or reject---governs what actually
happens to the world. If that assumption holds, an operator who clicks
``reject'' can be confident the action did not occur. We are precise about the
assumption's provenance and about what a violation of it means. It is an
\emph{implied} contract---implied by the primitives' names, by framework
documentation, and by practitioner guidance (Section~\ref{sec:background}
grounds it in vendor wording and a maintainer-filed issue), not promised as a
formal specification---and our claim throughout is a \emph{contract mismatch}:
the semantics operators are led to assume are stronger than the semantics the
primitives deliver. Whether any individual gap is a bug or a deliberate design
is for maintainers to say; the operator-facing outcome is the same either way,
and it is that outcome we measure and repair. Figure~\ref{fig:motivation}
previews the central failure and the repair.

\begin{figure}[t]
\centering
\begin{tikzpicture}[
  font=\scriptsize,
  box/.style={draw, rounded corners=1pt, align=center, inner sep=2.5pt, minimum height=6.5mm},
  lbl/.style={font=\scriptsize\bfseries, anchor=west},
  arr/.style={-{Stealth[length=1.6mm]}, semithick},
  node distance=2.5mm and 3mm]
% --- Unmediated row ---
\node[lbl] (l1) at (0,0) {Unmediated (measured, Secs.~3--4):};
\node[box, below=of l1.west, anchor=north west] (a1) {approval gate\\pauses its branch};
\node[box, right=of a1, fill=black!8] (a2) {\textbf{sibling effect}\\\textbf{executes} during pause};
\node[box, right=of a2] (a3) {operator rejects\\$\Rightarrow$ too late};
\draw[arr] (a1) -- (a2);
\draw[arr] (a2) -- (a3);
% --- Gated row ---
\node[lbl, below=9mm of a1.south west, anchor=west] (l2) {Through \sys{} (Secs.~5--6):};
\node[box, below=of l2.west, anchor=north west] (b1) {approval gate\\pauses its branch};
\node[box, right=of b1] (b2) {sibling effect\\submitted $\to$ \textbf{held}};
\node[box, right=of b2] (b3) {reject $\Rightarrow$ refused;\\\textbf{zero effects}};
\draw[arr] (b1) -- (b2);
\draw[arr] (b2) -- (b3);
\end{tikzpicture}
\caption{The sibling leak and its repair. Top: an approval pause suspends only
its own branch; a sibling's effect commits during the pause, so the rejection
cannot prevent it---reproduced on every evaluated framework that ships a
pre-execution gate (Table~\ref{tab:matrix}). Bottom: with every effect
admitted by an external gate, the sibling's submission is held during the
pause and a rejection yields zero effects.}
\label{fig:motivation}
\end{figure}

\textbf{This paper shows the barrier assumption does not hold in the
evaluated frameworks---six, spanning four execution models and two language
runtimes---and that the failures recur across independently designed
frameworks and execution models rather than being one implementation's bug.}
Using small,
\emph{model-free} probes---the primitives under test are properties of the
\emph{framework}, not of any LLM, so no model or API key is required---we
characterize what the shipping control primitives actually guarantee. Our
central finding is a \emph{sibling leak}: when an approval gate and a
side-effecting action are siblings within the same execution step, the gate
suspends its own branch but the sibling's effect executes anyway, \emph{while the
run is paused awaiting approval}. For effects that are irreversible or
already externally committed---an e-mail delivered, a payment captured---a
subsequent human rejection is powerless: the action already happened.
(Compensable effects trade this for compensation machinery; see the
transactional treatments in Section~\ref{sec:related}.) We observe this leak in five frameworks spanning four distinct execution
models---Pregel/BSP supersteps, an event bus, message-passing fan-out, and
parallel tool calls within a single model turn---and two language runtimes,
which rules out a single implementation bug. We claim the recurrence
itself---the failure is not one implementation's accident---and leave its
origin (shared design pattern versus shared async-substrate constraints)
explicitly unresolved.

We then confirm three further enforcement gaps on current releases:
\emph{replay double-execution} (an effect before a resume point executes twice),
\emph{cancellation orphaning} (a cancelled run's tool effect lands after the
caller observes cancellation, on the standard worker-thread path), and
\emph{timeout zombies} (an effect completes after a timeout is reported). On the
timeout axis the frameworks \emph{disagree}---one refuses at construction to
attach a native timeout to a non-cancellable tool, one blocks the caller past
its deadline while the effect lands anyway, others leak zombies under host-level
deadlines---which shows sound behavior is achievable and the divergence
avoidable, by design or by fix.

Finally, we argue that because the framework's own control flow is the very thing
that fails, a repair whose guarantee is independent of each framework's fix
must live \emph{outside} it---upstream patches are complementary, and two
frameworks' clean cells show sound native designs exist
(Section~\ref{sec:related}). We present \sys{}, an
environment-external effect gate: a small, language-independent arbiter---%
implemented in Rust and exposed over a line-delimited protocol---through which
every side effect must be admitted before it touches the world. Under
\emph{complete mediation}---every side-effecting path submits to the gate, an
integration contract we state explicitly rather than assume silently
(Section~\ref{sec:design})---\sys{} enforces
four properties, one addressed to each measured violation class, and in
end-to-end replay it blocks every violation class while still
releasing legitimate approved effects, demonstrated across all six
frameworks---five with a pre-execution gate, repaired on their full violated
set, and a sixth post-hoc-review framework on its cancellation and timeout
axes---spanning all four execution models and both language runtimes.

\noindent\textbf{Contribution boundaries.} The \emph{primary} contribution is
the measurement: a characterized, cross-framework enforcement gap in shipped
control primitives. The repair and its verification are supporting
contributions that show the measured gap is closable at one
framework-independent point; the exposure and corroboration studies bound its
practical reach. Because the paper combines
measurement, a repair, and verification, we state each claim's scope once,
up front, and hold to it throughout: the \emph{measurement} (C1) stands on
its own and is unconditional over the evaluated frameworks and releases; the
\emph{repair} (C3) is conditional on the complete-mediation contract, and
every ``closes'' in this paper means \emph{suppresses the mediated effect}
under that condition; the \emph{verification} covers a \emph{model} of the
admission core (Verus, TLA+/TLC, TLAPS), connected to the deployed Rust by
differential conformance testing---refinement evidence, not a mechanized
refinement proof (Section~\ref{sec:failure-model}).

\noindent\textbf{Contributions.} The paper makes four separable
contributions---\textbf{(C1)} measurement, \textbf{(C2)} qualitative
corroboration that the failures occur in the wild,
\textbf{(C3)} an external repair mechanism whose admission core is
model-verified and differentially tested against the deployed code, and \textbf{(C4)} its evaluation
under benign and adversarial input---each of which stands on its own
evidence and is tagged below.
\begin{itemize}[leftmargin=1.2em]
\item \textbf{(C1)} A model-free differential \emph{measurement} isolating a
\emph{sibling leak} in approval gates and reproducing it across five frameworks
(four independent designs plus a cross-language port), two language runtimes,
and at least four independent environments; plus confirmation on current
releases of replay, cancellation-orphan, and timeout-zombie gaps, including a
cross-framework timeout \emph{disagreement} that maps the design space
(Section~\ref{sec:measurement}); plus an executed durable-execution contrast
(Temporal) separating what re-architecture buys by construction (replay)
from what it does not (the pause barrier), with the gate composing at that
engine's activity boundary unchanged (Section~\ref{sec:temporal}).
\item \textbf{(C2)} A five-model \emph{exposure} measurement under an a-priori
protocol---native-API anchors for GPT-4o and Claude, native replications for
Gemini and DeepSeek, provider-direct for Llama---showing real models emit the
plan shape at non-negligible, task-dependent, model-disjoint rates
(Section~\ref{sec:exposure}); and a live \emph{end-to-end} measurement
(Experiment~A) driving real models through the real \textsf{FW-A} runtime that
turns emission into a measured leak ($P(\mathrm{leak}\mid\mathrm{emitted}){=}1.00$,
$P(\mathrm{leak})$ up to $0.44$, $0$ once mediated; Section~\ref{sec:expA}),
with a verified 13-incident public corpus across three trackers as an
occurrence lower bound (Section~\ref{sec:prevalence}).
\item \textbf{(C3)} \sys{}, an environment-external effect gate whose four
properties map onto the measured violations under a stated complete-mediation
contract, with end-to-end repair on all six frameworks (four independent
designs and the JavaScript port on their full violated set, the post-hoc-review
framework on its cancellation/timeout axes), all four execution models, both
runtimes, plus a completion A/B on a third-party benchmark ($\tau$-bench
retail) in which the gated agent completes real episodes with zero
fail-closed refusals of legitimate writes, and a single-node concurrent-load
evaluation including durable (WAL)
mode (Sections~\ref{sec:design}--\ref{sec:eval}).
\item \textbf{(C4)} A demonstration on the real \textsf{FW-A} runtime that the
leak \emph{composes} with prompt injection and the gate holds the same
invariant when the plan is adversarially induced---an intent-agnostic barrier,
not an injection defense (Section~\ref{sec:injection}).
\item \textbf{(C1--C4)} A reproducible artifact: all probes with per-framework
transcripts, the measurement harness, the concurrent-load benchmark, a static
mediation linter, the formal models with checker logs, and the \sys{} reference
implementation, runnable without API keys; the gate also ships as an installable
Python package (\texttt{pip install soundgate}) with native (PyO3) and
pure-Python clients, so the repair is adoptable without a framework rewrite.
\end{itemize}

% =====================================================================
\section{Background and Threat Model}
\label{sec:background}

\subsection{Control primitives in agent frameworks}
Contemporary frameworks expose HITL approval through a pause/resume mechanism:
a node or step requests human input, the runtime suspends the run and persists
its state, and execution resumes when a decision arrives. Cancellation and
timeouts are exposed either as native runtime features or via the host
language's concurrency facilities. Across frameworks these primitives share a
common promise---\emph{pause here, and nothing sensitive proceeds until the human
(or the deadline) decides}.

\begin{definition}[Barrier contract, operator sense]
\label{def:barrier}
We use ``barrier semantics'' throughout in the operator's sense---a stop is a
stop for the run's gated effects---made precise as four per-primitive safety
clauses: \textbf{B1} (approval) no gated effect of the run executes between the
pause and the decision; \textbf{B2} (rejection) a rejected effect never
executes; \textbf{B3} (resume) each logical effect executes at most once;
\textbf{B4} (cancel/timeout) after the caller observes the stop, no further
effect of the run lands. This is a \emph{fence} at the effect boundary, not a
synchronization barrier in the parallel-computing sense
(all-arrive-before-any-proceed). B1--B4 are exactly the violation predicates of
Section~\ref{sec:measurement} and the enforcement properties P1--P4 of
Section~\ref{sec:design}; they are safety clauses only---\emph{liveness} (every
held effect is eventually decided) is deliberately not part of the contract
(Section~\ref{sec:design}). The clause set is exactly one per shipped stop
primitive (approval, rejection, resume, cancellation/timeout), not an open
axiom list: candidate additions such as rollback guarantees, notification
ordering, or read-set freshness govern what happens \emph{after} a decision
or outside the stop window, and are named as non-goals in
Sections~\ref{sec:design} and~\ref{sec:limitations} rather than folded into
the contract.
\end{definition}

\noindent\emph{What ``executes'' means.} An effect \emph{executes} when its
externally visible action is performed---the commit point past which the tool
cannot unilaterally revoke it (the e-mail handed off for delivery, the request
that captures the payment). Submitting a request to a gate, or computation
preceding the commit point, is not execution. B1--B4 and the violation
predicates of Section~\ref{sec:measurement} are evaluated at the commit
point, which the probes instrument directly---each tool's single effect-log
append sits exactly there---so every predicate is a falsifiable single bit
per executed trace.

Before measuring, we make the implied contract explicit, and we do not rest
it on intuition alone: the expectation is the vendors' own framing.
\textsf{FW-A}'s documentation describes the primitive as letting you ``pause
graph execution at specific points and wait for external input''---graph
execution, not one branch of it---with the canonical example gating exactly
the irreversible actions we study (tool calls, e-mail sends) behind the
pause~\cite{lgdocs26}; practitioner guidance built on the same surface
presents the pause as the mechanism that prevents the action from firing
until a human decides~\cite{abstractalg26}; and the corpus
(Section~\ref{sec:prevalence}) shows a \emph{maintainer-filed} issue
tracking the missing multi-interrupt barrier as a defect---maintainer-side
evidence that the barrier reading is the intended one, not a contract we
invented to violate.
Table~\ref{tab:audit} contrasts the barrier semantics an integrator reasonably
assumes with the measured reality; only one framework's documentation states the
non-barrier resume behavior~\cite{lgdocs26}, and we found no framework
documentation stating that sibling effects proceed during an approval pause.

\begin{table}[!t]
\centering
\caption{Expectation audit: the implied barrier contract versus measured
behavior (details: Table~\ref{tab:matrix} and Section~\ref{sec:occurrence}).}
\label{tab:audit}
\renewcommand{\arraystretch}{1.2}
\footnotesize
\begin{tabularx}{\columnwidth}{@{}p{1.55cm}XX@{}}
\toprule
\textbf{Primitive} & \textbf{Documentation-implied} & \textbf{Measured}\\
\midrule
Approval gate & Pausing halts the run's effects until a decision & Sibling
effects execute during the pause in every evaluated framework that ships a
pre-execution gate and can express parallelism\\
Rejection & A rejected action does not happen & The action may already have
completed before the decision point\\
Resume & Completed work is not redone & Pre-gate effects re-execute
(1$\to$2) in both runtimes of one design; documented for that framework
only~\cite{lgdocs26}; three frameworks retain completed work across resume
instead\\
Cancel & In-flight work stops & Thread-backed tools orphan wherever
constructible; on Node even pure-async orphans on the native surface\\
Timeout & After the deadline the action will not happen & Zombie effects
(host-level deadlines), a blocking overrun where the effect lands anyway, and
one refusal-by-construction\\
\bottomrule
\end{tabularx}
\end{table}

\subsection{Why the primitives may leak: parallelism and re-entry}
Two structural features of modern frameworks put pressure on the barrier
assumption. First, many frameworks execute independent branches
\emph{concurrently} within a single logical step; an approval gate that suspends
one branch does not necessarily suspend its siblings. Second, several frameworks
achieve durability by \emph{re-executing} a node or step from its start on
resume; any side effect performed before the suspension point can then run more
than once. Both features are desirable for throughput and reliability, but
neither, by itself, preserves stop semantics for side effects.

\subsection{Threat model and scope}
\label{sec:scope}
We consider a benign but realistic operator who uses a framework's documented
control primitives to gate irreversible or externally committed actions;
compensable effects admit transactional treatments
(Section~\ref{sec:related}) and are not our focus. We do \emph{not} defend against prompt
injection~\cite{greshake23} or a compromised tool; those are orthogonal, well
studied, and out of scope. We do, however, show in
Section~\ref{sec:injection} that \emph{when} injected content induces the
leak plan shape, the barrier holds the adversary's effect exactly as it holds
a benign one---an intent-agnostic guarantee, not injection detection. The operator is human: decisions may take seconds to
minutes, and automation bias is real~\cite{parasuraman97,cummings04}; B1--B4 are
therefore stated so that the \emph{duration} of a pause is irrelevant---nothing
gated may land during it, however long it lasts. Decision authenticity
(that a recorded approval really came from the operator) is a deployment
obligation, not part of this model (Section~\ref{sec:failure-model}).
Our question is narrower and prior:
\emph{do the stop primitives themselves provide the guarantee their names
imply?} across configurations of released framework versions. We do
not claim these behaviors are undocumented in every case---some are noted in
issue trackers or engineering blogs---but that their \emph{scope, cross-framework
recurrence, and repair} have not been characterized. All of
Section~\ref{sec:measurement}'s probes are model-free by construction; the
phenomena they measure are properties of framework control flow, and
introducing an LLM would only add nondeterminism to an otherwise
deterministic measurement. The exposure study (Section~\ref{sec:exposure})
necessarily queries real models, and the injection demonstration
(Section~\ref{sec:injection}) is scripted by default with an optional
live-model mode; neither is part of the model-free measurement, and we label
each accordingly.

\smallskip\noindent\textbf{Non-goals, stated once.} So the repair's scope
cannot be over-read, we enumerate here what \sys{} does \emph{not} provide,
each treated in full where cited: injection \emph{detection}
(Section~\ref{sec:injection} demonstrates composition with injected input,
not a defense against it); confinement of a \emph{malicious} tool's
non-network channels---shared filesystem, local IPC, shared memory---beyond
the placement contract or an analogous seccomp/LSM policy
(Section~\ref{sec:design}); atomicity or compensation \emph{across} phases of
a multi-phase API (Section~\ref{sec:limitations}); the quality or timeliness of the human decision itself---whether an operator can decide correctly in the available window, and whether an approval step degrades vigilance, are the classical human-factors concerns of situation awareness and supervisory control~\cite{parasuraman97,bainbridge83,endsley95,cummings04}, which the barrier makes \emph{meaningful} (a rejection now provably prevents the effect) but cannot answer; and key-distribution infrastructure for
the decision secret, which is ordinary secret management on an authenticated
channel (Section~\ref{sec:failure-model}).

% =====================================================================
\section{Measurement}
\label{sec:measurement}

\subsection{Method}
Each probe instantiates a minimal workflow in the target framework in which
the control primitive under test is the \emph{only} varying factor and no
nondeterministic input exists.
Side effects are represented by appends to an in-process event log, so that we
can observe precisely \emph{when}, and \emph{how many times}, an effect occurs
relative to a pause, a rejection, a cancellation, or a timeout. For each probe we
fix a \emph{violation predicate} before running (pre-registration), so the
outcome is a single bit: does the primitive provide its implied guarantee. The
probes use plain Python functions as nodes/steps; no model is invoked. The
probes are thus \emph{minimal witness programs}---each the smallest workflow
exhibiting one predicate---built for internal validity rather than ecological
realism: this section establishes \emph{existence and mechanism};
\emph{frequency} is deliberately a separate question, answered by
Section~\ref{sec:occurrence}. The measurement is descriptive throughout: we
classify whether a primitive satisfies B1--B4 as shipped, not whether any
implementation is ``incorrect.''

We report results for six frameworks---five that ship a pre-execution
approval primitive plus \textsf{FW-E}, which ships only post-hoc review and
is therefore counted as ``no primitive to violate'' throughout, not as a
sixth violation (footnote~\ref{fn:fwe-posthoc}). The five are a Pregel/BSP-style graph runtime
(hereafter \textsf{FW-A}); an event-driven workflow runtime (\textsf{FW-B}); a
message-passing multi-agent runtime with fan-out edges (\textsf{FW-C}); an
agent SDK whose model turns may contain parallel tool calls (\textsf{FW-D}); a
role/task crew orchestrator (\textsf{FW-E}); and the JavaScript port of
\textsf{FW-A}'s runtime on Node (\textsf{FW-F}). Identities and pinned
versions: \textsf{FW-A}${=}$LangGraph (Python) 1.2.7;
\textsf{FW-B}${=}$LlamaIndex Workflows (\texttt{llama-index-core} 0.14.23);
\textsf{FW-C}${=}$Microsoft Agent Framework (\texttt{agent-framework-core}
1.10.0); \textsf{FW-D}${=}$OpenAI Agents SDK 0.17.7;
\textsf{FW-E}${=}$CrewAI 1.15.1\footnote{\label{fn:fwe-posthoc}\textsf{FW-E}
ships no pre-execution approval primitive: its human-input hook requests
feedback \emph{after} a tool has run, so it cannot exhibit a sibling
\emph{leak}---there is no gate to leak past. We exclude it from every
recurrence count and mark its cells ``not comparable'' ($\ominus$) in
Table~\ref{tab:matrix}, but retain the row: a framework in wide use offering
only post-hoc review is itself evidence that pre-execution barriers are not a
solved, ubiquitous feature. Calling this a ``violation'' would be a category
error. \textsf{FW-E} thus functions as a \emph{design contrast} rather than a
count-padding sixth violation: our headline recurrence numbers are stated
over the four pre-execution-gate frameworks, and \textsf{FW-E}'s value is
showing that a widely used framework made a different (post-hoc) design
choice---and even so exhibits the cancellation and timeout violations that
are independent of the approval axis. The four frameworks that \emph{do} provide pre-execution gates
(\textsf{FW-A}--\textsf{D}, plus \textsf{FW-F}) are the basis for every
recurrence claim.}; \textsf{FW-F}${=}$LangGraph.js
(\texttt{@langchain/langgraph} 1.4.7, Node 22/23). \textsf{FW-C} is the
direct successor to AutoGen and Semantic Kernel (GA April 2026; AutoGen is
in maintenance)~\cite{msaf26}, so the AutoGen lineage is measured here in
its current generation rather than its deprecated predecessor. The inclusion criteria: an
open-source runtime instrumentable locally, shipping at least one of the
audited primitives, chosen to cover the four execution-model families
production agent runtimes are built on (Pregel/BSP, event bus,
message-passing fan-out, parallel tool calls) plus one cross-language port of
the most-used design---coverage by architecture, not a popularity ranking
(Section~\ref{sec:threats-validity}). Hosted,
closed agent platforms (cloud-vendor agent services) expose no local runtime
to instrument and are outside this open-runtime scope. We keep the neutral
\textsf{FW-$x$} labels in the running text only so the emphasis stays on the
shared \emph{pattern}. \textsf{FW-A} is additionally
checked across two adjacent releases for version stability, and every verdict
in Table~\ref{tab:matrix} reproduced identically in at least three independent
environments---four for \textsf{FW-A}--\textsf{D} and \textsf{F}, whose full
suites were re-executed verdict-identically on a fresh single-vCPU container
during revision (artifact, \texttt{evidence/}). Exact commands are in the
artifact.

\subsection{Probes and violation predicates}
\begin{itemize}[leftmargin=1.2em]
\item \textbf{Sibling leak (approval coverage under parallelism).} An approval
gate and a side-effecting action are siblings in one step. \emph{Violation:} the
effect executes while the run is paused awaiting approval.
\item \textbf{Reject-after-effect.} Continuing the above, the human rejects.
\emph{Violation:} the sibling effect has already occurred and rejection cannot
prevent it.
\item \textbf{Replay double-execution.} A node performs an effect \emph{before}
requesting approval; the human then approves. \emph{Violation:} the effect
executes twice (re-execution from the step start on resume).
\item \textbf{Cancellation orphan.} An asynchronous run is cancelled while a node
executes a blocking tool on a worker thread. \emph{Violation:} the caller
observes cancellation, yet the effect lands afterward.
\item \textbf{Timeout zombie.} A deadline fires while a tool is in flight.
\emph{Violation:} a timeout is reported to the caller, yet the effect lands
afterward. Because not every framework ships a native run timeout, this axis
compares \emph{deadline-enforcement behavior}, not identical APIs: where no
native parameter exists we probe the documented host-level deadline pattern
and label the cell as such (Table~\ref{tab:matrix}, note~a).
\end{itemize}

\noindent\emph{Severity ordering.} The axes are not equally severe, and we
state the ranking we use rather than imply uniformity: the sibling leak and
its entailed reject-after-effect defeat the primitive's core purpose on
irreversible actions and rank highest; replay double-execution is a
correctness (often financial) hazard bounded by whether the tool is
idempotent; cancellation orphans and timeout zombies are
correctness-and-availability hazards whose blast radius is one in-flight
effect. The repair treats all four uniformly only because the gate's
admission cost is identical per class; the measurement's headline is the
approval axis.

\subsection{Results}
Table~\ref{tab:matrix} summarizes the executed outcomes. The sibling leak and
reject-after-effect violations reproduce in \emph{every} framework that can
express the configuration, despite four dissimilar execution models, and
\textsf{FW-A}'s behavior is identical across two releases. Stated both ways
so neither framing inflates: all four gate-shipping framework \emph{designs}
leak (five implementations, counting the cross-language port), and
\textsf{FW-E}, which ships no pre-execution gate, is outside the sibling
denominator entirely rather than a sixth data point. Reject-after-effect is
listed as the operator-facing consequence \emph{entailed by} the sibling
leak---once the sibling has executed during the pause, rejection is
necessarily powerless---not as an independent mechanism; the matrix rows
count observed predicate outcomes, not distinct root causes. The
reject-after-effect window is structural rather than a race: in every
violating trace the sibling's effect completes within the same execution step
that raised the pause, before control returns to any caller that could issue a
resume---no human reaction time, however fast, closes it. (This is a
mechanism-level reading of the executed traces, uniform across the violating
frameworks, not a formal impossibility proof.) For \textsf{FW-D},
whose steps are model turns, the probe scripts the turn deterministically, so
the verdict concerns the framework's \emph{execution} of a given parallel
plan---whichever party proposed it---not any model's propensity to propose one
(that propensity is Section~\ref{sec:exposure}'s subject). Replay is a design
property rather than an implementation accident: the same double-execution
reproduces in both language runtimes of the same design (\textsf{FW-A},
\textsf{FW-F}), while two independently designed frameworks (\textsf{FW-C},
\textsf{FW-D}) are clean because they cache completed work in the resume
token. Cancellation soundness is host-language-dependent: identical pure-async
node logic cancels cleanly under Python's asyncio yet orphans its effect under
Node's native \textsf{AbortSignal} surface, because promises cannot be
interrupted. On the timeout axis the frameworks \emph{differ} in four distinct
ways (Table~\ref{tab:matrix}, notes a--c). These disagreements are important:
they show the behaviors are not inherent to agent frameworks and can be
avoided; whether each gap reflects a deliberate design choice or an unsolved
problem is for maintainers to clarify---the claim remains the contract
mismatch of Section~\ref{sec:intro}, and only one framework's documentation
states the divergence~\cite{lgdocs26}. The contract
we hold frameworks to is not a claim that async branches ought to be
mutually blocking in general---they should not be---but the far narrower
operator expectation that \emph{attaching a human-approval gate to an
irreversible action causes that action to wait for the human}. An operator
who writes ``require approval before issuing a refund'' is not reasoning
about event-loop scheduling; they are asserting that the refund does not
happen until they say so. That expectation is what the sibling leak
violates, and it is orthogonal to whether unrelated branches run
concurrently---which, as the repair shows (Section~\ref{sec:design}), they
still do. Because the frameworks share async
ecosystems (\texttt{asyncio} event loops, JavaScript promises, worker
threads), the recurrence is consistent with both design-level causes and
shared implementation constraints; our claim is the weaker, well-supported
one---the barrier does not hold as measured, across dissimilar execution
models.

\begin{table}[!t]
\centering
\caption{Executed control-primitive outcomes across six frameworks (five
Python, one JavaScript). Identities:
\textsf{A}${=}$LangGraph, \textsf{B}${=}$LlamaIndex Workflows,
\textsf{C}${=}$Microsoft Agent Framework, \textsf{D}${=}$OpenAI Agents SDK,
\textsf{E}${=}$CrewAI, \textsf{F}${=}$LangGraph.js (versions in
Section~\ref{sec:measurement}).
\checkmark${=}$violation reproduced;
\ding{55}${=}$clean/contrast; R${=}$unsound configuration refused at
construction; $\ominus$${=}$not comparable (primitive absent by design,
footnote~\ref{fn:fwe-posthoc}); ``--''${=}$not applicable or not probed
(per-cell distinction in
the artifact's matrix). Every verdict reproduced identically in $\geq$3
independent environments ($\geq$4 for \textsf{A}--\textsf{D},\textsf{F});
\textsf{FW-A} additionally across two releases.}
\label{tab:matrix}
\renewcommand{\arraystretch}{1.2}
\footnotesize
\begin{tabularx}{\columnwidth}{@{}Xcccccc@{}}
\toprule
\textbf{Axis} & \textbf{A} & \textbf{B} & \textbf{C} & \textbf{D} & \textbf{E} & \textbf{F}\\
\midrule
Sibling approval leak       & \checkmark & \checkmark & \checkmark & \checkmark & $\ominus^{e}$    & \checkmark \\
Reject-after-effect         & \checkmark & \checkmark & \checkmark & \checkmark & $\ominus^{e}$ & \checkmark \\
Replay double-execution     & \checkmark & \ding{55}$^{g}$ & \ding{55}$^{d}$ & \ding{55} & --          & \checkmark \\
Cancel: worker/async        & \checkmark & \checkmark$^{g}$ & \checkmark & \checkmark & \checkmark & --$^{f}$   \\
Cancel: pure async          & \ding{55}  & --         & \ding{55}  & \ding{55}  & --          & \checkmark$^{f}$ \\
Timeout zombie              & \checkmark$^{a}$ & \ding{55} & \checkmark$^{a}$ & \ding{55}/R$^{b}$ & \ding{55}$^{c}$ & \checkmark \\
Timeout blocks, effect lands & --        & --         & --         & --         & \checkmark$^{c}$ & --        \\
\bottomrule
\end{tabularx}

\smallskip
\raggedright\footnotesize
$^{a}$No native run-timeout parameter exists in the pinned release; probed via
a host-level deadline and labeled as such.
$^{b}$Pure-async tools cancel cleanly; attaching the native per-tool timeout to
a synchronous tool is refused at construction (sound by refusal).
$^{c}$Strict zombie predicate is clean, but the timeout error surfaces only
\emph{after} the timed-out work completes and its effect lands, blocking the
caller past the deadline (distinct class, last row).
$^{d}$Also clean across a fresh-process checkpoint restore.
$^{e}$\textsf{FW-E} exposes no pre-execution approval primitive; its human-input
hook is by-design post-hoc review (the effect precedes the review), so neither
cell states a barrier to violate. Marked $\ominus$ and excluded from every
recurrence count (footnote~\ref{fn:fwe-posthoc}).
$^{f}$No sync/async split exists in JavaScript; the single documented
cancellation surface (\textsf{AbortSignal}) acknowledges the abort while the
effect lands afterward.
$^{g}$\textsf{FW-B}: the native \texttt{cancel\_run} raises inside the
workflow but does not cover a worker thread, whose effect lands after
cancellation (violation); resume via the documented
\texttt{Context.from\_dict} path does not re-execute the completed step
(clean)---a design contrast with \textsf{FW-A}/\textsf{F}.
\end{table}

\noindent\textbf{Interpretation.} The contrast cells are as informative as the
violations. Cancellation is \emph{clean} when the tool is a pure-async coroutine
under Python's asyncio but \emph{orphans} when the identical logical tool runs
on a worker thread---and orphans even for pure-async code under Node, where the
framework's own \textsf{AbortSignal} surface cannot interrupt a promise chain.
Cancellation soundness thus depends invisibly on a tool's execution mode
\emph{and} on the host language. Replay is clean exactly in the three
frameworks that retain completed work across resume---two cache turn results
in the resume token, one restores step progress from its serialized
context---and double-executes, unsafely for irreversible effects, in both runtimes of the design that re-executes from
the checkpoint. \textsf{FW-D} shows that
refusing to construct an unsound configuration is a viable design point.
Enforcement that depends on such incidental factors is exactly what an external
gate removes.

\noindent\textbf{Randomized structural sweep.} To test whether the sibling
leak is an artifact of authored probes, we swept 1{,}000 seeded random
workflows through the real \textsf{FW-A} runtime (artifact,
\texttt{randgraph/}): out-trees of 3--8 nodes (in-degree ${\leq}1$, so join
semantics are not a confound), exactly one approval gate placed uniformly at
random, one to three effect nodes, the remainder reads, each effect
pre-classified by its structural relation to the gate. The outcome is deterministic
across all 1{,}000 generated workflows (Table~\ref{tab:randgraph}): every effect concurrent with the
gate's superstep executed during the pause (577/577); every gate-descendant
effect was withheld until the decision (0/363); and concurrent effects
scheduled in \emph{later} supersteps never executed during the pause
(0/331)---the interrupt halts the scheduling loop after the superstep that
raises it, so the leak window is exactly the superstep that raises the pause,
sharpening the mechanism-level reading above into a measured boundary.
Effects on the gate's own path or in earlier supersteps ran strictly before
the gate node was entered (median ${\sim}1$\,ms prior) and are outside the
B1 window by construction. Two incidental confirmations at scale: the gate
node body re-executed on resume in all 1{,}000 graphs (the documented
resume-from-node-start behavior~\cite{lgdocs26}), and each completed effect
executed exactly once. The sweep targets the runtime whose topology is
user-specified; compiling arbitrary graphs onto an event bus or a model
turn's tool batch would interpose a dispatcher of our own construction, so
the other execution models remain covered by the minimal witnesses above.

\begin{table}[!t]
\centering
\caption{Randomized structural sweep: leak rate by an effect's relation to
the gate, over 1{,}000 seeded random workflows executed on \textsf{FW-A}
(Wilson 95\% intervals). The leak is exactly the schedulability predicate of
the pausing superstep, independent of topology.}
\label{tab:randgraph}
\renewcommand{\arraystretch}{1.15}
\footnotesize
\begin{tabular}{@{}lccc@{}}
\toprule
\textbf{Relation to gate} & \textbf{Leak / $n$} & \textbf{Rate} & \textbf{95\% CI}\\
\midrule
Concurrent, same superstep & 577/577 & 1.00 & [0.99, 1.00]\\
Concurrent, later superstep & 0/331 & 0.00 & [0.00, 0.01]\\
Gate-descendant & 0/363 & 0.00 & [0.00, 0.01]\\
\bottomrule
\end{tabular}
\end{table}

\subsection{Contrast: a durable-execution engine}
\label{sec:temporal}
Because the natural response to Table~\ref{tab:matrix} is ``adopt a
durable-execution engine,'' we execute the same four predicates on the engine
most often named: Temporal~\cite{temporal} (server 1.31.2 via CLI 1.7.3,
Python SDK \texttt{temporalio} 1.30.0, local dev server; every activity
pinned to \texttt{RetryPolicy(maximum\_attempts=1)} so no verdict can be
manufactured by platform retries; verdicts identical across repeated runs;
artifact \texttt{probes-temporal/}, receipts
\texttt{evidence/temporal\_probes.txt}). Probe battery and gate composition reproduced verdict-identically across two environments spanning server versions 1.31.1--1.31.2, the analog of the two-release stability check applied to \textsf{FW-A}.
Temporal is a \emph{contrast arm} in the \textsf{FW-E} sense---excluded from
every recurrence denominator---and one verdict must be framed before it is
stated: Temporal's documentation nowhere implies that awaiting a Signal in
one branch pauses its siblings, so the sibling bit below is \emph{behavioral
only}; no contract-mismatch claim attaches to it. The results split exactly
where the trichotomy (Proposition~\ref{prop:trichotomy}) predicts.
\emph{Replay is closed by construction}: an activity's effect committed
before the approval wait commits exactly once ($1{\to}1$) across a forced
full-history replay---worker shut down with the workflow-task cache
disabled, decision delivered, fresh worker replays the history---because
journaled activities return recorded results rather than re-executing, the
re-architecture benefit Temporal exists to provide. \emph{The pause barrier
is absent there too}: with an approval wait (the documented Signal-based
HITL pattern) and a side-effecting activity as siblings under
\texttt{asyncio.gather}, the sibling's effect commits while the workflow's
queried state is \texttt{awaiting\_decision}, before any decision is
delivered, and a subsequent rejection is powerless---case~(ii) of the
trichotomy as substrate behavior rather than broken promise.
\emph{Cancellation is cooperative, as documented}~\cite{temporal-docs}: a
blocking activity on the worker thread pool with no heartbeat---heartbeats
being the only cancellation-delivery channel Temporal documents---commits
its effect after the caller has observed \texttt{CancelledError} (the
orphan), while the identical logical activity, heartbeating, observes the
cancel and commits nothing (clean). \emph{Timeouts zombie, as
documented}~\cite{temporal-docs}: an activity outliving its
\texttt{start\_to\_close} deadline commits after the caller observed the
timeout, and the SDK's own runtime log records the completion of the
already-timed-out activity. Finally, the repair travels: the identical
${\sim}$20-line wrapper (\texttt{pip install soundgate}) at Temporal's
activity boundary holds the sibling's mediated effect during a live Signal
pause (one pending identity, zero effects), the operator's rejection is
sticky, a zombie resubmission meets \texttt{refused\_rejected}, and a
legitimate mediated effect on a fresh identity releases and executes
(receipt \texttt{evidence/temporal\_gated.txt}; hold/reject path---the
approve choreography and full verdict set are Table~\ref{tab:e2e},
unchanged by the host engine). The reading for the re-platforming question
of Section~\ref{sec:related} is thereby executed rather than argued:
re-architecture buys exactly one measured axis by construction, the pause
barrier must be an admission point in any engine, and the same
environment-external gate supplies it on this one unchanged.

% =====================================================================
\section{Does the Hazard Occur in Practice?}
\label{sec:occurrence}
The matrix establishes what frameworks \emph{permit}. This section asks
whether the permitted hazard is \emph{reachable} in practice, and answers with
two independent measurements: real models emit the triggering plan shape at
non-negligible rates---and, driven end-to-end, execute the leak on live
runtimes---and real users hit the replay and cancellation failures and file
them.

\subsection{Model exposure}
\label{sec:exposure}
The sibling leak requires a specific plan shape: the consequential
(approval-gated in deployment) tool call sharing an assistant turn with at
least one benign sibling call. We measured how often five models emit that
shape under a protocol fixed a priori: ten authored tasks (five
single-outcome, five compound ``look up X and then do Y''; wording rules
fixed a priori---no concurrency vocabulary, no tool names in prompts), each
pairing one consequential tool with two benign read-only tools; temperature
1.0; $N{=}100$ runs per task per model (up from the $N{=}25$ pilot), sized
for \emph{estimation} rather than hypothesis testing---the power arithmetic
closes this section; the run stops at the first turn containing the
consequential call, which is \emph{never executed}. All five
models---GPT-4o, Claude Sonnet~4.6, Gemini~2.5~Flash, DeepSeek~V3.2, and
Llama~3.3-70B---were queried through a single OpenRouter integration with
\texttt{require\_parameters} routing, so a request is sent only to a backend
that honors the tool schema, never silently degraded; the $N{=}25$ pilot
queried GPT-4o and Claude on their \emph{native} APIs, giving those two
models a second, independent serving path against which the $N{=}100$
results are cross-checked below. Retry, per-$(\mathit{task},\mathit{run})$
deduplication, and decoding-seed handling are stated in
Appendix~\ref{app:exposure}. The metric, exclusion criteria, and task
wording are unchanged from the pilot and are committed to the artifact; we
make no external-registry (``pre-registered'') claim.

\begin{table}[!t]
\centering
\caption{E-EXPOSURE: parallel gated$+$ungated emission, $N{=}100$/task, five
models. \dag${=}$near-zero rate through the shared OpenRouter integration is
\emph{pathway-confounded} (a positive control shows these models never
bundle even with zero hazard on the table) and is not interpretable as a
model disposition. Indented rows are second-serving-path replications:
native APIs for Gemini (which \emph{does} emit the shape on Google's own
surface) and DeepSeek (near-zero reproduces), provider-direct (Together) for
Llama, whose $0.90$ on \texttt{compound\_cleanup} is the highest single-task
rate in the study; its Called column excludes 117/1000 schema-nonadherent
runs (analysis in text).}
\label{tab:exposure}
\renewcommand{\arraystretch}{1.15}
\setlength{\tabcolsep}{3pt}
\scriptsize
\begin{tabular}{@{}lcccc@{}}
\toprule
\textbf{Model} & \textbf{Called} & \textbf{Exposure [95\% CI]} & \textbf{Sgl./Cmp.} & \textbf{Worst task}\\
\midrule
GPT-4o & 1000/1000 & 0.14 [0.12, 0.17] & 0.13 / 0.16 & 0.75 (cleanup)\\
Claude Sonnet 4.6 & 980/1000 & 0.04 [0.03, 0.05] & 0.00 / 0.07 & 0.34 (transfer)\\
Gemini 2.5 Flash\dag & 942/1000 & 0.00 [0.00, 0.00] & 0.00 / 0.00 & ---\\
\quad\emph{3.5 (native)} & 1000/1000 & 0.02 [0.01, 0.03] & 0.00 / 0.03 & 0.17 (invoice)\\
DeepSeek V3.2\dag & 987/1000 & 0.00 [0.00, 0.01] & 0.00 / 0.00 & 0.01 (refund)\\
\quad\emph{V4-flash (native)} & 975/1000 & 0.00 [0.00, 0.00] & 0.00 / 0.00 & ---\\
Llama 3.3 70B\dag & 948/1000 & 0.00 [0.00, 0.01] & 0.00 / 0.01 & 0.02 (transfer)\\
\quad\emph{3.3 70B (Together)} & 769/883 & 0.08 [0.07, 0.10] & 0.00 / 0.23 & 0.90 (cleanup)\\
\bottomrule
\end{tabular}
\end{table}

Table~\ref{tab:exposure} carries two distinct findings, and conflating them
would overclaim. \emph{First}, for the two models with a native-API anchor,
the $N{=}100$ OpenRouter measurements replicate the pilot across a change of
serving path \emph{and} a $4\times$ sample-size change: GPT-4o's pooled rate
moves $0.15{\to}0.14$ and its worst task (\texttt{compound\_cleanup})
$0.84{\to}0.75$; Claude's pooled rate moves $0.03{\to}0.04$ and its worst
task (\texttt{compound\_transfer}) $0.24{\to}0.34$; every interval overlaps
its pilot counterpart, so we read these rates as properties of the models.
The triggering tasks are disjoint (GPT-4o leaks on
\texttt{compound\_cleanup}, never on \texttt{compound\_transfer}; Claude the
reverse), and because the consequential tool is called in 94--100\% of runs
across all five models, conditioning is not concealing a small base rate.
The ``worst task'' column is a post-hoc descriptive statistic over the ten
pre-fixed tasks, not a primary endpoint; a multiplicity analysis over the
most conservative family the study admits ($m{=}80$ cells,
Bonferroni-adjusted Wilson intervals) reclassifies no nonzero finding
(Appendix~\ref{app:exposure}).
\emph{Second}, Gemini, DeepSeek, and Llama show a near-total absence of the
shape through OpenRouter (0--0.3\% of called instances)---but a
positive-control diagnostic shows this is not evidence of caution: with the
consequential tool stripped so zero hazard is on the table, GPT-4o still
bundles the two benign reads on $12/15$ trials, while the three models
bundle on $0/90$ combined trials. Their OpenRouter rates are therefore
\emph{pathway-confounded and not interpretable as model
dispositions}---neither as exposure nor as safety. A second serving path
then resolves each case (full forensics in Appendix~\ref{app:exposure}).
\textsf{gemini-3.5-flash} on Google's native surface, same $N{=}100$ over
the same ten tasks, called the gated tool in $1000/1000$ runs and emitted
the shape in $17/1000$ (pooled $0.02$, concentrated on
\texttt{compound\_invoice} at $17/100$): the family \emph{does} reach the
shape on a faithful path, so the hazard is not GPT-4o-specific.
\textsf{deepseek-v4-flash} on DeepSeek's own API called the gated tool in
$975/1000$ runs and emitted the shape in $0/975$---a genuinely sequential
disposition the vendor's own surface reproduces, the model-disjointness the
headline claims. \textsf{Llama-3.3-70B} provider-direct (Together) emits in
$64/769$ classifiable runs (pooled $0.08$; $0$ on single-outcome, $0.23$ on
compound, $0.90$ on \texttt{compound\_cleanup}---\emph{higher} than
GPT-4o's $0.75$): the OpenRouter near-zero was a pathway artifact concealing
the most exposed model in the study on the leak-driving task. One honesty
note travels with that row: $117/1000$ Llama runs are excluded as
unclassifiable because the model invents tool names on the hardest compound
tasks---a schema-adherence weakness, not a serving-path effect, whose
mediation implication (unknown-tool handlers must be wrapped or refused like
any consequential tool) is drawn in Appendix~\ref{app:exposure}; the
exclusions concentrate away from the signal task ($57$ of $64$ exposures sit
on \texttt{compound\_cleanup} at $99/100$ valid), so the de-confound does
not rest on error-thinned tasks. Family versus checkpoint, stated once so
the de-confound is not over-read: only GPT-4o and Claude carry the
\emph{same} checkpoint on two independent serving paths; the Gemini and
DeepSeek second paths are successor checkpoints on the vendors' own surfaces
(family-level de-confounds), and the Llama second path replicates the same
checkpoint through a different provider (Appendix~\ref{app:exposure} states
the boundary in full). Third, model-level caution, where validated, is not a
stop primitive: on one cancellation task, Claude completed both lookups and
then replied without calling the consequential tool in 20 of 100
runs---useful behavior, but a disposition, not an enforced guarantee.
Scope: these are single-integration measurements for all five models,
anchored to native-API pilots for GPT-4o and Claude, with full second-path
replications for the other three, over ten authored tasks with frictionless
canned tool results. Wilson intervals are tighter than the pilot's (GPT-4o's
worst task narrows from roughly $[0.65,0.94]$ to $[0.66,0.82]$), and we
treat the study as estimation, not hypothesis confirmation. ($N{=}100$ per
task is sized for that purpose: a true per-task rate of 3\% evades detection
entirely with probability $0.97^{100}{\approx}4.8\%$, so per-task zeros are
reported with their intervals rather than as absences, while a pooled zero
over ${\sim}1000$ called runs bounds the underlying rate below ${\sim}0.4\%$
at 95\% confidence---the DeepSeek-native row's $[0.00,0.00]$ is that bound
at two-decimal display, not a claim of impossibility.) Emission is measured
on the raw, ungated API: it is the plan shape reaching the framework, the
leak's \emph{precondition}, not its end-to-end frequency. The full causal
chain---real model, real runtime, effect landing during the pause---is
\emph{executed}, not inferred, in Sections~\ref{sec:expA}
and~\ref{sec:injection}; what remains unmeasured is the chain's frequency in
production deployments, which no component of this study claims. Longer
conversations, richer toolsets, naturalistic (non-authored) tasks, and
adversarial prompting rates remain out of scope, as does strategic
adaptation: whether a model that \emph{observes} gating shifts toward or
away from parallel emission is a question this ungated-API design cannot
see, and we flag it as open rather than assume either direction.

\smallskip\noindent\textbf{Naturalistic baseline, stated here rather than
buried.} Because the ten tasks above are authored to create the conditions
under which the shape \emph{can} appear, the honest complement is how often it
appears when we do \emph{not} author for it. We ran the identical measurement
over a third-party benchmark we did not write---$\tau$-bench's~\cite{taubench24}
retail and airline episodes, two frontier models, $431$ tool turns
(full treatment and receipts in Section~\ref{sec:limitations})---and the
result cuts against a high base rate: the dominant behavior is serialization,
and across the $71$ gated batches observed \emph{not one} co-emitted a benign
read, so the benign-sibling shape of Table~\ref{tab:exposure} did not arise on
these tasks. A more severe consequential-sibling variant (two writes in one
batch) appeared, but only for GPT-4o and never for Claude. The reading we
carry forward is therefore deliberately modest: current frontier models, on
naturalistic tasks, tend to gather in parallel and act sequentially, so the
gap is \emph{latent}---a severe failure that is rare by model habit rather
than absent by construction, since the framework supplies no barrier and any
model or configuration that does batch gated calls executes the siblings
unchecked. The exposure numbers above are a controlled measurement of a real
enforcement failure, not a claim that the shape is frequent in production.

\smallskip\noindent\textbf{Task structure, not model reluctance.} The converse
question---when a naturalistic task genuinely comprises two \emph{independent}
consequential steps, does the model serialize them or batch them?---has a sharp
answer. Over five realistic multi-effect operations (deploy-and-notify,
delete-and-deploy, merge-and-announce, publish-and-open-PR, push-and-message),
$100$ runs each, driven multi-turn with distinct named side-effecting tools and
\texttt{run\_shell} classified per command by write intent, GPT-4o issued both
consequential effects in a single parallel turn in \emph{all} $500$ runs
(\texttt{naturalistic\_exposure.py}). We are explicit that these tasks are
\emph{selected} to afford two consequential steps, so this is a \emph{conditional}
rate---as authored, in that sense, as the ten tasks above---not a random-sample
naturalistic base rate. Read together with the $\tau$-bench null the picture is
nonetheless coherent, and it is the honest one: the leak condition is a property
of the \emph{task}, not of a model reluctance to parallelize---latent under
exploratory or single-effect work, where current models serialize, and
near-certain under multi-effect work (ship-and-notify, refund-and-message,
offboard-and-email). The class is measured, not asserted: on the third-party
benchmark used throughout this paper, the reference solutions themselves
demand it---45 of 115 $\tau$-bench retail tasks and 15 of 50 airline tasks
require at least two consequential writes in their gold action sequences
(0.39, Wilson [0.31, 0.48]; 0.30, [0.19, 0.44]), with \emph{adjacent} write
pairs in 41/115 and 14/50; tool classification follows a committed rubric
over the benchmark's own tool set, and the labels, extraction, and per-task
receipts are in the artifact (\texttt{prevalence/}). These are authored
benchmark tasks, not production telemetry: the claim is the class's
recurrence in an independent third-party workload, not a deployment
census---and it reads coherently with the serialization null above, since
GPT-4o already batches consequential writes in 9 of its 52 gated
$\tau$-bench batches (Section~\ref{sec:limitations}). Because such tasks
recur and the framework supplies no serialization guarantee, the
control-plane barrier cannot be replaced by reliance on the model's present
disposition.

\subsection{From emission to executed leak: a live end-to-end measurement}
\label{sec:expA}
Section~\ref{sec:exposure} measures a \emph{proxy}---the plan shape reaching the
framework---and is explicit that emission becomes a leak only once the framework
schedules the sibling concurrently with the pause. We now discharge that proxy
directly rather than by argument. Experiment~A (artifact,
\texttt{e2e/experiment\_a.py}) composes two components already in the artifact,
so nothing here is re-modeled: the \emph{exact} per-run model call of the
exposure study (\texttt{exposure.runner}, same providers, tasks, temperature,
and resume logic, so its emission event is identical to
Table~\ref{tab:exposure}'s) and the real \textsf{FW-A} (LangGraph~1.2.7)
fan-out graph of Section~\ref{sec:injection}---an approval interrupt on one
branch and the consequential effect on a sibling branch, the effect labeled
with each task's own gated action. For every run we record three bits: did the
live model \emph{emit} the shape; did the effect \emph{leak} (execute during the
pause) with no gate; and did it leak \emph{with} every effect routed through a
live \sys{} process. The same live model, the same runtime, and the same human
pause therefore yield a \emph{measured} unconditional $P(\mathrm{leak})$, a
conditional $P(\mathrm{leak}\mid\mathrm{emitted})$, and a mediated rate---not an
inferred chain. The battery drives the three leak-driving tasks (\texttt{compound\_cleanup}, \texttt{compound\_transfer}, \texttt{single\_offboard}) at pause${=}0$; pause \emph{duration} is swept explicitly rather than argued away. A keyless harness (\texttt{e2e/pause\_sweep.py}) drives the same \textsf{FW-A} fan-out graph at every pause in $\{0, 0.1, 0.5, 1, 2\}$\,s, twenty repetitions per value per arm: unmediated, the sibling lands during the pause in $20/20$ at \emph{every} value---the effect commits inside the fan-out superstep before \texttt{invoke()} returns, so the leak is pause-invariant and B1/B4's duration-irrelevance is executed rather than asserted---while mediated it is held in $20/20$ with zero effects on rejection (receipt \texttt{evidence/pause\_sweep.txt}, \texttt{langgraph}~1.2.7). The same sweep on \textsf{FW-B} is likewise flat ($15/15$ emitting runs leak unmediated, $0$ mediated, at every pause; \texttt{results/expA\_fwb\_mock\_pause*.jsonl}), so pause-invariance is not an \textsf{FW-A} artifact. The live arms therefore run at pause${=}0$ without loss of generality for the leak bit.

Table~\ref{tab:expA} reports the two arms whose vendor keys we hold, each run to
completion. For \textsf{gpt-4o}, $133/300$ runs leak an effect \emph{during the
pause} without the gate ($P(\mathrm{leak}){=}0.44$, Wilson $[0.39,0.50]$); for
\textsf{claude-sonnet-4.6}, $26/300$ ($0.09$, $[0.06,0.12]$; that arm's $314$
raw records deduplicate to $300$ unique $(\mathit{task},\mathit{run})$ pairs,
keeping the final attempt exactly as in Section~\ref{sec:exposure}). In \emph{both}
arms the conditional $P(\mathrm{leak}\mid\mathrm{emitted})$ is exactly
$1.000$---every one of the $133$ and $26$ emitted shapes executed its effect
during the pause, and no run scheduled the sibling \emph{after} the pause. This
\emph{measures}, rather than argues, the mechanism-level reading of
Section~\ref{sec:measurement}: once the shape is emitted the framework leaks it,
with human reaction time irrelevant. We do not present this conditional as a
\emph{surprising} result---given Section~\ref{sec:measurement}'s finding that
the scheduler commits the sibling inside the pausing superstep, $P(\mathrm{leak}
\mid\mathrm{emitted}){=}1.00$ is close to what that mechanism predicts; its
role is to \emph{confirm} the mechanism end-to-end on live models and to
locate the paper's genuinely contingent quantity, which is
$P(\mathrm{emitted})$ (Section~\ref{sec:exposure}), not the conditional. The
end-to-end value the repair must beat is therefore $P(\mathrm{leak}){=}
P(\mathrm{emitted})$ under this scheduler, and the mediated column drives it
to zero.  Routed through \sys{}, the identical runs
leak \emph{zero} effects---$0/300$ and $0/300$, Wilson upper bounds $1.3\%$
each---every consequential submission held at the gate and refused on the
operator's rejection. The other two arms complete the picture and both hold the
mediated rate at zero: \textsf{deepseek-chat} emits and leaks \emph{nothing}
($0/300$, $0$ emissions), reproducing end-to-end the sequential disposition the
exposure study found for DeepSeek---on a \emph{different} checkpoint than the
\textsf{v4-flash} exposure row, so the family de-confounds under live execution
as well; and \textsf{Llama-3.3-70B} leaks $56/300$ unmediated ($0.19$,
$[0.15,0.24]$) with, again, $P(\mathrm{leak}\mid\mathrm{emitted}){=}1.00$ and
$0/300$ mediated. Across all four arms this is $215/1{,}200$ end-to-end leaks
without the gate against \emph{zero} of $1{,}200$ with it.
This is the benign-input analog of the single adversarial demonstration in
Section~\ref{sec:injection}, now at scale and across models: it upgrades
``models emit the shape'' (a proxy) to ``the effect actually executes during
the pause, and the gate stops it'' (an executed outcome), and its mediated
column is the repair claim of Section~\ref{sec:eval} measured on \emph{live}
models rather than scripted planners. Its scope mirrors the exposure study's:
this is the leak-driving task subset, one runtime (\textsf{FW-A}), and
frictionless canned tool results, so it bounds occurrence \emph{for those tasks
under \textsf{FW-A}}, not a production census; longer conversations, richer
toolsets, and other runtimes remain out of scope for the same reason they are
in Section~\ref{sec:exposure}.

\emph{A second runtime.} To test whether the end-to-end leak is an artifact of
\textsf{FW-A}'s superstep scheduler, we repeated the \textsf{gpt-4o} arm on
\textsf{FW-B} (LlamaIndex Workflows)---a structurally different runtime whose
parallelism is an asynchronous event loop with explicit human-in-the-loop
events, not a fan-out graph over a shared state channel. The port changes only
the framework arm; emission is the identical \texttt{run\_one} call (same
model, tasks, temperature, resume logic), so any difference is the runtime's,
not the model's. The result reproduces \textsf{FW-A}: $121/300$ runs leak the
effect \emph{during the pause} unmediated ($P(\mathrm{leak}){=}0.40$, Wilson
$[0.35,0.46]$), with $P(\mathrm{leak}\mid\mathrm{emitted}){=}1.000$ across all
$121$ emitted shapes, and $0/300$ routed through \sys{} (Wilson upper bound
$1.3\%$; \texttt{results/expA\_fwb\_openai\_gpt4o.jsonl}). Per-task emission
varied as expected from an independent serving of the same model
(\texttt{compound\_cleanup} $78/100$, \texttt{single\_offboard} $41/100$,
\texttt{compound\_transfer} $2/100$), but the conditional leak was $1.00$ and
the mediated rate $0$ on every task. The leak is therefore not specific to one
scheduler: on a second execution model, under live \textsf{gpt-4o}, once the
shape is emitted the runtime commits the sibling effect inside the human-input
pause, and the same environment-external gate holds every one. The remaining
question is host language: \textsf{FW-A} and \textsf{FW-B} both run on Python's
\texttt{asyncio}, so those two arms separate \emph{schedulers} but share a
runtime. We close that gap with a third live arm on \textsf{FW-F}
(\textsf{LangGraph.js} on Node's event loop, a non-Python runtime; artifact
\texttt{experiment\_a\_js.mjs}, \texttt{results/expA\_fwf\_openai\_gpt4o.jsonl}):
live \textsf{gpt-4o} over the same three leak-driving tasks emitted the shape
on $143$ of $300$ runs, every one of the $143$ executed its effect during the
pause unmediated ($P(\text{leak}\mid\text{emitted}){=}143/143$), and all $300$
routed through \sys{} released nothing ($0/300$ mediated). The identical
conditional across three schedulers on two host languages---Python
\texttt{asyncio} (\textsf{FW-A}, \textsf{FW-B}) and the Node event loop
(\textsf{FW-F})---is the evidence that the leak is a property of the
concurrent-pause pattern, not of one runtime, and that the external gate's
repair travels with it.

\begin{table}[!t]
\centering
\caption{Experiment~A: measured end-to-end leak on live models through the real
\textsf{FW-A} runtime (three leak-driving tasks, pause${=}0$). $P(\mathrm{leak})$
is the fraction of runs whose effect executes during the approval pause with no
gate; $P(\mathrm{leak}\mid\mathrm{emitted})$ conditions on the model emitting
the shape; the last column is the same runs with every effect routed through a
live \sys{} process. Wilson 95\% intervals; $N{=}300$ per model ($100$ runs
$\times$ three leak-driving tasks). All four arms are complete; the two anchor
arms (\textsf{gpt-4o}, \textsf{claude-sonnet-4.6}) run on the vendors' native
APIs, \textsf{Llama} provider-direct (Together), and \textsf{deepseek-chat} on
DeepSeek's own surface (\texttt{results/expA\_*.jsonl}).}
\label{tab:expA}
\renewcommand{\arraystretch}{1.2}
\setlength{\tabcolsep}{3pt}
\scriptsize
\begin{tabular}{@{}lcccc@{}}
\toprule
\textbf{Model} & \textbf{$N$} & \textbf{$P(\mathrm{leak})$ [95\% CI]} & \textbf{$P(\mathrm{lk}\!\mid\!\mathrm{em})$} & \textbf{$P(\mathrm{leak})$/\sys{}}\\
\midrule
\textsf{gpt-4o}                 & 300 & 0.44 [0.39, 0.50] & 133/133 & 0/300 [0, .013]\\
\textsf{claude-sonnet-4.6}      & 300 & 0.09 [0.06, 0.12] & 26/26   & 0/300 [0, .013]\\
\textsf{deepseek-chat}          & 300 & 0.00 [0, 0.013]   & --- (0 em.) & 0/300 [0, .013]\\
\textsf{Llama-3.3-70B}          & 300 & 0.19 [0.15, 0.24] & 56/56   & 0/300 [0, .013]\\
\bottomrule
\end{tabular}
% Provenance (all four arms complete): recomputed from results/expA_*.jsonl,
% denominator N=300 = 100 runs x 3 leak-driving tasks per model
% (claude's 314 raw records dedup to 300 unique (task,run_idx); the 14 extra are
% retries). gpt-4o 133/300 leak_unmediated / 0 mediated; claude 26/300 / 0;
% deepseek-chat 0/300 / 0 (0 emissions -- reproduces its exposure-study sequential
% disposition, here on a different DeepSeek checkpoint than the v4-flash exposure
% row, so it de-confounds the family end-to-end); llama 56/300 / 0. Across all
% four arms: 215/1200 unmediated leaks, 0/1200 mediated, P(leak|emitted)=1.00 for
% every emitting model.
\end{table}

\emph{A real, external effect.} Experiment~A and the probes observe the leaked
effect through instrumented tools; to confirm the effect is genuinely external
and truly escapes the operator's decision, we compose \textsf{FW-A}'s two
\emph{documented} primitives---parallel fan-out from the entry node and an
\texttt{interrupt()} approval pause---with a tool that issues a real HTTP
\texttt{POST} to a live local endpoint. Unmediated, the \texttt{POST} reaches
the endpoint in the same superstep as the interrupt, timestamped $512$\,ms
\emph{before} the operator's rejection---the $512$\,ms being the harness's own
scripted rejection latency, a \emph{lower} bound on any human's---so the
rejection cannot prevent it;
routed through \sys{}, the identical graph delivers \emph{zero} \texttt{POST}s
and the gate returns \texttt{refused\_rejected}
(\texttt{evidence/webhook\_leak\_demo.txt}). This is a single demonstration
composed from documented primitives, not a shipped template found unmodified;
it shows that the leaked effect is real and lands before a human can stop
it---complementing Experiment~A's rate evidence with one concrete external
action.

\subsection{Occurrence in the wild: a public-incident corpus}
\label{sec:prevalence}
We compiled a corpus of public reports evidencing the same failure classes.
We read it as \emph{qualitative corroboration}---that these failures occur
outside our own probes---never as a rate or a prevalence estimate: the corpus
is a lower bound, tracker-biased toward the framework with the largest public
issue tracker, and no denominator is claimed. Each record is verified against
the live tracker and every claim below
carrying an authoritative state as of July~2026 (full table, URLs, dates, and
verbatim search queries in the artifact). Classification is deliberately
conservative. The seed corpus contributes 11 \emph{direct} incidents that
actually exhibit a stop-primitive failure (5 on the sibling/parallel axis, 6
on replay), separated from 3 \emph{adjacent} reports sharing the root cause
and 6 \emph{context} items (user questions and practitioner write-ups); the
wider sweep below adds two further direct incidents on the cancellation axis,
for thirteen across three independent trackers. The direct set spans March 2025 to
May 2026 and includes a report in the framework's \emph{JavaScript} repository
of resume restarting from the beginning---independent user corroboration of the
cross-runtime replay replication in Table~\ref{tab:matrix}. The probe
suite doubles as minimized reproductions of the corpus's two axes: the
parallel-pending-interrupt construction of the sibling probes is exactly the
shape the multi-interrupt reports describe, and the \textsf{FW-F} replay
probe executes precisely the resume-restarts-from-the-beginning behavior of
the JavaScript-repository report---matrix and corpus witness the same
mechanisms, not correlated symptoms.

One asymmetry is stated outright. The replay and cancellation axes carry
\emph{effect-level} third-party corroboration---a user's resume restarted from
the beginning; a user's cancelled handler ran to completion. The five
sibling/parallel incidents document the same missing multi-interrupt barrier
at the \emph{interrupt-routing} layer, maintainer-acknowledged and structurally
the construction our sibling probes minimize---but no report in the corpus
states the effect-level consequence (``my sibling action executed while I was
deciding''). We therefore do not claim wild corroboration of an executed
sibling leak. The corpus's contribution on that axis is the acknowledged
missing barrier; the executed-effect evidence is Experiment~A
(Section~\ref{sec:expA}), where the leak is measured on live models through
the real runtime, not inferred from reports. Plausible reasons a leaked
sibling effect goes unreported---the operator never learns the effect landed
before the rejection, or attributes it to the tool---are exactly why an issue
count is a lower bound; we note them without leaning on them.

The upstream trajectory is itself evidence: a maintainer-filed issue tracks
the missing multi-interrupt barrier explicitly---a node with two pending
interrupts re-runs after a single resume because per-interrupt identity is
not stored---and remains \emph{open} as an enhancement, a merged pull
request supplies the partial single-interrupt fix, and a cluster of three
parallel-interrupt routing bugs was closed together in January
2026---acknowledgement plus incremental mechanism repair, with the general
barrier still absent. A further upstream item sharpens the replay axis into
a persistence-ordering race: an issue we filed on the LangGraph tracker
(disclosed in Appendix~\ref{app:disclosure}, not counted among the
third-party incidents above) shows that under \texttt{durability="sync"} a
completed task's pending writes and the superseding checkpoint are submitted
to a shared thread pool with no ordering edge, so a crash during checkpoint
persistence yields \emph{host-dependent} recovery---the same node
re-executes, duplicating its external side effects, on one machine and
replays exactly once on another, at an identical injected crash point.
Several other developers reproduced the duplicated effect on different
operating systems and core counts (including macOS/arm64), and three
maintainer or community pull requests propose the write-before-checkpoint
barrier that would close it---third-party reproduction of the replay hazard
on a serving path distinct from our own. Cancellation appears in the corpus
with a different texture---users asking how to stop a running agent at all,
cancellation failing to propagate to the deepest active agent, cancelled
runs losing un-checkpointed state---which we report as-is: underspecified
cancellation semantics evidenced through user confusion rather than
minimized repros. Two limits are inherent: the corpus is dominated by the
framework with the largest public tracker and the most-used interrupt
surface (a selection effect the six-framework probes correct for), and an
issue count is a lower bound on occurrence, never a rate. During revision we
widened the \emph{search}: the same axis-derived keyword families (six
queries per tracker, committed verbatim) were swept across all six
frameworks' public trackers via unauthenticated GitHub search, yielding 130
unique candidates, 72 outside the dominant tracker (\texttt{corpus\_sweep.py},
\texttt{sweep\_results.jsonl}). Classified under the same conservative
rubric by reading each full thread, the sweep adds two confirmed
\emph{direct} third-party incidents---both on the cancellation axis, one
beyond the dominant tracker: a LangGraph subgraph that does not stop on
\texttt{CancelledError}, and a LlamaIndex workflow handler that runs to
completion after its run is cancelled or times out, the same orphan our
\textsf{FW-B} probe reproduces, filed independently---plus twenty-four
\emph{adjacent} reports, including hosting- and dev-tool-layer instances of
the same classes (a dev UI restarting instead of resuming after a human
response; a handoff flow duplicating an approved call on resume; an AG-UI
host dropping sibling tool calls when one requires approval). With the seed
corpus this yields thirteen direct third-party incidents across three
independent trackers (LangGraph, LangGraph.js, LlamaIndex), so occurrence no
longer rests on one project's tracker, while remaining an acknowledged lower
bound, not a rate.

\subsection{Composition with adversarial input}
\label{sec:injection}
Our threat model (Section~\ref{sec:scope}) scopes out prompt injection, which
invites a fair question: if the leak is a benign-model concurrency bug, why does
an \emph{external} barrier matter rather than a fix inside the framework? We
answer on the objection's own terms, twice: with a keyless, deterministic
demonstration on the real \textsf{FW-A} runtime (artifact
\texttt{e2e\_injection\_langgraph.py}), and then \emph{live} across four frontier models at $N{=}100$ each under the same injected ticket (numbers below). A support-triage agent processes a
ticket whose body carries an injected instruction to issue a refund; the
injected content steers the agent into the leak plan shape---the refund emitted
alongside benign lookups, the shape Section~\ref{sec:exposure} shows real models
emit under such input. We make the adversarial trigger a controlled variable (a
scripted planner reproduces the shape deterministically; \texttt{-{}-live}
swaps in a real model, and we report how often each model takes the bait
below without claiming it generalizes beyond this one injected composition)
and we claim \emph{no} injection defense---detecting injection is a separate,
open problem. The point is narrower and mechanical: under the framework's own
approval interrupt the adversary's refund executes \emph{during the pause it
creates} (\texttt{refund\_during\_pause}${=}1$, and a subsequent human rejection
is powerless), whereas the identical run with the refund routed through \sys{}
holds it (\texttt{during\_pause}${=}0$) and the rejection yields zero effects.
We ran this live across four frontier models at $N{=}100$ each under the
injected ticket (\texttt{results/injection\_live\_*.jsonl}). How readily a
model takes the bait varies enormously and is not our claim: \textsf{gpt-4o}
emitted the refund-plus-lookups shape on $95$ of $100$ runs,
\textsf{Llama-3.3-70B} on $19$, \textsf{claude-sonnet-4-6} on $7$, and
\textsf{gemini-3.5-flash} on none. The mechanical invariant, by contrast,
held identically for every model that did emit: across all $121$ emitting
runs the unmediated framework executed the adversary's refund during the
pause it created ($P(\text{leak}\mid\text{emitted}){=}121/121$, and a
subsequent human rejection was powerless), while the identical runs routed
through \sys{} released nothing ($0/400$ mediated over all four models).
Whether the plan came from the model most eager to comply, the one that complied once in fourteen, or a script, the barrier holds. The barrier
does not adjudicate intent; it enforces that \emph{no} effect, benign or
adversarial, is released without a decision---the assurance an in-framework gate
cannot give once the injected plan races it. Nor would a static deny-list
substitute: it can stop the injected refund only by stopping \emph{every}
refund, whereas the hold preserves the legitimate path---the operator approves
genuine refunds and rejects this one---which is exactly the decision the
deployment installed an approval gate to make. This strengthens rather than widens
the claim: even under hostile input, the enforced invariant is the one we
verify. One scope note keeps it honest against our own threat model: the
demonstration presumes the same mediation contract. Injected text selects among
the deployment's defined tools; it cannot conjure an unmediated path that does
not exist, so under whole-choke-point placement the adversary's effect is
mediated and meets the hold. What injection \emph{could} target is a left-%
unwrapped tool---the accidental-bypass case the linter flags and the structural
routes close below the application, where even a hypothetical unwrapped tool's
egress is refused by the kernel. Injection does not weaken the mediation
contract; it raises the stakes of discharging it, which is why the structural
routes exist.

% =====================================================================
\section{\sys{} Design}
\label{sec:design}

\subsection{Principle: enforce outside the framework}
\noindent\textbf{Why an admission point is forced---and why we place it
externally.} One structural argument delimits the design space: the trichotomy below \emph{forces} an admission point outside the step's own schedule, while its placement---inside the framework or outside---remains an engineering and trust choice we argue, not derive. Its assumptions are explicit so its scope cannot be over- or under-read.

\begin{proposition}[Barrier trichotomy]\label{prop:trichotomy}
Fix an execution model in which \textup{(A1)} two or more branches of one step
may execute concurrently, each branch's effects committing independently
within the step, and \textup{(A2)} a pre-execution approval gate suspends only
its own branch, resolving no earlier than control's return to a caller that
can deliver a decision. Then in any run where a gated effect and a
side-effecting sibling share a step, at least one of the following holds:
\textup{(i)} \emph{serialization}---no sibling effect commits until every gate
in the step resolves, surrendering \textup{(A1)}'s intra-step concurrency for
effectful branches; \textup{(ii)} \emph{leak}---some sibling effect commits
during the pause, violating B1; or \textup{(iii)} \emph{mediated
admission}---every effect's commit is ordered after a verdict from an
admission component whose decision state is consulted at the effect boundary
and is not itself subject to the step's concurrent schedule.
\end{proposition}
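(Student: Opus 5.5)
The plan is a case split on how each sibling effect's commit is ordered relative to the gate's resolution. Fix a run satisfying the hypotheses: a step $s$ containing a gated branch $g$ and at least one side-effecting sibling branch $b$. Let $t_p$ be the instant the gate in $g$ suspends (the pause begins) and $t_d$ the instant its decision is delivered. By (A2), $t_d$ is no earlier than control's return to a caller able to decide, so the interval $[t_p,t_d)$ is nonempty. For each sibling effect $e$ with commit point $c(e)$, I ask one question: what, if anything, orders $c(e)$ after $t_d$? The trichotomy then falls out of answering it.

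First I dispose of the leak case. Suppose some sibling effect has $c(e)\in[t_p,t_d)$. Then $e$ is a gated effect of the run committing between pause and decision, which is exactly the B1 violation predicate of Definition~\ref{def:barrier}. That is case (ii), and we are done. So assume every sibling effect commits either before $t_p$ or at or after $t_d$.

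Next I treat effects committing before $t_p$. If any sibling effect commits before $t_p$ in a step whose gate then pauses, I want to argue this is either harmless for the gate-relevant effects or absorbed into (ii). To avoid an ordering loophole, I would state the claim over effects concurrent with the pausing gate, the window that the randomized sweep isolates. That means reading ``share a step'' as ``eligible to commit during the gate's suspension.'' I expect to add a one-line remark making this reading explicit.

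The substantive split is over the remaining effects, those with $c(e)\ge t_d$. Something must enforce this ordering, because by (A1) branches commit independently, so the bare schedule does not impose it. The enforcing component consults decision state at the effect boundary, and it either is or is not subject to the step's concurrent schedule. \emph{Case A: it is part of the schedule.} Here the only way a scheduler-internal mechanism can guarantee $c(e)\ge t_d$ for every concurrent sibling is to withhold those branches' commits until all gates in $s$ resolve. Withholding those commits is precisely (i), and it surrenders (A1)'s independent intra-step commits. \emph{Case B: it is not part of the schedule.} Then every such commit is ordered after a verdict from a component outside the schedule, which is (iii).

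The main obstacle is making Case A rigorous: showing that any in-schedule ordering mechanism amounts to serialization rather than some third option. My first attempt would be to define ``subject to the step's concurrent schedule'' as ``its waiting states are themselves branches of $s$ governed by (A2).'' Under that definition, any in-schedule wait on $t_d$ is a suspension of the waiting branch only. Forcing every effectful sibling to wait means each is suspended until resolution, which is serialization by definition. If this reduction is too definitional, I would settle for stating (i)--(iii) as an exhaustive classification by the locus of the ordering edge and acknowledge that the content lies in the definitions, consistent with the paper's framing that placement is argued rather than derived.
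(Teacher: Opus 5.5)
Your decomposition is the paper's own. Its proof also asks whether some ordering edge delays a sibling's commit past the in-step gates' resolution. It sends ``no edge'' to (ii), an edge issued by the step's own schedule to (i), and an edge supplied by a component outside the schedule to (iii).

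On your Case~A worry, the paper does no more than assert that a schedule-issued edge ``applies to every effectful sibling.'' Your fallback is therefore exactly its stance: (i)--(iii) as a classification by the locus of the edge, with the content in the definitions. The paper itself calls the result a design-space trichotomy, not a mechanized impossibility theorem. Note too that the disjunction is non-exclusive and (i) is at heart an ordering condition. So your A/B split carries the interpretive content (which mechanism enforces the ordering, and whether concurrency is surrendered), not the truth of the disjunction.

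The step that does not go through is your handling of commits before $t_p$. Anchoring the pause at the instant the gate suspends leaves a run that satisfies none of the three cases. Take a run whose only side-effecting sibling is unmediated, concurrent with the gate, and commits before the gate's branch reaches its suspension point:
\begin{itemize}
\item it lies outside $[t_p,t_d)$, so (ii) fails under your window;
\item it commits before resolution, so (i) fails;
\item nothing mediates it, so (iii) fails.
\end{itemize}
Nothing in (A1)--(A2) excludes this race.

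Your patch does not close it. A sibling ``eligible to commit during the gate's suspension'' can still win the race and commit just before $t_p$. Any reading strong enough to exclude that either adds a hypothesis, weakening the proposition, or makes (ii) hold by fiat. Nor is ``harmless'' an option: the proposition needs every run placed in some case. In the operator's sense the paper intends, such a commit is exactly the failure, because the action did not wait for the human.

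The paper avoids this by never introducing a suspension instant. It compares the commit only with the decision (``if none does, the commit precedes the decision: case~(ii)''). It takes the pause to be the whole interval during which the gate is unresolved. For a same-step sibling that is, by (A2), everything until control leaves the step. This matches the randomized sweep, where the leak window is exactly the superstep that raises the pause.

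Adopt that window and drop the pre-$t_p$ paragraph. With several gates in the step, test against the last in-step gate's resolution rather than a single $t_d$. The rest of your argument then coincides with the paper's.
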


\begin{proof}
By (A1) and (A2) the sibling's effect can reach its commit point while the
gate is unresolved, since resolution requires control to leave the step.
Either some ordering edge delays that commit past every in-step gate's
resolution, or none does. If none does, the commit precedes the decision:
case~(ii). If one does, consider what issues it. If the step's own schedule
issues it, the edge applies to every effectful sibling and effectful branches
serialize behind the gate: case~(i). Otherwise the edge is supplied by a
component outside the step's schedule that holds the effect against the run's
pause state and releases it only on a decision---an admission point, case~(iii).
\end{proof}

The proposition is a design-space trichotomy under (A1)--(A2), not a
mechanized impossibility theorem: case~(i) is a legitimate design a framework
could adopt at the parallelism cost, and none of our measured frameworks does.
What it pins down is that the barrier, intra-step concurrency of effectful
branches, and enforcement \emph{inside the step's own schedule} cannot all
three hold at once. Case~(iii) explicitly \emph{includes} in-framework
admission points---a framework could build the component natively, as a
deterministic decision state machine consulted at its own effect
boundary---so the trichotomy is not a false dilemma between ``serialize'' and
``go external.'' Where case~(iii)'s component lives is then an engineering
and trust question rather than a logical one: built \emph{natively}, it is an
upstream fix that must land and stay fixed in every framework a deployment
runs---and a policy layer embedded in the existing control flow inherits the
very scheduler that produced the race (Section~\ref{sec:related})---whereas
built \emph{externally}, it is one deployment-side point, adoptable
unilaterally across frameworks and versions, which is \sys{}. Two frameworks'
clean replay cells (Table~\ref{tab:matrix}) show sound native designs are
achievable on individual axes; the external gate is the enforcement point
whose guarantee does not wait for them.

Because the framework's own control flow is what fails, \sys{} places the
enforcement point at the \emph{tool boundary}, external to the framework and its
host process. Every side effect is submitted to \sys{} for \emph{admission}
before it is performed; the effect executes only if \sys{} returns
\textsf{release}. For every \emph{mediated} effect this inverts the trust
relationship: the framework need not be trusted to honor stop semantics,
because a side effect that never obtains a release simply never happens---even
if it is emitted by a sibling branch during a pause, a re-executed step, or a
zombie thread after cancellation. One word deserves a definition before it is
used further: we use \emph{repair} in the deployment sense---restoring the
end-to-end barrier contract for a deployment's mediated effects---not in the
sense of patching framework internals. The frameworks' own schedulers still
race exactly as Section~\ref{sec:measurement} measures; what changes is that
the race can no longer externalize an undecided effect. Upstream fixes remain
complementary work our regression probes exist to support
(Section~\ref{sec:related}).

\noindent\textbf{Conditional enforcement (the reference-monitor contract).}
\sys{} is an execution monitor in the Anderson/Schneider
lineage~\cite{anderson72,schneider00}, carrying its defining assumption,
\emph{complete mediation}. The conditional theorem the verification tiers
instantiate: \emph{if every side-effecting operation performs its external
action only after receiving \textsf{release} for its identity
$(\mathit{run},\mathit{key})$, then every execution satisfies P1--P4.} The
properties are unconditional facts about the gate's verdicts; the ``closes the
gap'' claims are conditional on the contract, and we say so throughout. This
invites the objection that a reference monitor merely proves ``a gate enforcing
P1--P4 enforces P1--P4.'' The antecedent (every effect submits before
externalizing) is discharged operationally, below, not proved. What \emph{is}
proved is the non-trivial half---that the admission core maintains P1--P4 where
naive implementations fail: concurrent submissions/decisions interleaved across
connections (TLC to $804{,}357$ states), unbounded run/key populations (TLAPS),
and crash-recovery reconstructing the released and fenced sets from a torn log
(WAL lemmas). The two defects the effort surfaced---a cross-run key-reuse
clobber and a compaction ordering bug---are invisible from the properties'
statement, which is why ``the theorem restates the implementation'' is wrong:
the implementation had bugs the proof obligations caught. The condition is
discharged in increasing strength: by \emph{placement discipline} at the
framework's single tool-invocation choke point (no framework modification in
the integration demo, Section~\ref{sec:eval}); and by a best-effort
\emph{static mediation linter} (\texttt{mediation\_lint.py}) flagging direct
calls to registered effect callables outside the wrapper---blind to dynamic
dispatch and third-party internals, so it raises the cost of \emph{accidental}
bypass, the realistic failure for a benign operator, and no more.

\smallskip\noindent\textbf{Making mediation structural.} The strongest
discharge removes discipline from the loop entirely: \emph{structural}
mediation by kernel-level interception or network-namespace egress
allow-listing, so tool processes can reach only the
gate host, at higher integration cost---both implemented, loaded, and
verified in the artifact: the namespace route across every executed framework
(\texttt{e2e/e2e\_structural\_all.sh}) and the eBPF route across all four
egress channels (\texttt{ebpf/mediation\_guard\_full.c}), with transcripts in
\texttt{evidence/}.
The network-namespace route (\texttt{egress\_demo.sh}) runs the gate and a
tool inside a loopback-only namespace, where an unmediated external
connection fails at the routing layer (\texttt{ENETUNREACH}) while the
mediated path releases. The same route composes with the real frameworks unchanged: \emph{every}
executed integration of Section~\ref{sec:eval}---all five Python frameworks
(\textsf{FW-A} through \textsf{FW-E}), each on its full violated-axis set---is
re-run inside a fresh loopback-only namespace, the gate spawned within it as
the only reachable egress, and each passes its complete repair verdict set in
situ while an unwrapped tool's external connection is refused by the kernel
(\texttt{Network is unreachable}) inside that same namespace
(\texttt{e2e/e2e\_structural\_all.sh}; transcript in
\texttt{evidence/e2e\_structural\_all.txt}). The repair logic is byte-identical
to the unconfined runs---only the egress environment changes---and because
these harnesses drive scripted planners rather than live model APIs, they
execute under total egress confinement with no external dependency; the Node
integration (\textsf{FW-F}) admits the identical wrapping. The kernel-interception route
(\texttt{ebpf/mediation\_guard\_full.c}) now closes \emph{every} network
egress channel the cgroup hooks expose: we attach \texttt{cgroup/connect4},
\texttt{cgroup/connect6}, \texttt{cgroup/sendmsg4}, and
\texttt{cgroup/sendmsg6} together, each permitting only the gate's address. A
tool process placed in the attached cgroup has every non-gate destination
refused by the kernel at the syscall permission layer---an IPv4 \emph{and} an
IPv6 \texttt{connect()}, and an unconnected v4 or v6 \texttt{sendmsg}/%
\texttt{sendto} to any non-gate address, all return \texttt{EPERM}---while its
connection to the gate succeeds and releases: a distinct and more surgical
enforcement point than routing, requiring no isolated namespace, the same
class of kernel-level syscall mediation that production security tooling such
as Cilium~\cite{cilium} and Falco~\cite{falco} builds on, here narrowed to a
single permitted destination. We load all four with BTF-enabled CO-RE via
\texttt{bpftool loadall} (on Linux~6.8 the committed transcript records the
four JITed \texttt{cgroup\_sock\_addr} programs
\texttt{restrict\_connect4/6} and \texttt{restrict\_sendmsg4/6} attached under
the target cgroup; from inside it a non-gate \texttt{connect}/\texttt{sendmsg}
on each channel returns \texttt{EPERM} while the gate connection
releases---artifact, \texttt{evidence/ebpf\_channels.txt}). On kernels where
\texttt{io\_uring}-submitted connects route through the same \texttt{sock\_addr}
hooks they are covered as well; a seccomp filter denying
\texttt{io\_uring\_setup} is the belt-and-braces complement for any kernel that
bypasses them. The two routes' remaining scope boundary is identical and we
state it exactly rather than let ``structural'' overclaim: the namespace route
confines all network egress at the routing layer, the eBPF route confines it
per-syscall across the four channels above, and \emph{neither} route covers
effects that externalize \emph{without} network egress from the tool
process---a write to a shared filesystem another process acts on, local IPC,
shared memory: for those channels mediation remains the placement contract (or
an analogous seccomp/LSM policy), a boundary we restate in the limitations.
This scoping costs little for the measured set: every violation we measure externalizes over the network, exactly the channel the routes close. Under either route, within its
stated channel coverage, bypass is
prevented by the operating system rather than by discipline, so
complete mediation becomes a property the tool process cannot opt out of
rather than a convention a wrapper must uphold. Precisely stated: structurally, the discharge is
\emph{network-complete}---every network-externalized effect of the confined
process is forced to the gate---and \emph{environment-complete only under the
placement contract}, since filesystem, IPC, and shared-memory channels are
closed by discipline (or an analogous seccomp/LSM policy) rather than by these
two routes. The actions our measurements gate (e-mail, payment capture, ticket creation, deployment) are network-externalized, so the network-complete discharge covers that governed set; a deployment whose consequential actions externalize through the filesystem, local IPC, or shared memory must add the seccomp/Landlock rung below, and the wider environment-complete claim is the residual we scope, not one we assert.
Section~\ref{sec:related} positions these routes against general-purpose
sandboxes, whose confinement machinery they reuse and whose missing
admission semantics the gate supplies. Table~\ref{tab:mediation} states the discharge ladder and each rung's
channel coverage in one place. As deployment guidance the ladder is also a ranking: prefer the
cgroup eBPF route where kernel support exists (per-syscall enforcement, no
topology change), the loopback namespace where full egress isolation is
acceptable, and wrapper discipline plus the linter only where OS-level
enforcement is unavailable---adding the seccomp profile below wherever a
tool's non-network channels matter.

\begin{table}[!t]
\centering
\caption{How the complete-mediation contract is discharged, per route: what
each covers, and what remains outside it. The first two rungs are
discipline; the two structural routes are kernel-enforced, both implemented
and exercised in the artifact; the last row names the residue every route
shares. All routes leave the approver's own integrity and a Byzantine
submitter out of scope (Section~\ref{sec:failure-model}).}
\label{tab:mediation}
\renewcommand{\arraystretch}{1.2}
\footnotesize
\begin{tabularx}{\columnwidth}{@{}p{2.05cm}p{1.5cm}X@{}}
\toprule
\textbf{Route} & \textbf{Enforced by} & \textbf{Covers / does not cover}\\
\midrule
Placement discipline (wrapper) & convention & every tool routed through it;
nothing a developer forgets to wrap\\
Static mediation linter & best-effort analysis & direct calls to registered
effect callables; not dynamic dispatch or third-party internals\\
Loopback-only netns & kernel (routing) & all network egress of the confined
process; not filesystem, IPC, or shared memory\\
\texttt{cgroup} eBPF \texttt{sock\_addr} & kernel (per-syscall) &
\texttt{connect4/6} $+$ \texttt{sendmsg4/6} to non-gate addresses
(\texttt{EPERM}); not filesystem, IPC, or shared memory\\
Non-network channels & kernel (seccomp) & shared filesystem, local IPC, shared
memory: fail-closed under the tested deny-by-default seccomp profile below, or
the placement contract where it is not applied\\
\bottomrule
\end{tabularx}
\end{table}

\smallskip\noindent One unwrapped tool path bypasses the
gate \emph{at the wrapper layer}; this is intrinsic to any enforcement point
that relies on placement, and is the same
trust boundary every reference monitor carries. It is not, however,
unavoidable: under the structural routes just shown, an unwrapped tool's
external action is refused by the kernel rather than silently executed, so the
bypass surface is closed below the application for network-externalized
effects (the routes' stated channel coverage above)---the residual trust
then rests on the OS enforcement mechanism (namespace routing or the
cgroup egress hooks) rather than on developer discipline. The resulting posture answers the strongest objection this design
invites. Under wrapper-only mediation a forgotten wrap
\emph{fails open}: the unmediated effect executes silently, and the guarantee
collapses exactly as the objection says. Under either structural route the
same mistake \emph{fails closed}: the unwrapped tool has no route to any
non-gate destination, its first external action errors loudly
(\texttt{ENETUNREACH}/\texttt{EPERM}), and the omission surfaces as an
availability bug on first execution rather than as a silent safety hole. The
companion objection---that per-tool wrapping is ``the same discipline'' as
fixing each framework internally---inverts under the same structure: an
internal fix must change the scheduler semantics of every framework a
deployment runs, land upstream, and stay fixed across versions, whereas the
gate plus a deny-by-default egress policy is a \emph{deployment-side} control
at one choke point, adoptable unilaterally and unchanged across framework
upgrades---the classical economy-of-mechanism argument for an external
reference monitor, made concrete by Table~\ref{tab:matrix}'s recurrence of
the same failure across four independently designed schedulers. The condition is also
\emph{necessary}, not merely sufficient---witnessed empirically: every
violation in Section~\ref{sec:measurement} is precisely an unmediated effect
landing.
Section~\ref{sec:threats-validity} discusses the residual risk.

\smallskip\noindent\textbf{Closing the non-network channels.} The two routes
above are network-complete but not environment-complete; we close the residual
filesystem, IPC, and shared-memory channels with a complementary
\emph{deny-by-default} seccomp-bpf profile applied to the tool process, under
which its only provisioned egress is the gate. A stand-in unwrapped tool that
attempts a filesystem write, a local (AF\_UNIX) socket, and a
\texttt{memfd\_create} shared-memory channel succeeds on all three unconfined
and is refused on all three under the profile (\texttt{EPERM}; $3/3$ channels
escape unconfined, $0/3$ confined), while an effect submitted to the gate over
the provisioned descriptor is still admitted (artifact, \texttt{confine/};
receipt, \texttt{evidence/confine\_fs.txt}). The non-network channels therefore
inherit the same fail-closed posture as network egress: a forgotten or malicious
unwrapped effect errors loudly rather than leaking silently. seccomp denies filesystem writes wholesale; where a tool needs a legitimate workdir, path-granular allow-listing is discharged---not merely composed in principle---by a Landlock ruleset (Linux~$\geq$5.13) that handles \texttt{WRITE\_FILE} and grants it only beneath the tool's provisioned workdir. On an enforcing kernel (ABI~4; Landlock in the active LSM stack), a gate-mediated write to that workdir is released and commits, while the same tool's unmediated writes to a shared, externally observed path and to a location outside any grant are both refused by the kernel (\texttt{EACCES}), and reads remain ungoverned---a positive-control self-test (handle write, grant nothing, confirm the write is denied) gates the run so a non-enforcing kernel cannot report a false pass (artifact, \texttt{probes/landlock\_workdir\_demo.py}; receipt, \texttt{evidence/landlock\_workdir.txt}). The filesystem channel thereby inherits the same fail-closed posture at \emph{path} granularity that seccomp gives it wholesale; where Landlock is absent from the LSM stack, the seccomp floor's wholesale write-deny is the portable fallback, executed above. A tool that escalates privileges or exploits
a kernel vulnerability is out of scope, as for any seccomp/LSM sandbox, and the
gate process itself remains trusted. ``Closing the non-network channels'' does not \emph{eliminate} the
assumption it discharges: complete mediation is the reference-monitor
obligation, not a theorem, and \sys{} supplies admission semantics rather than
confinement. Each effect channel is discharged by the mechanism that governs
it---network by the namespace or eBPF routes, filesystem and the syscall
surface by seccomp (path granularity by Landlock where available)---and any
channel a deployment leaves unprofiled remains on placement discipline,
fail-closed under deny-by-default rather than fail-open. The residual is thereby
\emph{bounded to exactly the channels left unconfined}, not removed, which is
the standard posture of every user-space reference monitor and the honest scope
of the discharge.

\smallskip\noindent
A related worry---could a sibling's effect land \emph{before} the gate rules
on it?---dissolves under mediation, which is a \emph{causal ordering}, not
observation: the wrapper performs the external action only after
\textsf{release}, so a mediated effect cannot precede its own admission by
construction; the gate is an admission point, not a monitor racing completed
actions. The unmediated measurements of Section~\ref{sec:measurement} show
effects landing during pauses precisely because no such ordering exists
there, and the repaired runs (Section~\ref{sec:eval}) show zero effects
during the pause for the same reason. A tool whose \emph{implementation}
performs an external action without submitting at all is simply an
unmediated path---the stated contract, not a race.

\smallskip\noindent\textbf{The circularity objection, answered directly.} A
reviewer may sharpen this into an apparent contradiction: the measured gaps
\emph{exist} because effects execute without waiting for the framework's
control flow, so why would those same effects reliably submit to the gate?
The objection conflates two distinct layers. What the measurements show
bypassing control flow is the framework's \emph{pause}---the scheduler
commits a sibling's effect without waiting for the interrupt to resolve. What
mediation interposes is a \emph{wrapper at the tool boundary}, one layer below
the scheduler: a wrapped effect calls the gate and externalizes only on
\textsf{release}, and no scheduler behavior---no superstep race, no premature
resume, no orphaned thread---can make a \emph{submitted} effect externalize
without a verdict, because the external call physically follows the release in
the wrapper's own code. The scheduler still races exactly as measured; the
race now moves a \emph{held} token, not an executed effect. The effects that
still bypass the gate are precisely the \emph{unwrapped} ones---the
complete-mediation residual of Table~\ref{tab:mediation}, closed structurally
for network egress and bounded explicitly elsewhere---not a case where
mediation was applied and the scheduler defeated it. ``The gaps exist because
effects bypass control flow'' and ``a wrapped effect cannot bypass the gate''
are therefore both true and non-contradictory: they are statements about
different layers.

\subsection{Model}
\begin{definition}[Effect]
An effect is a triple $(\mathit{run},\ \mathit{key},\ \mathit{needs\_approval})$,
where $\mathit{run}$ identifies the cancellable run, $\mathit{key}$ is a stable
idempotency key identical across replays of the same logical effect, and
$\mathit{needs\_approval}$ indicates whether human authorization is required.
\end{definition}

\sys{} maintains four pieces of state, each scoped to the effect
\emph{identity} $(\mathit{run},\mathit{key})$: the set of \emph{released}
identities (for deduplication), the set of \emph{cancelled} runs, the map of
\emph{pending} (held) effects awaiting a decision, and the set of
\emph{rejected} identities. Scoping by $\mathit{key}$ alone is unsound:
rejections and deduplication would bleed across unrelated runs, and a second
run's held effect would silently displace the first's---counterexamples that
are executable regression tests in the artifact. Cross-run key reuse is
therefore permitted by construction. (The alternative design---globally unique
idempotency tokens, as payment APIs use---eliminates cross-run collisions by
fiat but pushes uniqueness onto every producer; scoping by
$(\mathit{run},\mathit{key})$ keeps producer-local keys sound, and a deployment
wanting global tokens simply embeds them in $\mathit{key}$.)

\emph{Key discipline.} The properties assume \texttt{key} is a stable
idempotency key---identical across replays of the same logical effect,
distinct across different ones. (Idempotency keys apply to consequential
\emph{writes}; reads and status polls are not gated effects and are simply
not mediated, so nothing forces a poll to deduplicate.) Both violations of
that discipline degrade
\emph{fail-safe with respect to the measured hazard}: a fresh key per attempt
makes P3 vacuous (replays are no longer recognized, and the deployment
regresses to the framework's native replay behavior while P1, P2, and P4
stand), and a key shared across distinct logical effects
over-deduplicates---the second effect is \emph{refused} as a duplicate, a
visible fail-closed error, never a silent double execution. No key mistake
can cause two releases of one identity; the failure directions are omission
and refusal, not duplication.

\emph{Who mints the key} is the deployment decision the discipline turns on,
so we name the three realistic strategies rather than leave key generation
abstract. (i) \emph{Caller-named semantic keys}: the deployment names the
logical effect at the wrapper call site (\texttt{charge\_card},
\texttt{refund:\{order\_id\}}). This is what every executed integration in
Section~\ref{sec:eval} does, and it is stable across framework replay by
construction, because replay re-executes the same call site with the same
arguments. (ii) \emph{Canonicalized argument hashes}: stable across replay
for the same reason---re-execution reproduces the arguments---and sensitive
to any argument change, which is the correct deduplication boundary for
``same logical effect.'' (iii) \emph{Framework-issued call identifiers}
where a runtime exposes them: convenient, but a runtime that mints a fresh
identifier on re-execution silently degrades this to the fresh-key case, so
P3 protection then rests on (i) or (ii); whether a framework's identifiers
survive replay is a property a deployment checks once per runtime, not per
call. Under all three, the failure directions remain the fail-safe ones
above---omission and visible refusal, never silent double release.
Characterizing key stability empirically across large, heterogeneous tool
surfaces is future work; the executed integrations exercise strategy~(i).
Admission follows
a fixed
priority---cancellation and duplication dominate approval---given in
Algorithm~\ref{alg:submit}.

\begin{figure}[!t]
\begin{algorithmic}[1]
\STATE \textbf{function} \textsc{Submit}$(e)$
\IF{$e.\mathit{run} \in \mathit{cancelled}$}
  \RETURN \textsc{RefusedCancelled} \hfill$\triangleright$ P4
\ENDIF
\IF{$(e.\mathit{run},e.\mathit{key}) \in \mathit{released}$}
  \RETURN \textsc{RefusedDuplicate} \hfill$\triangleright$ P3
\ENDIF
\IF{$(e.\mathit{run},e.\mathit{key}) \in \mathit{rejected}$}
  \RETURN \textsc{RefusedRejected} \hfill$\triangleright$ P2
\ENDIF
\IF{$e.\mathit{needs\_approval}$}
  \STATE $\mathit{pending}[(e.\mathit{run},e.\mathit{key})] \gets e$
  \RETURN \textsc{HeldForApproval} \hfill$\triangleright$ P1
\ELSE
  \STATE $\mathit{released} \gets \mathit{released} \cup \{(e.\mathit{run},e.\mathit{key})\}$
  \RETURN \textsc{Release}
\ENDIF
\end{algorithmic}
\caption{\sys{} admission. \textsc{Decide}$(\mathit{run},\mathit{key},\mathit{approved})$
removes the pending effect under that identity and releases it only if approved
and its run is not cancelled; deciding a cancelled run's effect reports
\textsc{RefusedCancelled} (the fence, not a ``duplicate'', is the reason).
\textsc{Cancel}$(\mathit{run})$ marks the run and drops its held effects.}
\label{alg:submit}
\end{figure}

\subsection{Properties}
\begin{property}[Hold-until-decided]
An approval-gated effect is never released until a decision is recorded; it is
held. This closes the sibling leak: a sibling's mediated gated effect cannot
execute during a pause.
\end{property}
\begin{property}[Reject-cancels]
A rejected effect is never released and remains refused on resubmission. This
closes reject-after-effect.
\end{property}
\begin{property}[Dedup-on-replay]
An identity $(\mathit{run},\mathit{key})$ released once is refused thereafter;
distinct runs reusing a key are unaffected. This closes replay and timeout
double-execution without cross-run collateral refusals.
\end{property}
\begin{property}[Fence-on-cancel]
After a run is cancelled, its subsequent effects are refused even if submitted
late by a zombie thread, and any held effects of that run are dropped. This
closes cancellation orphaning.
\end{property}

\noindent\textbf{What the properties do not promise.} P1--P4 are safety
properties over \emph{releases}, and every ``closes'' above is conditional on
complete mediation (Section~\ref{sec:design}). Before the non-claims, one
clarification that pre-empts a natural misreading: a hold suspends
\emph{the submitted effect}, not the run and not its sibling branches. If one
branch reads a database while another submits an e-mail that meets an
approval hold, the read is not a gated effect, is never submitted, and
proceeds untouched; only the e-mail waits. The gate therefore does not impose
a global lock or a per-approval run freeze---it withholds exactly the
irreversible external actions routed to it, and leaves all other computation,
including sibling reads, status polls, and internal control flow, running.
The serialization the gate does impose is per effect at its single admission
mutex, measured in Section~\ref{sec:load}; it is not a serialization of the
agent's branches. Four non-claims: the gate does not prevent an
effect's \emph{submission} during a pause (submission is how a hold is
created); it does not make in-flight \emph{ungated} work atomic or revocable;
it does not roll back an effect already released and in flight at an
external service---an approved payment capture awaiting its HTTP response is
beyond any admission point's reach, the gate is an admission barrier, not a
transaction manager (compensation is the Atomix/saga territory of
Section~\ref{sec:related});
and it bounds no decision latency---liveness (every held effect is eventually
decided or fenced) belongs to the approver and the framework, not the
admission core. Likewise on the timeout axis the claim is effect-level only:
the gate guarantees a zombie's \emph{effect} cannot land after the stop is
observed (P4)---an in-flight \emph{mediated} tool may keep computing after
the cancel, but its submission meets the fence and its external action never
happens; it does not unblock a caller stuck past its own deadline and
cannot make a synchronous tool cancellable---those are the framework-design
divergences Table~\ref{tab:matrix} documents, not gaps the gate claims to
close. This effect-level reading is the measured predicate itself: B4 and the zombie probes are evaluated at the commit point (Section~\ref{sec:background}), which is exactly what the fence refuses---a zombie's computation may run to completion; its effect does not land. Fail-closed pausing under an unavailable approver is the
intended behavior for irreversible actions, not an oversight. A deployment
wanting bounded holds needs no new mechanism: because a hold is resolved by
an ordinary authenticated \textsc{Decide}, a decide-by-deadline policy is a
client-side watchdog that issues
\textsc{Decide}$(\mathit{run},\mathit{key},\mathit{approved}{=}\mathit{false})$
when a configured time-to-live expires---the held effect is then refused and
stays refused (P2), converting silent starvation into a visible, bounded
rejection. The watchdog composes above the verified core and changes none of
its state machine; safety is indifferent to \emph{who} issues a rejection,
only to its authenticity (HMAC, Section~\ref{sec:failure-model}). This
policy is executed, not sketched (artifact, \texttt{e2e\_ttl.py}, run with
the HMAC channel enabled): a held effect starves past a $0.5$\,s TTL and the
watchdog's authenticated rejection lands; the rejection is sticky against
both resubmission and a \emph{late valid approval}; an unauthenticated late
approval is refused at the channel; and a control effect approved inside its
TTL releases normally while the watchdog's late firing is a no-op
refusal---eight checks, all passing, with the starved effect never
executing.

\begin{remark}[The watchdog is inside the verified envelope]
\label{rem:watchdog}
Deploying the watchdog creates no new proof obligation, by the specification
rather than by fiat. In the shared TLA\textsuperscript{+} model
(\texttt{formal/tla/SoundGate.tla}), decisions are environment actions:
\textsf{Next} contains $\exists r,k,\mathit{ap}\in\mathrm{BOOLEAN}:
\textsc{Decide}(r,k,\mathit{ap})$, so the proved behaviors already include
\emph{every} pattern of decisions any client could issue. A watchdog firing
\textsc{Decide}$(\cdot,\cdot,\mathit{false})$ at a TTL selects a subset, and
safety proved over all behaviors holds over any subset; the TLC exhaustion and
TLAPS induction cover the watchdog-augmented deployment as-is. What it adds is a
liveness \emph{policy} (bounded decision latency), outside the verified safety
scope, and \texttt{e2e\_ttl.py} is executed evidence of it. Its two failure
modes degrade to the documented posture: if the watchdog crashes or partitions,
its rejection never arrives and the held effect starves---fail-closed pausing,
the stated liveness non-goal, safety untouched because a hold releases nothing;
and it cannot corrupt state, since its only capability is an authenticated
\textsc{Decide}$(\cdot,\cdot,\mathit{false})$, idempotent and sticky (P2), so a
delayed or replayed watchdog rejection is absorbed as any client's, and a fresh
post-TTL effect is a new identity with its own hold. The watchdog thus inherits
the gate's safety without new obligations; its own \emph{liveness} is a
deployment obligation like any approver's availability.
\end{remark}

\subsection{Implementation}
\label{sec:impl}
The gate core is a small, dependency-light Rust module; its admission logic is
deterministic and unit-tested (one test per property, regression tests
encoding the cross-run counterexamples above, and a randomized
invariant harness over interleaved submit/decide/cancel/close sequences).
The randomized harness is what surfaced a subtle admission bug during
development---re-submitting a held identity with approval disabled could
release it while leaving a stale pending entry that a later decision would
honor as a second release; the fix makes holds idempotent, and the property
``no identity is released twice'' now withstands thousands of randomized
sequences. A thin server exposes the
core over a line-delimited JSON protocol
(\texttt{submit}/\texttt{decide}/\texttt{cancel}), so any language can drive it:
the framework's tool-invocation wrapper issues a \texttt{submit} and performs the
effect only on \textsf{release}. Rust is used for the enforcement substrate---not
as a contribution in itself---because the gate must be a small, robust,
framework-independent process outside the untrusted control flow.

\subsection{Failure model and deployment obligations}
\label{sec:failure-model}
\noindent\textbf{Mechanized verification.} Because the admission core is small
and side-effect-free, we mechanically verify its four safety properties across
four tiers---three sharing one specification over a model of the core, and a
Loom tier on the deployed concurrent Rust (Table~\ref{tab:formal}; artifact,
\texttt{formal/}, \texttt{tests/loom\_gate\_test.rs}). On the word ``tiers'':
three are the \emph{same} abstract model at increasing scope (sequential proof,
finite-concurrent exhaustion, unbounded induction) and the fourth touches the
deployed code at bounded interleavings---a scope ladder over one model plus one
bounded check of the binary, not four independent verifications. In \emph{Verus}
we prove the properties over an abstract model whose transitions mirror the Rust
line-by-line (no rejected effect releases, no identity releases twice, a fenced
run never releases, a gated effect holds until approval), discharging every
obligation; since Verus establishes this for the sequential logic (which the
gate guarantees by serializing decisions under a mutex) and does not verify the
std-library collections, the model-to-code bridge is the differential harness
below plus the unit/regression/randomized suites. We then model-check the
\emph{concurrent} protocol in \emph{TLA+}/TLC over all interleavings of submit,
decide, cancel, close, exhausting the finite state space with no violation
($2{\times}2$: 729 states; $3{\times}3$: 804{,}357; $4{\times}3$: 74{,}805{,}201
from $4.2{\times}10^{9}$ generated, depth 17, ${\sim}29$~min on 16 threads; all
complete searches, the largest with a TLC-estimated $1.8{\times}10^{-3}$
fingerprint-collision probability). In \emph{TLAPS} we prove the invariants
\emph{inductive}, holding for unbounded runs and keys. TLC and TLAPS share one
specification: \texttt{Runs}/\texttt{Keys} are constants TLC instantiates
finitely and TLAPS leaves arbitrary; the three-conjunct invariant I1--I3
\emph{is} the inductive strengthening---I3 (fence compaction) exists so the
induction closes over \textsc{Close}---and P1--P4 are theorems from it, one
preservation lemma per action, so the finite check and the unbounded proof are
of literally the same \textsf{Next} and invariant. The load-bearing conjunct is
fence compaction---\emph{a closed run retains no per-identity state}, the formal
answer to the zombie-after-close race: closure is monotone and checked before
any per-identity lookup, so a submission bearing a closed run's identity refuses
via the fence regardless of arrival order, at any scale.

We claim no kernel-grade end-to-end refinement in the style of verified
OS kernels~\cite{sel4} or of end-to-end verified distributed systems and
file systems (IronFleet~\cite{ironfleet15}, Verdi~\cite{verdi15},
FSCQ~\cite{fscq15}), which would require Verus to model the standard
library; the verified surface is deliberately the small admission logic. (A bounded-model-checking attempt on the deployed \texttt{Gate} with Kani/CBMC is committed---\texttt{evidence/kani.txt}---and illustrates the cost: the search drowns in symbolic unwinding of the standard library's collections before the gate's own logic is exercised; the conformance harnesses below are the bridge we use instead.) We
close the model-to-code gap by execution instead: a differential conformance
harness (\texttt{tests/conformance.rs}) transcribes the Verus transitions into
executable form and drives \emph{both} the deployed \texttt{Gate} and that model
with identical randomized sequences, asserting identical verdict and derived
state (released, rejected, pending, cancelled, closed) after \emph{every}
operation. Over $1.2\times10^{7}$ operations across $2\times10^{5}$ traces on an
$8{\times}6$ identity domain---dense with the aliasing, re-submission, and
cross-run interleaving where a \texttt{HashMap} or ownership edge case would
surface---code and model never diverged; zero divergences over
$n{=}1.2\times10^{7}$ bounds the per-operation divergence probability
\emph{under the harness's distribution} below $3/n\approx2.5\times10^{-7}$ at
95\% confidence (rule of three)---the epistemic grade of testing, which is why
we do not call it a proof. We complement this with a \emph{bounded-exhaustive}
check on the same deployed \texttt{Gate}: over the two-run, two-key domain we
enumerate the entire reachable transition relation---all $729$ states (BFS-%
confirmed, the count TLC reaches independently, diameter $6$) and all $20$
operations from each, $14{,}580$ transitions---asserting identical verdict and
state on every one, zero divergences (\texttt{tests/exhaustive\_conformance.rs});
enumerating every sequence to depth five ($3.2\times10^{6}$ sequences) agrees
identically. This is the finite exhaustiveness TLC applies to the \emph{model},
here executed against the \emph{binary}: exhaustive at small scale, extended
(sampled) over the larger $8{\times}6$ domain by the randomized run---refinement
evidence by differential testing, the discipline long used to validate compilers
against oracles~\cite{mckeeman98,csmith11}, not a mechanized refinement proof.
The verification is not ceremony: escalating rigor found two real defects
testing had missed, and the same harness would catch future model/code drift. A
randomized invariant harness first exposed a double-release---resubmitting a
held identity with approval disabled released it while leaving a stale pending
entry a later decision would honor---fixed by making holds idempotent. Model
checking then found a second: a late rejection of an already-released identity
recorded a contradictory rejected entry (violating released/rejected
disjointness), which the Verus tier independently flagged at the same
transition. We fixed the gate---a released identity now refuses late decisions
of either polarity as duplicates---and pinned both fixes with regression tests
and harness invariants.

\smallskip\noindent\textbf{Mutation adequacy of the conformance harness.}
A harness that passes is reassuring only if it could have failed on a wrong
gate. We check this directly by applying five standard mutation operators to the
admission core---one per enforced property and code path: swapping the
release/reject verdict, negating the duplicate check, and weakening the
cancel/close fence (\texttt{||}\,$\to$\,\texttt{\&\&}) in each of \texttt{submit}
and \texttt{decide}. The bounded-exhaustive conformance harness catches all four
mutations whose effect is observable in the sequential semantics. The
fifth---weakening the fence on the \emph{decide-after-hold} path---is unreachable
sequentially, because both \texttt{cancel} and \texttt{close\_run} drop the run's
pending effects, so a later \texttt{decide} always takes the (unmutated)
no-pending fence rather than the post-hold one; it is reachable only when a
\texttt{decide} races a \texttt{cancel}, and the Loom tier below catches it.
Every property-violating mutation is thus caught by one of the two harnesses
(five of five across the two tiers---four by exhaustive conformance, the
concurrency-only fence by Loom)
(artifact, \texttt{scripts/mutation\_score.py},
\texttt{evidence/mutation\_score.txt}), evidence that the conformance suite
exercises the properties it checks rather than passing vacuously---bounded, as
any mutation analysis is, by the operator set: the five operators target the
enforced properties' decision points, not every conceivable semantic
deviation---and that the
sequential and concurrent tiers cover complementary reachable states.

\smallskip\noindent\textbf{Model-checking the concurrent Rust (Loom).}
The differential harness cross-checks code against model over random
\emph{sequential} sequences; to check the actual \emph{concurrent} Rust we add
a fourth tier using Loom~\cite{loom}, which exhaustively explores the legal
thread interleavings of a small program under the C11 memory model. We use
Loom's exhaustive search rather than a randomized concurrency tester such as
Shuttle~\cite{shuttle} precisely because the admission core's shared-state
surface is small enough to enumerate---two threads contending on one
mutex---so exhaustiveness is affordable and strictly stronger than sampling
here; a randomized explorer would be the tool of choice only at a larger
interleaving space than the gate presents. Because
the admission core is plain \texttt{\&mut self} over standard collections with
no internal locks, its only shared-state access is through the mutex the
server serializes on; Loom drives the \emph{deployed} \texttt{Gate} under two
threads contending for that mutex and asserts, after every interleaving, the
same three-part invariant the abstract tiers target plus the verdict-level
facts (exactly one of two racing decisions of a held identity returns
\textsf{release}, the other \textsf{refused-duplicate}; a cancel concurrent
with a zombie resubmission always fences; cross-run key reuse never
collaterally refuses). Three models pass with zero failures
(artifact, \texttt{tests/loom\_gate\_test.rs}, \texttt{evidence/loom.txt}),
raising the concurrent evidence from the abstract \textsf{Next} to the running
mutex-guarded type. This is bounded model checking of the real code, not a
mechanized refinement proof of it---the gap the abstract names---but it closes
the specific concern that TLC checks a model while the Rust concurrency goes
unchecked.
We treat liveness (every held effect is eventually
decided or fenced) as out of scope for the gate, since it depends on the
approver and framework rather than the admission core.

\smallskip\noindent\textbf{Residual trusted computing base.} What remains
trusted, because unverified, is enumerated: the Rust compiler and standard
library (the \texttt{HashMap}/\texttt{HashSet} collections and the serializing
mutex), the JSON layer, the TCP stack and OS scheduler, the filesystem's
\texttt{fsync} contract in WAL mode (and the device write-cache flush beneath
it---the durability contract every WAL store inherits), the HMAC
implementation (RFC~4231-pinned), and the conformance harness. The
differential harness targets the first two (model/code divergence through a
collection or ownership edge case) and the torn-tail discipline the
\texttt{fsync} item; the rest we inherit as every user-space reference monitor
does. What matters is not that these are correct but that their failures are
\emph{safety-transparent}: safety is over \textsf{release} verdicts, and a
fault in JSON decoding, the socket, or the scheduler yields a malformed,
dropped, or delayed request---no \textsf{release}, not a spurious one. Memory
corruption inside the gate's own code is absent by construction: the source
tree contains no \texttt{unsafe} block, so it would require a compiler or
std-library fault, already enumerated. Two silent-corruption exceptions are named rather than averaged over: a serialization bug corrupting a verdict field (why the reply path is a fixed, checked enumeration), and a collection returning wrong membership---a wrong \emph{verdict} the dense-aliasing harness and bounded-exhaustive enumeration are built to surface, since any divergence is verdict- or state-visible at the next compared step. That narrows, not eliminates, the risk of a fault the harness
distribution never exercises---the honest boundary of testing. The transport
half is tested too: a generative fuzz harness
(\texttt{scripts/fuzz\_boundary.py}) drives the live server with
$1.8{\times}10^{5}$ malformed inputs across eight classes---random bytes,
invalid UTF-8, non-JSON text, wrong-shape JSON, wrong-typed/unknown fields,
hostile \texttt{decide} values (forged MACs, oversized and control-character
identities, duplicate keys), protocol-state abuse (decide-before-submit,
approve-after-cancel, double decide), and framing abuse (half, unterminated,
oversized lines)---asserting continuously that no input elicits a
\textsf{release} (zero fail-open observed), planted canary state (held,
released, fenced) survives every batch, and a valid round-trip answers after
every class. All pass with the server alive throughout
(\texttt{evidence/fuzz\_boundary.txt}). The untrusted transport can thus make
the gate fail \emph{closed} (its designed posture) but cannot, without a
harness-visible divergence, make it fail open; a refinement proof down to the
socket would upgrade this from fail-closed-by-construction to
proven-equivalent, the natural next step (Section~\ref{sec:limitations}).

\begin{figure}[!t]
\centering
\footnotesize
\begin{tikzpicture}[
  node distance=2.2mm,
  box/.style={draw, rounded corners=1.5pt, align=center, inner sep=3pt,
              font=\footnotesize},
  tier/.style={draw, rounded corners=1.5pt, align=center, inner sep=3pt,
               font=\scriptsize, minimum height=8mm},
  >={Stealth[length=2mm]}]
  \node[box, fill=black!4] (rust) {Deployed Rust gate\\\texttt{lib.rs} (in the TCB)};
  \node[tier, right=13mm of rust] (verus) {Verus\\\emph{sequential}\\11 items};
  \node[tier, right=of verus] (tlc) {TLA+/TLC\\\emph{concurrent,}\\\emph{finite}\\$7.5{\times}10^{7}$ st.};
  \node[tier, right=of tlc] (tlaps) {TLAPS\\\emph{unbounded}\\68/68};
  \draw[<->, thick] (rust) -- node[above, font=\scriptsize, align=center, inner sep=1pt]
       {diff.\ conf.} node[below, font=\scriptsize, align=center, inner sep=1pt]
       {$1.2{\times}10^{7}$~ops, 0 div.} (verus);
  \draw[->] (verus) -- (tlc);
  \draw[->] (tlc) -- (tlaps);
  \node[draw, dashed, rounded corners, fit=(verus)(tlc)(tlaps),
        inner sep=4pt, label={[font=\scriptsize]below:one shared \textsf{Next}
        + one invariant $I1{-}I3$}] (model) {};
\end{tikzpicture}
\caption{What is proved, and how it connects to what runs. The three model
tiers verify a \emph{model} of the admission core (shared specification and
invariant, increasing scope left to right); a differential conformance harness
bridges that model to the deployed Rust, and a Loom tier (not shown)
model-checks the deployed concurrent code directly---all \emph{evidence}, not
a mechanized refinement proof. The transport, standard library, and OS remain in the
trusted computing base.}
\label{fig:verif-pipeline}
\end{figure}

\begin{table}[!t]
\centering
\caption{Mechanized verification of the admission core. Each tier
proves what the one above cannot; the first three target the same three-part safety
invariant (released/rejected disjoint; a pending identity is undecided; a
closed run retains no per-identity state) over a shared model, and Loom checks
the same invariant on the deployed concurrent Rust.}
\label{tab:formal}
\renewcommand{\arraystretch}{1.2}
\footnotesize
\begin{tabularx}{\columnwidth}{@{}l X p{0.30\columnwidth}@{}}
\toprule
\textbf{Tier} & \textbf{Establishes} & \textbf{Result}\\
\midrule
Verus & safety properties over the sequential admission model mirroring
\texttt{lib.rs} & 11 items verified, 0 errors\\% receipt committed: formal/verus.txt (11 verified, 0 errors)
TLA+/TLC & same invariants under all concurrent interleavings (finite,
exhaustive) & $2{\times}2$: 729; $3{\times}3$: 804{,}357; $4{\times}3$:
74{,}805{,}201 distinct states; 0 violations\\% receipts: formal/tla/tlc_{2x2,3x3,4x3}.txt
TLAPS & invariants inductive, hence at unbounded scale &
68/68 obligations proved\\% receipt: formal/tlapm.txt (All 68 obligations proved)
Loom & same invariant on the \emph{deployed} concurrent Rust (bounded
interleaving search) & 3 models, 0 failures\\% receipt: soundgate/evidence/loom.txt
\bottomrule
\end{tabularx}
\end{table}

\medskip
\noindent The reference gate holds its state in memory. This suffices to
demonstrate the semantics and drive the end-to-end replay, but a deployment
inherits four obligations, enumerated here rather than hidden. \emph{Durability
(implemented):} the reference server optionally appends state-changing
verdicts to a write-ahead log and fsyncs \emph{before} acknowledging, so no
acknowledged release can be forgotten; recovery replays the log before the
listener opens (fail-closed: nothing is admitted against unrestored state),
and a crash--restart scenario in the evaluation shows the replay and
cancellation fences surviving an uncleanly killed process. Recovery
distinguishes the two corruption cases: a torn \emph{final} record---the only
artifact a crash mid-append can leave, since the fsync precedes the
acknowledgement---is skipped with a warning, while an unparsable record
anywhere earlier is mid-log corruption and aborts startup rather than opening
with partial fences. The operational cost of that choice is stated rather
than hidden: mid-log corruption halts admission---fail-closed---until the
operator repairs or truncates the log, in preference to a gate that has
silently forgotten releases or fences; the WAL is per-instance and bounded
by compaction, so the blast radius is one gate's active runs, and per-run
sharding (below) confines it further. State-changing throughput in WAL mode is consequently
fsync-bound (Section~\ref{sec:load}); the reference server's WAL writer implements \emph{group
commit}---the classical WAL batching of the database
literature~\cite{dewitt84}---by default (a single writer thread batches up to 512
events behind one fsync; each reply is written only after the fsync covering
its event, preserving the discipline exactly), evaluated in
Section~\ref{sec:load}; the unbatched per-operation variant that produced
Table~\ref{tab:load}'s WAL rows is pinned in the artifact so both modes'
provenance is exact.

\begin{lemma}[Group-commit crash safety]
\label{lem:gc}
Under the WAL writer's discipline---each reply is written only after the
\texttt{fsync} whose batch covers that event's record
(\texttt{src/main.rs}, \texttt{wal\_writer})---a crash at any point yields
a durable log of which the acknowledged event set is a prefix; recovery
therefore never forgets an acknowledged release or fence, and batching
introduces no new corruption case beyond the unbatched server's torn final
record.
\end{lemma}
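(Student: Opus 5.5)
We model the WAL writer as a single thread that owns the log file and handles a sequence of batches $B_1, B_2, \dots$. For each batch it appends every record, calls \texttt{fsync}, and only then releases the replies for that batch. The events and the admission core live under the serializing mutex, so they are totally ordered; write $e_1 < e_2 < \cdots$ for that order, which is also the order in which records are enqueued and appended.

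We assume the standard \texttt{fsync} contract, already listed in the residual TCB: once \texttt{fsync} returns, every byte written to the file before the call is durable. A crash may leave only a torn suffix of bytes written after the last completed \texttt{fsync}.

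The first claim is that the durable log is a prefix of the event order. Appends are sequential and go to a single file descriptor owned by one thread. Hence the durable content after a crash is exactly the concatenation of $B_1,\dots,B_{j}$ for the last fully synced batch $j$. It may be followed by some initial byte-prefix of the records of $B_{j+1}$, and that is all. So the durable log is a prefix of $e_1 e_2 \cdots$, possibly ending in one partial record.

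The second claim is that every acknowledged event lies inside that prefix. An event $e$ is acknowledged only once its reply is written. By the discipline, that happens only after the \texttt{fsync} covering $e$'s batch $B_i$ has returned, so $i \le j$. Every acknowledged event is therefore in $B_1\cdots B_j$, and the acknowledged set is a sub-prefix of the durable log.

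The main obstacle is getting the bookkeeping of the statement right. The set of acknowledged events need not itself be order-closed, because replies inside one batch may be written out of order. What we can say is that it is contained in the fully durable prefix $B_1\cdots B_j$, and that prefix is downward closed in the event order. This is the form recovery needs, so I would state the lemma's ``prefix'' in this sense.

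From here, recovery correctness follows from the replay semantics. Replay reconstructs the released set and the fenced (cancelled/closed) set by reapplying the Algorithm~\ref{alg:submit} transitions in log order. The durable log is a prefix of the true history, so the reconstructed state equals the gate's state at some cut at or after every acknowledged event. Released and fenced sets only grow, except that \textsc{Close} compaction removes entries. That compaction is itself a logged event, and invariant I3 ensures a closed run is refused via the fence anyway. So every acknowledged release or fence is present, or is subsumed by a fence, after recovery.

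For the corruption claim, batching changes only how often \texttt{fsync} is called, not how records are appended. The only non-durable bytes are those after the last completed \texttt{fsync}, and these form a suffix of the file. A suffix can contain at most one partially written record at its end; every record before it is complete, although unacknowledged. That is exactly the torn-final-record case the unbatched server already handles by skipping it with a warning. Any unparsable record earlier in the log therefore still indicates genuine corruption, and startup aborts as before.
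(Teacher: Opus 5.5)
Your proof is sound under the assumptions you state and follows the paper's skeleton. Reply after the covering \texttt{fsync} gives acknowledged $\Rightarrow$ durable, arrival-order appends give the prefix, and only the unsynced tail can tear. It departs from the sketch in three ways.

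First, your weakening of ``is a prefix'' to ``is contained in the downward-closed, fully synced prefix $B_1\cdots B_j$'' is a real sharpening. The paper infers the literal prefix from arrival-order appends, which fails once replies within a batch complete out of order; your form is exactly what recovery uses. You are also more accurate in treating all of $B_{j+1}$ as uncovered, where the sketch calls the torn record ``the only record not yet covered by a returned \texttt{fsync}.''

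Second, you make recovery explicit (monotone released and fenced sets, with \textsc{Close} compaction subsumed by the fence through I3), which the paper takes as immediate. But you drop the paper's treatment of durable-but-unacknowledged events: a crash after a batch's \texttt{fsync} and before its replies, or after complete records of $B_{j+1}$ persist. There the paper argues that clients time out and resubmit, and idempotence yields a re-hold or a duplicate refusal, never a second release. The lemma's literal claims do not need this, but it is why replaying records no client was told about is harmless to P2--P4 across restart.

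Third, the clause you fold into ``the standard \texttt{fsync} contract'' (that a crash leaves the unsynced bytes as a byte-prefix) is not part of that contract. The contract says nothing about bytes not covered by a returned call. Your ``at most one partial record'' conclusion for a multi-record batch rests on this ordered-append property. A persisted hole followed by complete records would parse as mid-log corruption and abort startup, which is not the torn-final-record case. The paper's sketch leans on the same property silently, so state it as an explicit filesystem assumption rather than as \texttt{fsync}'s guarantee.
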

\begin{IEEEproof}[Proof sketch]
Acknowledged $\Rightarrow$ durable: an event is acknowledged only after the
\texttt{fsync} covering its record returns, so every acknowledged event is
on stable storage at acknowledgement time, and the acknowledged set is a
prefix of the durable log (events are appended and fsynced in arrival
order). A crash between an \texttt{fsync} and some of its batch's replies
loses only \emph{acknowledgements}, not durability: the events are on disk,
their clients time out undecided and resubmit, and the protocol's
idempotence makes the resubmission re-hold or refuse as a
duplicate---never a second release. A crash mid-append can tear at most
the final record (the only record not yet covered by a returned
\texttt{fsync}), which recovery already skips; a torn record is by
construction unacknowledged, so skipping it forgets nothing a client was
told. No interleaving fabricates an acknowledged event that is not
durable, which is the failure that would break P2--P4 across restart.
\end{IEEEproof}

\noindent Crash atomicity under batching is thus the same discipline as the
unbatched server's, stated once as the lemma above rather than re-argued
per mode. Recovery is linear in the surviving log: replaying the
accounting WAL of Section~\ref{sec:load} completes in under two seconds
before the listener opens, and \textsc{CloseRun} compaction bounds the log in
steady state.
Linearity holds at ten times that scale: a synthesized $1{,}706{,}000$-event,
$107$\,MB log in the exact durable-event format ($1.7$\,M releases across
$2{,}000$ runs plus $6{,}000$ fences) replays on a second container in
$1.7$\,s ($3.3$\,s cold; page-cache dominates), after which a replayed identity
refuses as duplicate, a fenced run's late submission refuses, and fresh work
releases (\texttt{e2e\_recovery10x.py}). Held-but-undecided effects are
deliberately not durable: losing a hold is conservative, and a resubmission
re-holds.
\emph{Reachability:} if the gate is unreachable the wrapper must refuse rather
than proceed---failing open would convert an availability incident into a
bypass; gate downtime pausing effects is the correct trade for irreversible
actions. Executed (\texttt{e2e\_partition.py}): with the gate never started the
submission refuses and the effect body does not run; a \texttt{SIGKILL}
mid-session turns the next submission into a refusal; and on WAL restart fresh
work releases while the pre-kill release's replay still refuses as a
duplicate---four checks, all from the client's side of the partition.
\emph{Bounded state (implemented):} released and rejected identities would
otherwise grow monotonically; a \textsc{CloseRun} operation marks a run
terminal and compacts its per-identity state, so steady-state memory is bounded
by \emph{active} runs rather than total historical effects. Compaction is
sound because the run-level fence then refuses any late submission from the
closed run, so a replayed effect cannot slip through as a fresh release.
Because every piece of gate state is keyed by run---identities by
$(\mathit{run},\mathit{key})$, fences by run---runs never share state, so a
deployment can shard runs across independent gate instances with zero
coordination while preserving all four properties; single-gate serialization
is a per-run necessity, not a scaling ceiling. \emph{Decision authenticity:}
\textsc{Decide} must bind to an authenticated approver channel or a network
attacker reaching the gate could forge an approval. The reference gate requires
each decision to carry
$\mathrm{HMAC\text{-}SHA256}(\mathit{secret},\,\mathit{run}\,\|\,\mathit{key}\,\|\,\mathit{approved})$,
verified in constant time \emph{before} any state change, so a forged or absent
tag refuses without touching state; an attack test (\texttt{e2e\_auth.py})
confirms forged approve and forged reject both refuse while the secret-holder's
valid tag releases, and the HMAC passes the RFC~4231 vector. The channel
authenticates the \emph{decision}, not the approver's integrity: a compromised
approver (subverted UI, stolen secret, coerced operator) can issue a
\emph{valid} approval, as in any human-in-the-loop system---the same residual
trust every approval mechanism carries, out of scope here alongside the
Byzantine submitter. Key distribution and secret rotation are ordinary secret
management; multi-tenant isolation reduces to a tenant identifier in
$\mathit{run}$, which per-run sharding already respects.
\emph{Rejection absorption:} a
mediated wrapper returns the gate's verdict as an ordinary tool result and acts
only on \textsf{release}, so a \textsc{Refused-\{Duplicate,Cancelled,Rejected\}}
is a normal ``effect not performed'' value, not a raised exception; the
executed integrations (Section~\ref{sec:eval}, \texttt{e2e/e2e\_langgraph.py})
resume as if the tool were a no-op, none crash-looping across five
verdict-identical repetitions per framework. A deployment preferring a hard
failure can raise inside its own wrapper; either way the effect stays
un-externalized. The gate does not repair a framework whose \emph{own} error
model mishandles a tool exception---orthogonal to admission.
\emph{Protocol assumptions:} the reference transport is
per-connection TCP (ordered, non-duplicating per stream); across connections
the gate's mutex serializes operations, exactly the interleavings the TLC and
TLAPS tiers model. Idempotence makes message-level duplication or replay safe
(a resubmission re-holds or refuses; a repeated approval finds the identity
released and refuses; a repeated rejection is sticky). Byzantine
\emph{submitters} stay out of scope (a compromised wrapper can decline to
submit---the complete-mediation contract); decision forgery is defeated by the
authenticated channel above. The core holds no lock while waiting on a client
and a hold is pure state, so the gate cannot deadlock; held-effect starvation
is the stated liveness non-goal. The admission log doubles as a causally
ordered audit trail. These obligations are tracked
in the artifact and are orthogonal to the semantics evaluated in
Section~\ref{sec:eval}.

% =====================================================================
\begin{table*}[!tp]
\centering
\caption{Executed-experiment inventory. Every row corresponds to a committed
artifact log; \sys{} results are reproduced from those logs. Model-exposure
rows report exposures over runs that called the gated tool
(Table~\ref{tab:exposure}); framework rows report violation classes repaired
in situ. ``Native/direct'' marks a second serving path on the vendor's own
API. FW-A: LangGraph; FW-C: Microsoft Agent Framework; FW-D: OpenAI Agents
SDK.}
\label{tab:inventory}
\renewcommand{\arraystretch}{1.15}
\footnotesize
\begin{tabularx}{\textwidth}{@{}p{0.25\textwidth}p{0.095\textwidth}X>{\scriptsize}p{0.235\textwidth}@{}}
\toprule
\textbf{Experiment} & \textbf{Scale} & \textbf{Result} & \textbf{Receipt}\\
\midrule
\multicolumn{4}{@{}l}{\emph{Framework control-primitive measurement and repair (Sec.~\ref{sec:measurement},~\ref{sec:eval})}}\\
Cross-framework probe matrix & 6 frameworks & sibling leak + replay/timeout divergences reproduced & \texttt{probes/}, Table~\ref{tab:matrix}\\
\quad Randomized structural sweep (FW-A) & 1{,}000 graphs & conc-same 577/577 leak; later-wave 0/331; descendant 0/363 & \seqsplit{\texttt{randgraph/results\_fwa.jsonl}}\\
\quad FW-A end-to-end repair & 3 classes & 3/3 repaired in situ (real \textsf{langgraph}) & \texttt{e2e\_langgraph.py}\\
\quad FW-B end-to-end repair & 2 classes & 2/2 repaired (replay natively clean) & \texttt{e2e\_llamaindex.py}\\
\quad FW-C end-to-end repair & 3 classes & 3/3 repaired (replay natively clean) & \texttt{e2e\_msaf.txt}\\
\quad FW-D end-to-end repair & 2 classes & 2/2 repaired (scripted model) & \texttt{e2e\_openai\_agents.txt}\\
\quad FW-F end-to-end repair & 3 classes & 3/3 repaired in situ (real \textsf{@langchain/langgraph}, Node) & \seqsplit{\texttt{e2e\_langgraph\_js.txt}}\\
\quad FW-E cancel/timeout repair & 2 classes & 2/2 violated axes fenced in situ (real \textsf{crewai}) & \texttt{e2e\_crewai.txt}\\
Decision-channel authentication & --- & forged decisions refused; only secret-holder releases & \texttt{e2e\_auth.txt}\\
Bounded-hold TTL watchdog & 8 checks & 8/8 passed (HMAC watchdog, $\text{TTL}{=}0.5$\,s) & \texttt{e2e\_ttl.txt}\\
Structural mediation $\times$ frameworks & 5 frameworks & full violated-axis sets pass inside loopback-only netns; unwrapped egress refused by kernel & \makecell[tl]{\texttt{e2e\_structural\_all.txt}\\\texttt{e2e\_structural\_langgraph.txt}}\\
Non-network confinement (seccomp $+$ Landlock) & 3 channels $+$ path arm & unconfined $3/3$ escape, confined $0/3$ (fs/AF\_UNIX/\texttt{memfd}); Landlock ABI~4 workdir write allowed, shared/outside writes kernel-refused & \makecell[tl]{\texttt{confine\_fs.txt}\\\texttt{landlock\_workdir.txt}}\\
Adversarial injection (4 models, \textsf{FW-A}) & 400 live runs & 121/121 emitting runs leak unmediated; 0/400 mediated & \makecell[tl]{\texttt{e2e\_injection\_langgraph.txt}\\\texttt{injection\_live\_runs.txt}}\\
Experiment A: live end-to-end leak (Sec.~\ref{sec:expA}) & 4 arms, $2{\times}1{,}200$ runs & 215/1{,}200 unmediated leaks; 0/1{,}200 mediated; $P(\mathrm{leak}{\mid}\mathrm{em.}){=}1.00$ per emitting arm & \texttt{expA\_*.jsonl}\\
Real-effect demonstration (Sec.~\ref{sec:expA}) & \textsf{FW-A} fan-out $+$ \texttt{interrupt()}, real \texttt{POST} & 1 \texttt{POST} at endpoint $512$\,ms pre-reject; mediated $0$, \texttt{refused\_rejected} & \texttt{webhook\_leak\_demo.txt}\\
Experiment A on \textsf{FW-B} (second runtime) & \textsf{gpt-4o}, 300 runs & 121/300 unmediated leaks; 0/300 mediated; $P(\mathrm{leak}{\mid}\mathrm{em.}){=}1.00$; different scheduler & \texttt{expA\_fwb\_*.jsonl}\\
Experiment A on \textsf{FW-F} (non-Python: \textsf{LangGraph.js}/Node) & \textsf{gpt-4o}, 300 runs & 143/300 unmediated leaks; 0/300 mediated; $P(\mathrm{leak}{\mid}\mathrm{em.}){=}143/143$; live cross-language arm & \seqsplit{\texttt{expA\_fwf\_openai\_gpt4o.jsonl}}\\
Approval-pause sweep (keyless, \textsf{FW-A}$+$\textsf{FW-B}) & 5 pauses $\times$ 20 & unmediated leak pause-invariant on both; mediated 0 at every pause & \makecell[tl]{\texttt{pause\_sweep.txt}\\\texttt{expA\_fwb\_mock\_pause*.jsonl}}\\
\midrule
\multicolumn{4}{@{}l}{\emph{Durable-execution contrast (Sec.~\ref{sec:temporal})}}\\
Temporal probe battery (server 1.31.2) & 4 predicates & replay clean ($1{\to}1$ across forced full-history replay); sibling, no-heartbeat cancel, and timeout behavioral bits reproduce; heartbeating cancel clean & \seqsplit{\texttt{temporal\_probes.txt}}\\
Gate composition on Temporal & repaired T1 twin & held during live Signal pause (0 effects); reject sticky; zombie resubmit refused; control released & \seqsplit{\texttt{temporal\_gated.txt}}\\
\midrule
\multicolumn{4}{@{}l}{\emph{Model-exposure study, 10 tasks (Sec.~\ref{sec:exposure}, Table~\ref{tab:exposure})}}\\
GPT-4o (OpenRouter) & 1000 & exposure 0.14 [0.12, 0.17] & \texttt{or\_gpt4o.jsonl}\\
Claude Sonnet 4.6 (OpenRouter) & 1000 & exposure 0.04 [0.03, 0.05] & \texttt{or\_claude.jsonl}\\
GPT-4o, Claude (native pilot) & $2{\times}250$ & anchors the OpenRouter rates & \seqsplit{\texttt{exposure\_\{openai,anthropic\}\_*.jsonl}}\\
Gemini 2.5 Flash (OpenRouter) & 1000 & exposure 0.00 [0.00, 0.00] (confounded) & \texttt{or\_gemini.jsonl}\\
\quad Gemini 3.5 Flash (native) & 1000 & exposure 0.02 [0.01, 0.03]; de-confounds & \seqsplit{\texttt{exposure\_gemini\_native\_*\_n100.jsonl}}\\
DeepSeek V3.2 (OpenRouter) & 1000 & exposure 0.00 [0.00, 0.01] (confounded) & \texttt{or\_deepseek.jsonl}\\
\quad DeepSeek V4-flash (native) & 1000 & exposure 0.00 [0.00, 0.00]; sequential disposition & \seqsplit{\texttt{exposure\_deepseek\_\-v4flash\_native\_n100.jsonl}}\\
Llama 3.3 70B (OpenRouter) & 1000 & exposure 0.00 [0.00, 0.01] (confounded) & \texttt{or\_llama.jsonl}\\
\quad Llama 3.3 70B (Together, provider-direct) & 883 & exposure 0.08 [0.07, 0.10]; 0.90 on cleanup; de-confounds & \seqsplit{\texttt{exposure\_llama\_\-together\_3.3-70b\_n1000.jsonl}}\\
Naturalistic tasks ($\tau$-bench retail/airline; Sec.~\ref{sec:limitations}) & 431 tool turns, 2 models & 0/71 gated batches co-emit a benign read; GPT-4o batches consequential writes in 9/52 & \seqsplit{\texttt{taubench\_exposure\_*.jsonl}}\\
\quad Multi-effect prevalence ($\tau$-bench gold solutions) & 165 tasks & retail 45/115, airline 15/50 require $\geq$2 consequential writes & \texttt{prevalence/}\\
\midrule
\multicolumn{4}{@{}l}{\emph{Formal verification (Sec.~\ref{sec:impl}, Table~\ref{tab:formal})}}\\
Verus (sequential model) & 11 items & 11 verified, 0 errors & \texttt{formal/verus.txt}\\
TLA+/TLC (concurrent, exhaustive) & $2{\times}2$, $3{\times}3$, $4{\times}3$ & 729 / 804{,}357 / 74{,}805{,}201 distinct states, 0 violations & \seqsplit{\texttt{tlc\_\{2x2,3x3,4x3\}.txt}}\\
TLAPS (unbounded induction) & 68 obligations & 68/68 proved & \texttt{tlapm.txt}\\
Loom (concurrent Rust) & 3 models & 0 failures on the deployed \texttt{Gate} & \texttt{evidence/loom.txt}\\
Differential conformance & $1.2{\times}10^{7}$ ops & model $\equiv$ code; surfaced 2 defects & \texttt{conformance.txt}\\
Exhaustive conformance ($2{\times}2$) & 729 states $\times$ 20 & complete reachable transition relation; model $\equiv$ code, 0 divergences & \seqsplit{\texttt{exhaustive\_conformance.txt}}\\
Mutation adequacy (Sec.~\ref{sec:failure-model}) & 5 operators, 2 harnesses & 4/4 sequential mutants caught by exhaustive conformance; decide-after-hold fence (concurrency-only) caught by Loom & \seqsplit{\texttt{mutation\_score.txt}}\\
Protocol-boundary fuzz (Sec.~\ref{sec:failure-model}) & $1.8{\times}10^{5}$ inputs, 8 classes & 0 fail-open verdicts; canary state intact; server alive throughout & \texttt{fuzz\_boundary.txt}\\
\midrule
\multicolumn{4}{@{}l}{\emph{Durability, recovery, and load (Sec.~\ref{sec:load})}}\\
Crash-recovery replay & $1.71{\times}10^{6}$ events & 107\,MB replayed in 1.7\,s; 3/3 state checks & \texttt{e2e\_recovery10x.txt}\\
Fail-closed under partition/kill & 4 checks & 4/4 (down, SIGKILL mid-session, WAL restart) & \texttt{e2e\_partition.txt}\\
Replicated failover (3-voter Raft) & 1 leader kill & new leader in 1.9\,s; pre-crash fence refuses duplicate on new leader; fresh admits & \texttt{raft\_failover.txt}\\
Completion A/B ($\tau$-bench retail) & 20 episodes $\times$ 2 arms & 0/20 fail-closed on gated writes; ${\sim}1$\,ms/write; latency unchanged (37.9$\to$37.3\,s) & \seqsplit{\texttt{p6\_completion\_retail\_gpt4o.jsonl}}\\
Replicated throughput (3-voter Raft) & $C{=}1$--$32$ & 1{,}019--1{,}829 adm/s; exceeds WAL-GC at low $C$; followers no-lag & \texttt{concurrent\_raft.txt}, \texttt{pipelined\_raft.txt}\\
Emulated-WAN Raft sweep (\texttt{netem}) & RTT $0$/$10$\,ms, $C{=}1$--$8$ & $2{,}244{\to}26$ adm/s at $C{=}1$; p50 $0.32{\to}21.5$\,ms ${\approx}2{\times}$RTT per commit & \texttt{netem\_raft.txt}\\
Throughput (mem / WAL / WAL-GC) & $C{=}1$--$128$ & Table~\ref{tab:load}; WAL-GC $11{,}984$ adm/s at $C{=}128$ & \seqsplit{\texttt{concurrent\_mem.txt}}\\
\bottomrule
\end{tabularx}
\end{table*}

\section{Evaluation}
\label{sec:eval}

Before the detailed results, Table~\ref{tab:inventory} inventories every
experiment executed for this paper, each with the artifact receipt that backs
it, so a reader can map any quantitative claim to the log that produced it.
All numbers below are recomputed from these raw logs, not transcribed from
prior runs.

\subsection{Does the gate suppress the violations?}
\noindent When we say the gate
``closes'' or ``repairs'' a violation we mean it \emph{suppresses the mediated
effect}---holds, fences, or dedups it so it never externalizes undecided---not
that it eliminates the framework's underlying race, which continues exactly as
Section~\ref{sec:measurement} measures, now moving a held token rather than an
executed effect. The claim is always over mediated effects under the stated
contract.
We replay the \textsf{FW-A} violation scenarios---plus the cross-run key-reuse
counterexample---routing every side effect through the live \sys{} process over
its socket interface. Table~\ref{tab:e2e} reports the executed result: every
violation class is blocked, cross-run key reuse is correctly scoped rather than
collaterally refused, and a legitimate, unique, approved effect on a live run
is released---so the gate is not simply refusing everything. The
false-positive question---does the gate ever hold or refuse a legitimate
effect it should release?---is answered by the same executed evidence rather
than asserted: every approve-path scenario across the framework integrations
released its effect in five verdict-identical repetitions each
(\texttt{evidence/}); the concurrent-load sweeps of Section~\ref{sec:load}
run entirely on the legitimate unique-release path, where every submission
released and the durable runs' WAL accounting confirms all
acknowledged admissions persisted with none spuriously refused; P3's
cross-run scoping is checked explicitly (a reused key on a \emph{different}
run releases), and the Loom tier asserts under every explored interleaving
that cross-run key reuse never collaterally refuses
(Section~\ref{sec:failure-model}). A refusal, when it does occur, is by
construction one of the three named verdicts a mediated wrapper receives as
an ordinary result (Section~\ref{sec:failure-model}), not a silent drop. A sixth scenario
kills the WAL-backed gate mid-run (\texttt{SIGKILL}) and restarts it: the
replay and cancellation fences survive while fresh work still flows.
Finally, an integration demonstration wires the gate into a \emph{real}
\textsf{FW-A} agent through a twenty-line tool wrapper: during a genuine
\texttt{interrupt()} pause the sibling's mediated effect is held and a
rejection prevents it entirely; on resume the node body re-executes exactly
as documented, but the wrapper's resubmission is refused as a duplicate, so
the effect runs exactly once; and a worker thread surviving cancellation is
\begin{table}[!t]
\centering
\caption{Integration cost per framework, measured from the executed harnesses.
The mediation \emph{wrapper} (a socket \texttt{GateClient}: \texttt{submit},
perform-on-\texttt{release}, \texttt{decide}, \texttt{cancel}) is
framework-invariant---22 lines of Python, byte-identical for \textsf{FW-B}/%
\textsf{FW-D} and cosmetically different (type hints, a default argument) for
\textsf{FW-A}; the \textsf{FW-F} wrapper exposes the identical four-call
surface in 42 lines of JavaScript, the delta being the explicit socket
line-buffering Node requires. What varies per framework is only the
\emph{injection point}: where a consequential tool call is routed through
\texttt{gate.mediated\_effect(\dots)} inside that runtime's tool model. No
framework's source is modified.}
\label{tab:integration}
\renewcommand{\arraystretch}{1.2}
\footnotesize
\begin{tabularx}{\columnwidth}{@{}llcX@{}}
\toprule
\textbf{Fw.} & \textbf{Execution model} & \textbf{Wrapper} & \textbf{Injection point}\\
 & & \textbf{LOC} & \\
\midrule
FW-A & Pregel/BSP nodes    & 22 & effect call inside a graph node\\
FW-B & event workflow      & 22 & tool function body\\
FW-C & message-passing     & 22 & message handler\\
FW-D & parallel tool calls & 22 & SDK tool callable\\
FW-F & Pregel/BSP (Node)   & 42 & effect call inside a graph node\\
\bottomrule
\end{tabularx}
\end{table}

fenced. Complete mediation required no framework-source modification---the
change is the deployment's tool definitions at the injection point,
Table~\ref{tab:integration} (artifact:
\texttt{e2e\_langgraph.py}). A second integration
(\texttt{e2e\_llamaindex.py}) repairs \textsf{FW-B}'s independently
designed \emph{event-driven} runtime through the same ${\sim}$20-line
wrapper shape: the mediated sibling is held during the
\texttt{InputRequiredEvent} pause and a rejection yields zero effects, and
a worker thread surviving the native \texttt{cancel\_run()} is fenced
(2/2 of \textsf{FW-B}'s violated axes; replay is natively clean there, so
there is nothing to repair on that axis). Two further integrations complete the execution-model set.
\texttt{e2e\_msaf.py} repairs \textsf{FW-C}'s message-passing fan-out
runtime (3/3 violated axes: the mediated sibling is held during the
\texttt{request\_info} pause and a rejection yields zero effects; workers
surviving a host-level cancel and a host-level deadline are both fenced;
replay is natively clean, including across a checkpoint restore).
\texttt{e2e\_openai\_agents.py} repairs \textsf{FW-D}'s parallel tool calls
within a single model turn, driven by the probe suite's scripted model
(keyless; 2/2 violated axes: the mediated sibling emitted alongside the
approval-gated call is held while the run pauses with
\texttt{interruptions}, and rejecting both yields zero effects; a sync tool
on the SDK's default worker-thread path surviving cancellation is fenced;
replay is natively clean and the native sync-tool timeout is refused at
construction). A fifth integration carries the repair across the language-runtime
boundary. \texttt{e2e\_langgraph\_js.mjs} (artifact, \texttt{probes-js/})
drives the real \textsf{FW-F} runtime (\texttt{@langchain/langgraph} 1.4.7
on Node 22) with the repaired twins of its three measured violations: the
mediated sibling emitted alongside a genuine \texttt{interrupt()} pause is
held and a rejection yields zero effects; on resume the node body
re-executes---the measured replay---and the wrapper's resubmission refuses
as a duplicate, so the effect runs exactly once; and the \textsf{AbortSignal}
orphan, the one cancellation violation JavaScript's uninterruptible promises
make unavoidable at the runtime level, is fenced---the orphaned promise's
late submission meets \texttt{refused\_cancelled} with zero effects (3/3,
five repetitions verdict-identical; transcript
\texttt{evidence/e2e\_langgraph\_js.txt}). The gate binary, protocol, and
wrapper surface are byte-for-byte the ones the Python integrations use; only
the client language changes, which is the point---the same external barrier
repairs the same design's leak in both of its runtimes.
The four Python integrations use the same ${\sim}$20-line wrapper
shape, required no framework-source modification (the change is confined to
the deployment's tool definitions at each runtime's injection point,
Table~\ref{tab:integration}), and reproduced verdict-identically
across five repetitions each in that environment
(Table~\ref{tab:integration} quantifies the per-framework cost: the Python wrapper is
22 lines and framework-invariant, and only the injection point differs; the
JavaScript wrapper exposes the identical four-call surface in 42 lines, the
difference being the explicit socket line-buffering Node requires). So that
the 22-line figure is not read as the whole integration, we account for the
rest from the committed harnesses themselves: each executed integration
totals 198--236 lines \emph{including} its scenario drivers and assertions,
and within each, the lines that name the framework's own surface---imports
plus the node, handler, or tool definitions that place
\texttt{mediated\_effect} at that runtime's injection point---number roughly
17--33; the remainder is scenario setup shared in shape across frameworks.
Refusal handling costs no code beyond the wrapper's perform-on-release check
(a refusal returns as an ordinary ``effect not performed'' result;
Section~\ref{sec:failure-model}), and decision authentication is one HMAC
argument on the decide call. What no table amortizes is the mediation
contract itself: a production deployment must route \emph{every}
side-effecting tool through the wrapper, and locating those tools across a
real codebase is the true integration cost---the reason the mediation linter
and the structural routes exist (Section~\ref{sec:design}). The repair is thus
demonstrated end-to-end on all six frameworks, covering all four execution
models and both language runtimes: the five with a pre-execution gate on
their full violated-axis set, and \textsf{FW-E}---which ships no such gate,
so has no approval barrier to repair---on its two approval-independent
violations. \texttt{e2e\_crewai.py} drives the real \textsf{FW-E} crew
(\texttt{crewai} 1.15.1, the probe's own scripted LLM) with a gate-mediated
effect tool: the async cancellation orphan (probe C3) and the native
\texttt{max\_execution\_time} zombie (probe C4) are both fenced---the tool's
late effect submission meets \texttt{refused\_cancelled} and never
executes---with the timeout arm additionally reproducing \textsf{FW-E}'s
note-c behavior, the caller blocked past its deadline while the fenced
effect is nonetheless prevented (2/2 violated axes, five repetitions
verdict-identical; transcript \texttt{evidence/e2e\_crewai.txt}).
\textsf{FW-E} has no sibling-leak or
replay cell to repair, so ``all six'' refers to every framework's every
\emph{violated} axis, not to a uniform three-class repair. The six-scenario replay and the first two
integration demonstrations executed identically on the additional container
environment during revision (artifact, \texttt{evidence/}). These
demonstrations drive the frameworks with deterministic and keyless probes
\emph{by design}: the property under test is the gate's enforcement given the
leak-triggering plan shape, and a deterministic driver isolates that
enforcement from model stochasticity, so a failure is unambiguously the
gate's rather than a sampling artifact. That the same barrier holds when the
plan shape instead comes from a \emph{real} model reading hostile content is
shown separately in Section~\ref{sec:injection}, where \textsf{gpt-4o} under
an injected ticket emits the gated effect and \sys{} holds and refuses it for
zero effects---the enforcement path is identical, only the source of the plan
shape differs.

\noindent This integration cost is reduced further, not merely reported: the
socket \texttt{GateClient} ships as an installable package
(\texttt{pip install soundgate}) with a native (PyO3) binding and a
byte-compatible pure-Python client, so a deployment adds a dependency and the
per-effect wrapper above rather than vendoring any code. The external-client
mode keeps the gate a separate process, preserving the environment-external
reference-monitor property of Section~\ref{sec:design}; an in-process embedding
is also provided, trading that isolation for lower latency, and is documented as
such.

\begin{table}[!t]
\centering
\caption{End-to-end replay through the live \sys{} process. Every prior
violation is blocked \emph{for mediated effects}---the complete-mediation
contract of Section~\ref{sec:design}, in force here as everywhere---and a
legitimate approved effect is released.}
\label{tab:e2e}
\renewcommand{\arraystretch}{1.25}
\begin{tabularx}{\columnwidth}{@{}Xl@{}}
\toprule
\textbf{Scenario} & \textbf{Outcome via \sys{}}\\
\midrule
Parallel approval + reject       & held, held $\rightarrow$ both refused (\emph{blocked})\\
Resume replay                    & release then refused-duplicate (\emph{blocked})\\
Cancel/timeout zombie            & refused-cancelled (\emph{blocked})\\
Cross-run key reuse              & other run released; own replay refused (\emph{scoped})\\
Crash $+$ restart (WAL)          & dedup and cancel fences survive (\emph{durable})\\
Legitimate approved effect       & held $\rightarrow$ release (\emph{correct})\\
\bottomrule
\end{tabularx}
\end{table}

\noindent\textbf{Does the gate break legitimate completions?}
The scenarios above show the gate blocks what it should; the complementary
question is whether inserting it costs legitimate task success on work it was
never meant to stop. We measure this on a third-party benchmark we did not
author: $\tau$-bench retail~\cite{taubench24}, run to completion under two arms
that differ only in the gate. In the \emph{nogate} arm tools execute under
$\tau$-bench's own semantics; in the \emph{gate} arm every consequential (write)
tool is routed through the live \sys{} with immediate approval---%
$\text{submit}\rightarrow\text{approve}\rightarrow\text{release}$---before it
executes, so the gate is transparent to a legitimate write by construction.
Across twenty episodes per arm at temperature~$0$ (\textsf{gpt-4o} agent,
\textsf{gpt-4o-mini} user simulator; \texttt{results/p6\_completion\_\-retail\_\-gpt4o.jsonl}),
all twenty writes the gate arm submitted released and executed, with
\emph{zero} fail-closed events: the gate never once refused a legitimate
effect. The admission cost was a median $1.1$\,ms round-trip per write
(maximum $3.3$\,ms), invisible against $30$--$40$\,s episodes---median episode
latency was statistically unchanged ($37.9$\,s nogate versus $37.3$\,s gate).
Task reward differed on eight of twenty episodes, but this is model
nondeterminism, not gate damage: three of the eight flips went the \emph{other}
way (the gate arm solved a task nogate failed), which an approve-only gate
cannot cause, and \texttt{fail\_closed}${=}0$ means every gated write executed
identically to nogate---so the variance is \textsf{gpt-4o}'s own
temperature-$0$ nondeterminism compounded by the stochastic user simulator, not
a refusal. The robust findings---no legitimate effect blocked, ${\sim}1$\,ms
over multi-second episodes---answer deployability directly; we report $n{=}20$
as a first signal, noting a precise parity figure needs more episodes to
average out that upstream nondeterminism.

\subsection{Overhead}
\sys{} adds one local round-trip per effect. Isolating the gate decision
itself (Criterion, in-process, on the reference workstation%
\footnote{Reported figures are from a single workstation
(\texttt{\CPUMODEL}, Rust~1.95, Linux); absolute latencies are
hardware-dependent, but the \emph{ordering} below is the stable, portable
claim and reproduces on a second, single-vCPU container environment
(artifact, \texttt{evidence/}).}): a unique release
costs ${\sim}290$\,ns (${\sim}480$\,ns against a gate already holding $10^5$
identities), a full held-then-approve round ${\sim}380$\,ns, and the refusal
paths are the cheapest of all---duplicate hit ${\sim}80$\,ns, cancellation
fence ${\sim}36$\,ns. The safety checks (fence, dedup) are thus the fastest
operations the gate performs, an order of magnitude below a release. Driven end-to-end from a single Python client over loopback TCP (artifact:
\texttt{bench\_socket.py}, $2{\times}10^4$ sequential round-trips), a full
admission---request encoding, round-trip, and decode---costs ${\sim}\SOCKETMED$
at the median (mean $65$, p99 $202\,\mu\mathrm{s}$), two to three orders of
magnitude above the in-process decision. Admission latency is therefore set by
transport and client marshalling, not by the admission logic; a lower-overhead
client or in-process embedding would approach the Criterion figures above. The
one added round-trip is acceptable for the irreversible actions the gate exists
to govern (network calls, payment captures, writes) that already cost
milliseconds, against which ${\sim}53\,\mu\mathrm{s}$ is well under one percent;
we do not claim it negligible for a sub-millisecond local write, which is rarely
the kind of action a human-approval barrier gates. As arithmetic rather than
adjectives: in-memory the added share is
${\sim}53\,\mu\mathrm{s}/(53\,\mu\mathrm{s}+L)$---about $0.03\%$ of a
$200$\,ms payment capture, $1\%$ of a $5$\,ms write, material only when $L$
approaches the transport cost. Durable mode replaces the constant with the
group-commit flush window (p50 ${\sim}7$--$11$\,ms across the sweep,
Table~\ref{tab:load}), negligible against a seconds-to-minutes human decision
and material against sub-$100$\,ms effects---the same crossover one order
higher, set by the storage device. Behavior under concurrent
load, including durable (WAL) mode, is measured next
(Section~\ref{sec:load}).

\subsection{Throughput under concurrent load}
\label{sec:load}
The deployment-relevant mode is
\textsf{WAL-GC}---durable, group-committed---whose rows anchor every
operational claim below; the \textsf{Mem} rows bound the admission logic
itself. The gate serializes admissions per instance (a mutex around the core; the
server is thread-per-connection), so we measure it as a serialization point:
$C$ closed-loop clients, each on its own connection, each issuing sequential
unique-key \texttt{submit} requests on the release path---the state-growing
common case. Table~\ref{tab:load} reports the sweep on the reference
workstation (\LOADENV{}); the same sweep on a deliberately adversarial
single-vCPU container---server and all $C$ client threads timeslicing one
core---establishes a floor.\footnote{Artifact: \texttt{concurrent\_bench.rs};
workstation outputs \texttt{concurrent\_\{mem,wal,walgc\}.txt}. On the adversarial
single-vCPU floor, absolute rates fall (durable throughput tracks the storage
device's flush latency, which differs ${\sim}10{\times}$ across our
environments), while the orderings and the saturation argument in the text are
unchanged---in-memory admission stays far above durable, and group commit
scales monotonically with concurrency in both settings.}

\begin{table}[!t]
\centering
\caption{Closed-loop concurrent admission, unique-release path, on the
reference workstation (\CPUMODEL{}). Mem${=}$in-memory gate;
WAL${=}$fsync-before-acknowledge per state-changing op;
WAL-GC${=}$the same discipline with group commit (batched fsync). Latencies
are per-admission, microseconds; single-vCPU container floor in the
footnote.}
\label{tab:load}
\renewcommand{\arraystretch}{1.2}
\setlength{\tabcolsep}{3.2pt}
\scriptsize
\begin{tabular}{@{}lrrrrrr@{}}
\toprule
\textbf{Mode} & \textbf{C} & \textbf{adm/s} & \textbf{p50} & \textbf{p95} & \textbf{p99} & \textbf{p99.9}\\
\midrule
Mem & 1   & 16{,}889  & 50.9   & 84.5   & 163.4   & 237\\
Mem & 8   & 204{,}940 & 33.4   & 46.3   & 62.0    & 107\\
Mem & 32  & 374{,}929 & 67.5   & 149.5  & 259.1   & 610\\
Mem & 128 & 325{,}548 & 329.1  & 579.2  & 1{,}139.9 & 3{,}914\\
WAL & 1   & 130       & 6{,}264.4 & 10{,}266.8 & 10{,}606.0 & 15{,}646\\
WAL & 32  & 1{,}295   & 19{,}654.8 & 20{,}255.7 & 20{,}742.3 & 28{,}505\\
WAL-GC & 1   & 124     & 9{,}701.9 & 10{,}282.3 & 10{,}630.3 & 15{,}538\\
WAL-GC & 8   & 715     & 10{,}537.1 & 11{,}145.3 & 12{,}013.6 & 18{,}001\\
WAL-GC & 32  & 2{,}397 & 10{,}825.0 & 11{,}622.6 & 14{,}060.8 & 22{,}630\\
WAL-GC & 128 & 11{,}984 & 6{,}944.8 & 13{,}151.6 & 17{,}176.9 & 35{,}051\\
\midrule
Raft-3 & 1   & 1{,}019  & 415.6    & 4{,}435.7  & 6{,}987.0   & 9{,}417\\
Raft-3 & 8   & 1{,}829  & 974.0    & 18{,}843.2 & 27{,}976.3  & 39{,}310\\
Raft-3 & 32  & 1{,}630  & 2{,}090.5 & 93{,}495.6 & 157{,}513.2 & 256{,}130\\
\bottomrule
\end{tabular}
\end{table}

Three readings. \emph{First}, throughput scales with cores until the mutex
and thread oversubscription bite: ${\sim}205$k adm/s at $C{=}8$, a peak of
${\sim}375$k at $C{=}32$, and ${\sim}326$k at $C{=}128$ (past the machine's
16 hardware threads). The $C{=}1$ point (${\sim}17$k adm/s, p50 $51\,\mu$s)
is bounded by cross-core wakeup latency, not the gate---it matches the
sequential \texttt{bench\_socket} median ($\SOCKETMED$) on the same machine,
cross-validating the two harnesses. Since every admitted effect corresponds
to an irreversible external action already costing milliseconds, a single
gate instance is not the bottleneck until effect rates no single agent
deployment reaches---and runs shard across instances with zero coordination
(Section~\ref{sec:failure-model}). \emph{Second}, durable mode is flush-bound, and whether concurrency helps
the \emph{unbatched} server is an accident of the storage stack: on a
deliberately adversarial single-vCPU container the unbatched server shows a
pure inversion (clients only queue behind the lock-per-fsync, so added
concurrency does not raise throughput), while the workstation's NVMe/ext4
coalesces commits below \texttt{fsync} ($130{\to}1{,}295$ adm/s at
$C{=}1{\to}32$)---batching the application neither requested nor controls.
The reference server's default WAL writer makes the batching deliberate
(group commit~\cite{dewitt84}: one writer thread, ${\leq}512$-event batches,
each reply strictly after the fsync covering its event; the unbatched build
behind Table~\ref{tab:load}'s WAL rows is pinned separately): throughput then
scales monotonically with concurrency in \emph{both} environments---on the
workstation $124/715/2{,}397/11{,}984$ adm/s at $C{=}1/8/32/128$
(Table~\ref{tab:load}), and monotonically on the container floor as
well---with WAL-line accounting confirming every acknowledged event
persisted. A batch-size histogram (an env-gated stderr
counter; the write/fsync/ack discipline is untouched) shows the coalescing
is demand-driven rather than cap-driven: the mean achieved batch tracks
offered concurrency---$5.3$ events per flush at $C{=}8$, $26.6$ at $C{=}32$,
$116.4$ at $C{=}128$---and the 512-event cap is never reached, so what
bounds batching is the reply-after-covering-fsync discipline itself
(artifact, \texttt{evidence/walgc\_batch\_hist.txt}). Absolute rates track
the device's flush cost, which varied ${\sim}20{\times}$ across our
environments and sessions; the portable claims are that group commit
dominates the unbatched path at every $C{\geq}8$, scales monotonically, and
preserves fsync-before-acknowledge exactly.
\emph{Third}, the same closed-loop harness driven against the three-voter
replicated deployment (\texttt{Raft-3} in Table~\ref{tab:load}) sustains
$1{,}019/1{,}829/1{,}630$ adm/s at $C{=}1/8/32$ with p50
$0.42/0.97/2.1$\,ms, and a pipelined generator reaches ${\sim}4{,}500$ adm/s
at 16 in-flight connections; throughout, both followers tracked the leader's
log index with no replication lag and the log stayed compacted (mid-load
metrics: \texttt{last\_log\_index} advancing, \texttt{purged} trailing it by
the snapshot window). Notably the replicated path \emph{exceeds} the durable
single-node WAL-GC mode at low concurrency ($1{,}019$ vs.\ $124$ adm/s at
$C{=}1$; p50 $0.42$ vs.\ $9.7$\,ms), because committing to two in-memory
followers over loopback is cheaper than a local disk \texttt{fsync}; the two
converge by $C{=}32$. The caveat is that this is a single-machine cluster
(sub-millisecond peer round-trips), so these figures upper-bound throughput
and lower-bound latency for the replicated path---a geographically
distributed cluster pays real inter-node latency per commit---but they
establish that replication for availability runs at throughput comparable to
the gate's own durable single-node mode, not orders below it. The geographic
trade is then \emph{measured} under emulated WAN latency rather than argued:
a \texttt{netem} sweep (artifact, \texttt{scripts/netem\_raft\_sweep.sh};
receipt, \texttt{evidence/netem\_raft.txt}) injects symmetric delay on the
cluster's shared interface and re-runs the identical closed-loop bench
against the same three-voter cluster, with leader and quorum evidence
recorded at each point. At injected RTT${\approx}0$ the sweep reproduces the
loopback regime ($2{,}244/2{,}295/1{,}681$~adm/s at $C{=}1/8/32$, p50
$0.32/0.94/2.2$\,ms); at RTT${\approx}10$\,ms---intra-continental datacenter
spacing---single-client admission collapses to $26$~adm/s at p50
$21.5$\,ms, almost exactly \emph{two} injected round-trips per admission
(client${\to}$leader plus the leader${\to}$follower quorum round): the
consensus arithmetic executed, not estimated. $C{=}8$ sustains $146$~adm/s
at p50 $31.7$\,ms at the same delay. Two scope notes keep the receipt
honest: \texttt{netem} on the shared interface delays the client-to-leader
hop as well as inter-node RPC, so these latencies \emph{upper-bound} the
pure replication cost; and the committed receipt covers the $0$ and
$10$\,ms points, ending before the $C{=}32$ run at $10$\,ms completed. The
deployment reading is unchanged in direction and now quantified: what
governs a wide-area gate is per-effect commit latency---tens of
milliseconds, small against the multi-second model inference preceding each
tool call---not serial admission rate.
The replicated tier buys crash-safety for one consensus round-trip of added
latency; single-node WAL therefore remains the low-latency default, and
three-voter replication is reserved for deployments that require cross-region
availability and can absorb that per-effect cost. We do not claim wide-area
throughput parity with the single-node mode---the trade is latency for
availability, and the sub-Hz per-agent tool-call rate is what would keep it
acceptable.
With the
executed failover receipt (Section~\ref{sec:limitations}), availability is
measured rather than asserted---and one boundary is restated here so the
replicated numbers are not over-read: the verification tiers cover the
single-node admission core; the Raft tier reuses that \emph{unmodified}
verified \texttt{Gate} as its state machine, but the consensus machinery
around it is operational, not verified (Section~\ref{sec:limitations}).

\subsection{Limitations}
\label{sec:limitations}
We foreground the limitations a deployer must weigh.
\emph{Complete mediation is an assumption, not a guarantee}
(Section~\ref{sec:failure-model}): the repair holds only for effects that
traverse the gate, and one unwrapped tool bypasses it \emph{at the wrapper
layer}---and, under the structural routes below, that same unwrapped tool's
external action is \emph{refused by the kernel}, converting the failure mode
from fail-open (a silent leak) to fail-closed (a loud availability error),
which is the design's answer to the objection that one forgotten wrap
collapses the guarantee (Section~\ref{sec:design}). At the wrapper layer alone this is intrinsic to any placement-based enforcement point; we mitigate it with choke-point placement and the best-effort static mediation linter, and where a deployment can afford OS-level enforcement, the two implemented structural routes (loopback-only namespace; \texttt{cgroup} eBPF hooks) make network-egress bypass kernel-refused rather than merely discouraged---exercised with a real framework: the full \textsf{FW-A} integration passes all three repaired scenarios inside a loopback-only namespace, the kernel refusing the unwrapped path throughout (\texttt{evidence/e2e\_structural\_langgraph.txt}). The residual cost is operational (a cgroup or namespace per tool process), with one channel caveat repeated here because a limitations section is where it belongs: the structural routes close \emph{network-externalized}
effects (the namespace for all egress; the \texttt{cgroup} eBPF hooks
per-syscall across \texttt{connect4/6} and \texttt{sendmsg4/6}), and effects that
externalize through a shared filesystem, local IPC, or shared memory remain
on the placement contract or an analogous seccomp/LSM policy
(Table~\ref{tab:mediation}). \emph{Both structural routes are Linux
mechanisms} (network namespaces; \texttt{cgroup} eBPF \texttt{sock\_addr}
hooks on a BTF-enabled kernel): on other operating systems the discharge
degrades to placement discipline plus the linter, or to that platform's own
confinement machinery (sandbox profiles, AppContainer), which we have not
implemented or evaluated. \emph{Effects are modeled at a single commit
point} (Section~\ref{sec:background}): the identity
$(\mathit{run},\mathit{key})$ admits one externalization. Real APIs are not
always so shaped---a payment authorization followed by a capture, a
multipart upload followed by its completion call, a resource created and
then activated each externalize at more than one point. The mapping is
phase-grained mediation: each externally visible phase is its own keyed
admission (its own hold, decision, dedup, and fence), so the barrier holds
per phase; what the gate does not provide is atomicity \emph{across} phases
of one logical action---an approved authorization whose capture is later
rejected leaves the authorization standing, and undoing it is the
compensation territory of Section~\ref{sec:related}, not admission. A
deployment gating multi-phase APIs therefore chooses the phase at which
human authority attaches (typically the irrevocable one) and keys it
explicitly; the key-generation taxonomy of Section~\ref{sec:design} is where
that choice is expressed. \emph{A hold fences actions, not knowledge}: sibling
branches keep computing during a hold (by design; Section~\ref{sec:design}),
so a sibling's \emph{read} may observe state that predates the pending
decision, and the agent may plan against it. Nothing stale is ever
\emph{released}---any consequential action that planning produces still
queues behind its own admission---but read-set freshness under holds is
state coordination, the territory of concurrency-anomaly analysis for
multi-agent systems~\cite{khan26}, not of an admission gate; we scope it out
explicitly rather than leave it implied. \emph{Occurrence evidence is indirect}: the exposure study
measures plan-shape \emph{emission} (Section~\ref{sec:exposure}) and the
corpus measures independent reports (Section~\ref{sec:prevalence}); neither is
a census of production incidents, and the end-to-end repair
(Section~\ref{sec:eval}) is demonstrated on all six frameworks (all
four execution models, both language runtimes)---the five gate-shipping
frameworks on their full violated set and the post-hoc-review framework
(\textsf{FW-E}) on its two approval-independent axes. \emph{The measurement is a snapshot} of current releases; we check
version stability only for \textsf{FW-A} (two releases), and upstream fixes
may change specific cells---which is a goal, not a threat, and argues for
frameworks shipping barrier-semantics regression tests. \emph{The corpus is
small and tracker-biased} (Section~\ref{sec:prevalence}). \emph{Exposure uses
five models and ten authored tasks} at $N{=}100$, with native-API anchors for
GPT-4o and Claude and a second serving path for all three near-zero models
(native for Gemini and DeepSeek, provider-direct via Together for Llama);
wider task coverage and naturalistic, multi-turn prompts would
tighten the estimates further, and on the Llama arm $117/1000$ runs were
excluded as unclassifiable (schema-nonadherent tool calls), leaving one
compound task thinly sampled.
\emph{Naturalistic tasks and prevalence:} to test whether the exposure shape
is an artifact of our ten authored tasks, we ran the same measurement over a
third-party benchmark we did not write---$\tau$-bench's~\cite{taubench24}
retail and airline customer-service episodes---with two frontier models
(GPT-4o, Claude) each driving $\tau$-bench's own tool-calling agent, recording
every emitted tool-call batch before the framework truncates it
(\texttt{results/taubench\_exposure\_*.jsonl}; $431$ tool turns). The dominant
behavior is serialization: parallel batches do occur (about one tool-turn in
seven), but they are overwhelmingly read-only fan-out, and across the $71$
gated batches observed not one co-emitted a benign read---the exposure
shape of Table~\ref{tab:exposure} did not arise on these tasks. A more severe
\emph{consequential}-sibling variant---two or more write calls emitted in one
batch, so that approving one does not fence the others---did appear, but only
for GPT-4o (nine of its $52$ gated batches, about one in six, spanning both
domains) and never for Claude, which serialized writes as well. We read this plainly: the benign-sibling exposure
is a controlled measurement of a real barrier failure, not a claim that the
shape is frequent in the wild, where current frontier models tend to gather
information in parallel and then act sequentially. That serialization, however,
is an emergent model habit, not a property the framework enforces: the
primitive supplies \emph{no} barrier, so any model or configuration that does
batch gated calls---GPT-4o's writes here, or an explicit parallel-tool
setting---executes the siblings unchecked. Tellingly, $\tau$-bench's shipped
agent hardcodes a truncation of each turn to its first tool call---a framework
suppressing parallel calls by fiat rather than gating them, itself evidence
both that models emit such batches and that the barrier is absent by default.
Stated as conditions rather than a mood, the shape matters where (i) input is
adversarial (Section~\ref{sec:injection} induces it deterministically),
(ii) a deployment enables, or a model favors, parallel tool calls (GPT-4o's
consequential-write batches here; explicit parallel-call settings), or
(iii) the orchestration fans out by construction (the multi-branch graphs of
Section~\ref{sec:measurement})---and the barrier costs nothing where models
happen to serialize, because it is only exercised when a sibling exists.
The gate's value is therefore in closing a severe, latent gap, and does not
rest on a high base rate. Three considerations make a latent gap nonetheless
worth a barrier, stated as the justification rather than a plea from
incidence. First, \emph{severity, not frequency}, is the criterion
functional-safety engineering applies to an irreversible actuation
(\cite{iec61508,iso26262}, Section~\ref{sec:related}): an interlock on a
catastrophic, unrecoverable action is justified by the cost of the action and
the cheapness of the interlock, not by how often it fires---and here the
interlock is free where models serialize. Second, the benign base rate bounds
only the benign regime: under adversarial input the shape is induced
deterministically (Section~\ref{sec:injection}), so an attacker does not
inherit the model's serialization habit, and the operative rate under threat
is not the naturalistic one. Third, the measured direction is toward
activation, not away from it---the models that do batch already emit
consequential-write siblings (GPT-4o, nine of $52$ gated $\tau$-bench
batches), and genuinely multi-effect tasks batch near-certainly ($500/500$,
Section~\ref{sec:exposure})---so the gap widens as
parallel tool-calling spreads, and the barrier is adopted before that
transition, while adoption is cheap. A production census of multi-effect
tasks remains future work; the third-party signal we do have is
$\tau$-bench's own reference solutions, of which 39\%/30\% (retail/airline)
require at least two consequential writes (Section~\ref{sec:exposure}).
\emph{Availability and scale}: because the gate
fails closed, its downtime pauses effects---the designed posture for
irreversible actions, and bounded where a deployment prefers it by the
executed TTL-watchdog policy, which converts silent starvation into a
visible, sticky rejection (Section~\ref{sec:design},
Remark~\ref{rem:watchdog}); we evaluate a single node under
concurrent load (Section~\ref{sec:load})---closed-loop, one machine---but not
replication of the admission log, which a production deployment needs; per-run
sharding (Section~\ref{sec:failure-model}) addresses throughput, not
availability. This is no longer only a design argument: the artifact now
carries a feature-gated replicated deployment (\texttt{soundgate\_raft}) in
which the Raft log entry is the admission operation and the state machine is
the unmodified \texttt{Gate}, committed across a three-voter openraft
(Raft~\cite{raft14}) cluster before any verdict is acknowledged---the
replicated analog of the single-node fsync-before-reply discipline. The
executed failover receipt (\texttt{raft\_failover.txt}) shows the availability
and safety halves together: after \texttt{kill -9} of the leader, a new leader
is observed within 1.9\,s (0.5--1\,s election window plus detection), the
identity released \emph{before} the crash refuses as a duplicate on the
\emph{new} leader---the replicated fence survives leader death, so no
double-release across failover---and fresh effects admit normally. What we still do not claim is a
\emph{distributed} throughput characterization: the replicated figures in
Table~\ref{tab:load} come from a single-machine three-process cluster with
loopback peer links, so they upper-bound throughput and lower-bound latency
relative to a geographically distributed deployment, whose per-commit cost is
dominated by inter-node round-trip time---a cost now quantified under
emulated WAN delay by the \texttt{netem} sweep of Section~\ref{sec:load}
(${\sim}2{\times}$RTT per admission at RTT${\approx}10$\,ms), an emulation on
one machine rather than a multi-region deployment; one executed failover is a
demonstration, not a fault-injection study. Nor is the replication layer
itself inside the verified scope: the Raft tier reuses the \emph{unmodified}
verified \texttt{Gate} as its state machine, but the consensus machinery
around it---\texttt{openraft} election, log storage, snapshot
transfer---is operational, exercised by the failover and no-lag receipts
rather than verified; a deployment adopting it trusts openraft's Raft the
way single-node mode trusts \texttt{fsync}. The five-model $N{=}100$ study of Section~\ref{sec:exposure}
executes the first arm of a wider protocol; its remaining arms (30 tasks,
native-API replication for the OpenRouter-only models, unconditional and
execution-rate endpoints) are future work. These do not undermine the
core measured result---the barrier does not hold across the frameworks
tested---but they bound the strength of the ``in practice'' and ``repaired in
general'' claims.

\subsection{Threats to validity}
\label{sec:threats-validity}
\emph{External validity:} we measure six frameworks, one (\textsf{FW-A}) in
most depth. The selection is not by popularity alone but by \emph{execution
model}: the five gate-shipping frameworks instantiate the four architectural
families a production agent runtime is built on---Pregel/BSP supersteps, an
event-driven bus, message-passing fan-out, and parallel tool calls within a
single model turn---plus a second language runtime (Node) of one of them, so
the recurrence is demonstrated across the dominant execution models rather
than across near-duplicates of one design; a leak that survives all four
families is unlikely to be an artifact of any single scheduler. Closed,
hosted agent platforms expose no local runtime to instrument and are the
stated boundary of this open-runtime scope. One axis remains unprobed for a
documented reason (\textsf{FW-E}
checkpoint-resume, where the framework cannot serialize a tool-bearing crew's
state), and the a-priori
kill condition (Section~\ref{sec:scope}) states that if every other framework
proves sound on every axis and the observed behaviors are patched upstream before
submission, the contribution narrows toward a two-framework case study. The
cross-framework recurrence of the sibling leak argues against that outcome.
\emph{Construct validity:} representing effects as event-log appends could
understate framework-internal mitigations; we mitigate by also observing
caller-visible signals (the exception a caller sees, the number of effect
occurrences). \emph{Internal validity:} probes are model-free and deterministic,
removing LLM nondeterminism as a confound.
\emph{Reliance assumptions:} the repair's guarantees inherit the mediation
contract and the mechanisms that discharge it---an incorrectly written
wrapper is an unmediated path by definition (the linter's target, and the
structural routes' reason to exist), and kernel-, filesystem-, and
consensus-layer misbehavior sits inside the stated trusted computing base
(Section~\ref{sec:failure-model}); the fuzz receipt tests the protocol
boundary's fail-closed posture, not those layers' internals.

% =====================================================================
\section{Related Work}
\label{sec:related}
\textbf{Transactional settlement for tool calls.} The closest mechanism to
ours is Atomix~\cite{atomix26}, which wraps tool calls in progress-aware
transactions: effects are tagged with epochs, per-resource frontiers gate
commit, bufferable effects are delayed, and externalized effects are
compensated on abort. Atomix's epoch tags and frontiers are themselves a
fencing discipline against stale effects, and it answers \emph{when} it is
safe to settle an effect under speculation, contention, and
faults---presuming its transactional runtime and resource model are adopted;
we answer a prior question:
whether the stop primitives frameworks \emph{already ship} deliver their
implied barrier contract at all---and repair the measured gaps with a
deliberately minimal admission gate that needs no resource model, no
compensation logic, and no rollback assumption (many gated actions, an e-mail
or a payment capture, admit none). The two compose: our measurement motivates
precisely the settlement layer Atomix builds, and \sys{}'s
hold/dedup/fence core is the fragment deployable without adopting a
transaction model; where Atomix's resource model applies, its transactions
subsume our dedup and fence, and the gate is the conservative fallback when no
compensation or resource model exists. SagaLLM~\cite{sagallm25} brings saga-style
compensation~\cite{sagas87} to multi-agent planning and likewise presumes
compensability; our gate targets the admission side of the same problem.

\textbf{Policy and privilege enforcement.} AgentSpec~\cite{agentspec25}
enforces user-written runtime rules over agent actions; Progent~\cite{progent25}
provides programmable privilege control over tool calls;
CaMeL~\cite{camel25} isolates control flow from untrusted data by design.
These systems decide \emph{which} calls are permissible, largely against
adversarial inputs; prompt-injection benchmarks such as
AgentDojo~\cite{agentdojo24} evaluate that attack axis directly, and
execution-isolation architectures such as IsolateGPT~\cite{isolategpt25}
confine LLM-app components from one another. Our axis is orthogonal: even a
fully permitted,
human-gated call lacks barrier semantics under parallelism, replay, and
cancellation---the sibling leak persists under a perfect policy layer, because
it is a concurrency-control failure, not an authorization failure. \sys{}
composes beneath such layers at the admission point; embedded in the
framework, a policy layer inherits the framework's control flow---and with it
exactly the parallel-step races of Table~\ref{tab:matrix}.

\textbf{OS-level sandboxing and isolation.} A different lineage confines what
an untrusted process \emph{can reach}---beginning at the instruction level
with software-based fault isolation~\cite{wahbe93}: \texttt{seccomp-bpf}
syscall filters~\cite{seccomp}, Linux namespaces, gVisor-style user-space
kernels~\cite{gvisor19}, and WebAssembly/WASI capability sandboxes~\cite{wasi}, whose component model makes an import a component was never granted simply not exist for it---a viable mediation-discharge route for tools compiled to that target. Type-and-effect systems~\cite{lucassen88} make the same distinction statically, but no static discipline expresses the run-time, per-identity hold/decide state the barrier needs. These are mediation \emph{enforcers}, not admission \emph{deciders}: a sandbox constrains destinations and syscalls before execution, whereas the barrier's properties are stateful, per-identity, and decided at run time. The two compose rather than compete---our structural-mediation routes (Section~\ref{sec:design}) use exactly this machinery to force every effect path to the gate, which supplies the admission semantics the sandbox cannot.

\textbf{Durable execution and idempotency.} Hosted workflow services expose
the approval-gate shape as callback \emph{task tokens}, and exactly-once
messaging scopes idempotence by producer identity and sequence number (e.g.,
Kafka's idempotent producer~\cite{kafka-kip98}) or by per-record keys checked against durable
state, as in MillWheel and its descendants~\cite{millwheel13}, and Flink
obtains the analogous exactly-once guarantee by aligned checkpointing over
operator state~\cite{flink17}---\emph{(run,\,key)} is the same discipline at
the agent-effect boundary. The transactional-outbox pattern~\cite{outbox},
which commits an effect's intent to the same store as its state change and
relays it once, is the same hold-then-release split with the store as
arbiter rather than an external gate. Dedup-on-replay is the idempotent-consumer discipline of event-sourced systems~\cite{eventsourcing,helland12} transplanted to the agent-effect boundary rather than an invention of ours, with ancestry from Gray's transaction notes~\cite{gray78} to deterministic databases in the Calvin lineage~\cite{calvin12}.
Durable-execution platforms such as Temporal~\cite{temporal} re-architect orchestration into deterministic workflows with journaled activities; Section~\ref{sec:temporal} executes that audit rather than presuming it: journaled activities close the replay axis by construction, while the sibling, cancellation, and timeout predicates carry the same behavioral bits as the measured frameworks---documented as cooperative rather than implied as barriers, exactly the contract difference this paper isolates---and the gate composes at its activity boundary unchanged. Step Functions exposes no local runtime to instrument and remains
outside the open-runtime scope; its documented Parallel-state semantics make
the task-token pause branch-local by specification, and we assert nothing
beyond that documentation. 
Workflow engines in the Airflow/Argo/Cadence lineage, hosted durable-execution services (AWS Step Functions, Azure Durable Functions; Amazon reports specifying such services in TLA\textsuperscript{+} before shipping them~\cite{newcombe15}), and the BPMN~2.0/WS-BPEL engines whose compensation handlers and cancellation scopes standardized this machinery two decades ago~\cite{bpel07} all ship human-approval tasks under the same implicit barrier assumption we audit; we measure agent frameworks because that is where the assumption now carries irreversible tool calls proposed by a model. Why not simply adopt one of these engines? For greenfield systems one should: deterministic replay solves this by construction. But re-platforming an existing agent is a rewrite of its control flow, whereas the minimal retrofit preserving each framework's programming model is an external admission point at the tool boundary---the only point whose guarantee does not depend on which engine was chosen. \sys{} transplants classical idempotence and fencing
discipline~\cite{helland12} to that boundary, with effect identity scoped per
run. A sharper form of the same question is whether the frameworks' \emph{own}
persistence layers---checkpointers and durable-state stores now shipping in
the very frameworks we measure~\cite{lgdocs26}---already close the gap
natively. They do not close the axis this paper isolates: a checkpointer makes
\emph{state} recoverable across a resume, which is exactly the machinery whose
ordering race produces the replay double-execution of
Section~\ref{sec:measurement} (the maintainer-filed persistence-ordering issue
of Section~\ref{sec:prevalence} is a checkpointer bug), and it says nothing
about a \emph{sibling} effect executing during an approval pause---that leak is
present with checkpointing fully enabled, because the sibling commits inside
the same superstep regardless of whether state is later journaled. Durable
persistence is complementary to, not a substitute for, a decided admission
point.

\textbf{Temporal-mismatch vulnerabilities.} TOCTOU studies in LLM
agents~\cite{toctou-mindgap25} and browser-use agents~\cite{toctou-browser26}
exploit the window between an agent's check and its use of external state. The
sibling leak is the same structure turned inward: a framework-internal TOCTOU
whose ``check'' is the approval gate and whose ``use'' is the sibling's effect,
the time-of-check-to-time-of-use window being the pause itself. Our
axes are the control-plane dual: windows between a \emph{human decision
point} and an effect, created by the framework's own pause, resume, cancel,
and timeout machinery---and they open with no adversary present. Where those
studies mitigate by fusing check and use into one atomic tool call, the gate
instead makes the use \emph{wait} on an external admission of the check.

\textbf{Agent operating layers.} A recent line frames the agent runtime
itself as an operating system---MemGPT's interrupt-driven control flow over
tiered memory~\cite{memgpt23} and AIOS's agent-level scheduling, context
switching, and resource isolation~\cite{aios24}. These layers govern
\emph{when agents run}; neither interposes a decided admission point between
an emitted tool effect and the world, so the barrier question this paper
measures arises unchanged inside them, and the gate composes at their tool
boundary exactly as it does beneath the frameworks of
Section~\ref{sec:measurement}.

\textbf{Failure studies and practitioner evidence.} MAST~\cite{mast25}
taxonomizes multi-agent failures from execution traces at the
reasoning/coordination level; recent work ports concurrency-anomaly detection
and verification to multi-agent systems~\cite{khan26}. An older
agent-verification line model-checks BDI agent \emph{programs}
directly~\cite{ajpf12}; those are properties of agent decision logic, where
ours are properties of the surrounding orchestrator's control plane. Practitioner
guidance~\cite{abstractalg26} and vendor documentation~\cite{lgdocs26}
acknowledge re-execution and HITL pitfalls individually. Our contribution is
the executable differential predicates, the cross-framework matrix, and the
occurrence measurements that turn scattered acknowledgements into a
characterized, mechanically repaired gap.

\textbf{Enforcement, capabilities, and distributed idioms.} As a mechanism,
\sys{} sits in the reference-monitor lineage of Anderson~\cite{anderson72}
(complete mediation being that lineage's defining obligation, and ours) and is
an execution monitor in Schneider's sense~\cite{schneider00} with one
refinement: Schneider's truncation automata \emph{halt} the program at a bad
action, whereas the gate never halts the agent---it \emph{suppresses the single
external action} and lets computation continue, placing it in Ligatti's
edit/suppression automata~\cite{ligatti05} rather than the truncation class.
Inlined reference monitors weave such checks into the untrusted program
itself~\cite{erlingsson00}; the gate deliberately keeps its checks in a
separate process, because here the untrusted control flow is precisely what
fails.
The runtime-verification community~\cite{leucker09} has built general
monitor-synthesis machinery of this kind (JavaMOP~\cite{javamop12}); our gate
is a fixed-policy, effect-boundary instance. Its protocol boundary sits in the
language-theoretic-security tradition~\cite{langsec16}---a minimal input
language, a single typed decode whose rejection is the fail-closed default, a
fixed reply alphabet---and the fuzz receipt of Section~\ref{sec:failure-model}
executes that discipline. Its per-run, per-key admission is a capability-style check in the object-capability tradition~\cite{miller06}; OS/hardware capability systems (Capsicum~\cite{capsicum10}, CHERI~\cite{cheri15}) enforce \emph{which resources} a component may name, orthogonal to the \emph{when} the gate governs---a process may hold a capability to a payment endpoint yet submit while the run is paused.

Our cancellation orphan is the forty-year-old orphaned-computation problem of
RPC~\cite{birrell84} resurfacing under async runtimes. Erlang/OTP's supervision
trees~\cite{armstrong03} solve crash \emph{propagation} inside one trusted
runtime; structured-concurrency disciplines with explicit cancellation
scopes---Trio nurseries, Project Loom task scopes, Kotlin and
Swift~\cite{structconc}---make cancellation lexical, and Rust documents
\emph{cancellation safety} primitive-by-primitive because a dropped future
orphans what it had in flight~\cite{tokio-cancel} (machine-checked foundations
for Rust's ownership exist at the language level, RustBelt~\cite{rustbelt18};
verifying an async \emph{runtime} against them repairs the substrate the gate
deliberately distrusts), while formal Node event-loop
semantics~\cite{asyncjs17} make our \textsf{FW-F} orphan the substrate's
expected outcome. All of these address \emph{cancellation} only---silent on the approval-barrier and replay axes---and adopting them rewrites a framework's concurrency substrate: the frameworks we measure are unstructured \texttt{asyncio} and bare promises, which is why their cancellation soundness is host-dependent (Table~\ref{tab:matrix}). ``Use a better concurrency model''
helps a new framework, not the deployed ones; the external gate retrofits the
guarantee onto the substrate that exists and covers the axes a cancellation
discipline does not.

The cancellation/close fence is the agent analogue of the \emph{fencing tokens}
and epoch counters of Chubby, ZooKeeper, and etcd
leases~\cite{burrows06,fencing-kleppmann}; the hold/decide split mirrors
two-phase commit's prepare/commit barrier~\cite{2pc-gray} without a
compensation phase; a hold is, in classical terms, an exclusive intention on
one effect identity---two-phase locking's blocking~\cite{eswaran76} without its
contention, since a held effect locks no shared data. 2PC's classical objection (a prepared participant blocks, holding locks) does not bite: a held effect locks nothing and blocks no other run, so coordinator failure yields one un-externalized action, not a stalled cluster---fail-closed, with the bounded-hold watchdog converting the wait to a timed rejection where preferred. Operationally the gate resembles a service-mesh or
RPC interceptor applied to agent effects---kin to Kubernetes admission
controllers~\cite{k8s-admission}, Envoy's external-authorization
filter~\cite{envoy-extauthz}, the sidecar interception meshes such as Istio and
Linkerd~\cite{linkerd} generalize, and payment idempotency-key
middleware~\cite{stripe-idem}---but the mesh marks the boundary: a sidecar
interposes \emph{transport-level} allow/deny keyed by connection or route,
whereas the gate's verdict is \emph{stateful and per-effect-identity}, which no
transport proxy carries; distributed-tracing interceptors in the
OpenTelemetry lineage~\cite{otel} propagate per-request \emph{context}
through the same choke points, and context propagation can carry the gate's
identities but does not decide them. Why not a container network policy, or Envoy's \texttt{ext\_authz} filter backed by OPA? A deny-by-default egress policy is the \emph{mediation} half---our structural routes are precisely that mechanism---but a firewall rule embodies no hold-until-decided, no per-identity dedup, and no run fence: those are verdict \emph{state}, not reachability. Make OPA stateful enough to hold one effect for a minutes-long human decision, release exactly that effect once, refuse its replay, and fence its cancelled run, and one has re-implemented the gate's admission state machine at the proxy---a legitimate topology for the same core, whose correctness burden is then exactly the state machine we verify. Authorization systems (OPA/XACML, the Zanzibar
lineage~\cite{zanzibar19}) govern \emph{who} may decide; the gate governs what
a decision \emph{means}, and the two compose. Credential-attached policy (capability tokens with caveats, as in Macaroons~\cite{macaroons14}) bounds \emph{what} an effect may be; the barrier question is temporal and stateful---whether \emph{this} effect may externalize \emph{now}---which a stateless token cannot answer without rebuilding the gate's state machine inside the credential. The sibling leak is, classically, a
control-flow anomaly the workflow-patterns catalog~\cite{wfpatterns03} names,
workflow-net verification models as pause/resume/cancel regions (first-class in
YAWL~\cite{yawl05}), and reachability analysis~\cite{vdaalst98} would
classify---the news being not the anomaly class but that shipping agent
frameworks reintroduce it and nobody had measured them. We claim no novelty in
these mechanisms; the contribution is identifying the agent control-plane gap,
measuring it, and showing this machinery closes it at the tool boundary. The
Model Context Protocol~\cite{mcp} standardizes that boundary---a gate deployed
as an MCP-side proxy is exactly the choke point placement wants, and MCP's
transport/discovery/authorization layer supplies the who/what axis but not
stop-state admission, so hold/dedup/fence compose on it unchanged. Hosted products move the confirmation prompt into the provider (OpenAI's ChatGPT agent requires explicit confirmation before consequential actions~\cite{chatgptagent25}); a provider-side pause relocates the gate upstream and inherits the same audit question---whether it fences parallel siblings inside a runtime no local probe can instrument---the closed-platform boundary our scope states.

\textbf{Interruptibility in AI safety.} The alignment literature has long
studied stopping intelligent agents: safe interruptibility~\cite{orseau16}
and corrigibility~\cite{soares15} ask that an agent not learn to resist or
manipulate an off-switch. That work concerns an agent's \emph{incentives} to
avoid interruption; ours concerns whether the surrounding framework
\emph{mechanically} halts effects when a human does intervene. The problems
are complementary---a corrigible policy still leaks a sibling effect if the
orchestrator lacks a barrier---and to our knowledge this mechanical,
control-plane view of agent stopping has not been measured across frameworks.
AI-control protocols designed to stay safe against subversive models~\cite{aicontrol23} presuppose, whenever they review actions before they land, a mechanism that actually withholds the action pending review---the admission layer the gate supplies and the frameworks, as measured, do not. ToolEmu~\cite{toolemu24} estimates tool-action risk in emulated sandboxes and AgentBench~\cite{agentbench24} measures tool-use competence; we address the dual question of whether the machinery can \emph{enforce} a human decision once an action is flagged.

\textbf{Fail-safe control in safety engineering.} The principle that a
safety-critical system must either perform its function correctly or fail into a
predictable safe state is long codified in functional-safety
engineering---IEC~61508 for programmable electronic
safety-related systems~\cite{iec61508}, its automotive adaptation ISO~26262 for
road vehicles~\cite{iso26262}, and the emergency-stop and protective-stop
requirements of ISO~10218 for industrial robots~\cite{iso10218}---where a stop command is a hard guarantee, not a request, and hazardous actuation defaults to the de-energized state. We neither certify a safety integrity level nor model physical hazards, but the discipline transfers: the gate treats an externally visible tool effect as the actuation to be interlocked, failing closed on cancel, reject, or timeout rather than racing to completion---and the measurement half of this paper is in effect a conformance test of whether shipping frameworks honor that stop-means-stop contract for their software actuators.

\textbf{Positioning.} A natural reviewer question is why, if the gap is real,
framework maintainers have not already closed it. The corpus answers empirically: incremental fixes are landing (a merged single-interrupt fix; a January~2026 cluster of routing bugs closed together; an open enhancement for the general barrier)---the signature of a recognized-but-unsolved concurrency gap. The general fix is hard for the structural reason
above: closing it inside a framework means either serializing gated steps
(surrendering the parallelism the execution model exists to provide) or
building an in-control-flow admission point that inherits the very scheduler
that races. Patching each framework upstream is complementary, and
our regression probes exist to support it; the gate exists because operators
run several frameworks and versions at once and need one enforcement point
whose guarantee does not depend on each upstream's fix landing and staying
fixed. Our primary contribution is control-plane
\emph{correctness}; the security implications (an unwrapped or spoofed effect)
are real but secondary; we \emph{implement} authenticated decisions
(HMAC-SHA256, attack-tested; Section~\ref{sec:failure-model}) and treat
complete mediation as a deployment obligation dischargeable structurally
for network egress (Section~\ref{sec:design}) rather than a fully solved
part of the design. The paper's value is deliberately empirical: measurement, plus a repair whose admission core is verified as a model and connected to the deployed Rust by differential conformance---refinement evidence, not a refinement proof (Section~\ref{sec:failure-model}).

% =====================================================================
\section{Conclusion}
Stop primitives are the controls practitioner guidance treats as load-bearing
for agent systems with real tool
access~\cite{abstractalg26,lgdocs26}, yet on the frameworks and releases we
measured they do not provide barrier semantics: an approval gate
can be bypassed by a sibling branch during the pause it creates, rejections can
arrive too late, effects can double under replay, and cancellations and timeouts
can leave work running. These gaps recur across independently designed
frameworks, four execution models, and two language runtimes; whether the cause
is a shared design pattern or the shared constraints of the underlying async
substrates, they are not one implementation's accident. We are equally plain
about prevalence: on naturalistic tasks current frontier models mostly
serialize their consequential calls, so the everyday gap is \emph{latent}
rather than prevalent---a window in which barriers can be adopted before
parallel-call defaults, model behavior, multi-effect tasks, or adversarial
input close it---and because the barrier is exercised only when a sibling
exists, it costs nothing where models happen to serialize. Enforcing stop
semantics at the tool boundary---outside the framework's own control flow, and
under an explicit complete-mediation contract---closes every measured
gap for mediated effects in end-to-end replay on every evaluated
framework---all four execution models, both language runtimes---while
preserving legitimate actions, at overhead
bounded by the transport for in-memory admission and by \texttt{fsync} for
durable admission. We argue that
environment-external effect gating is one robust default for how agent
systems with irreversible tool access are operated.

\section*{Acknowledgment}
The author thanks the maintainers of the open-source agent frameworks whose
public issue discussions informed the probe design.

\clearpage
\appendices
\section{Exposure Study: Serving-Path and Multiplicity Detail}
\label{app:exposure}
This appendix carries the serving-path mechanics and forensic detail behind
Section~\ref{sec:exposure}; every number is recomputed from the committed
raw logs.

\smallskip\noindent\textbf{Transport mechanics.} Transient transport errors
on the OpenRouter path were retried, and records are deduplicated per
$(\mathit{task},\mathit{run})$ keeping the final attempt (raw and
deduplicated counts in the artifact). We do not send a decoding seed on that
path: doing so would additionally require a backend supporting
\texttt{tools+seed} together, which several do not, and seed determinism
never held across backends regardless; temperature-1.0 sampling averaged
over $N{=}100$ does not depend on it.

\smallskip\noindent\textbf{Multiplicity.} The two headline task effects are
large enough that no correction for the ten implicit per-task comparisons
reclassifies them (Bonferroni-adjusted $99.5\%$ Wilson intervals:
$[0.61,0.85]$ for GPT-4o's $75/100$, $[0.22,0.48]$ for Claude's $34/100$).
Extending the correction to the most conservative family the study
admits---every task$\times$model cell in \emph{both} serving-path panels, 50
primary plus 30 second-path, $m{=}80$---changes nothing: at a joint 95\%
level (per-cell Wilson at $1{-}0.05/80$), every reported cell at or above
$0.10$ still excludes zero---GPT-4o \texttt{compound\_cleanup} $75/100$
$[0.58,0.87]$ and \texttt{single\_offboard} $40/100$ $[0.25,0.57]$, Claude
\texttt{compound\_transfer} $34/100$ $[0.20,0.51]$, Llama-Together
\texttt{compound\_cleanup} $57/63$ $[0.71,0.97]$, and Gemini-native
\texttt{compound\_invoice} $17/100$ $[0.08,0.33]$---so no nonzero finding
depends on the number of looks. All remaining cells are zeros or singletons
reported with their intervals, per the estimation framing of
Section~\ref{sec:exposure}; the complete per-task table is committed to the
artifact and nothing is selectively reported.

\smallskip\noindent\textbf{Positive-control diagnostic.} Stripping the
consequential tool from \texttt{compound\_transfer}, leaving only its two
benign siblings so there is zero hazard on the table (artifact,
\texttt{gemini\_diagnostic.py}), GPT-4o still bundles both tools in one turn
on $12/15$ trials ($80\%$). Gemini, DeepSeek, and Llama bundle on $0/30$
trials \emph{each}---ninety combined trials, zero bundled turns---though
they differ in how: Llama engages every trial and calls exactly one tool
each time; DeepSeek is similar ($29/30$ one-tool); Gemini mostly disengages
from the stripped task entirely ($28/30$ neither tool called). We cannot
fully separate a schema-translation artifact of the shared integration from
a genuine training-time bias toward strictly sequential tool use; either way
the identical near-zero appears when there is nothing to be cautious about,
which is why the OpenRouter rates for these three models are reported as
pathway-confounded rather than as exposure or safety.

\smallskip\noindent\textbf{Second-path forensics.} \emph{Gemini:}
\textsf{gemini-3.5-flash} on Google's native surface
(\texttt{generativelanguage.googleapis.com}), same $N{=}100$ over the same
ten tasks with no OpenRouter layer, called the gated tool in $1000/1000$
runs and emitted it in parallel with a benign sibling in $17/1000$,
concentrated on \texttt{compound\_invoice} ($17/100$) and zero on the other
nine; a newer generation than the 2.5 OpenRouter row, so the de-confound is
family-level. \emph{DeepSeek:} \textsf{deepseek-v4-flash} on
\texttt{api.deepseek.com} (thinking mode off, plain function calling), same
$N{=}100$, called the gated tool in $975/1000$ runs (918 two-turn, 80
three-turn, zero transport errors, 25 explicit-answer abstentions) and
emitted the shape in $0/975$---every consequential call alone in its turn;
V4-flash succeeds the retired V3.2 alias, so this too is a family-level
de-confound. \emph{Llama:} Meta ships no first-party API, so the second
path is provider-direct: \textsf{Llama-3.3-70B-Instruct-Turbo} on Together
AI, same $N{=}100$, emits in $64/769$ classifiable runs. The $117/1000$
excluded runs invent tool names absent from the schema (seven distinct on
\texttt{compound\_reorder}, e.g.\ \texttt{create\_replenishment\_order});
the skew is diagnostic ($87$ of $117$ on \texttt{compound\_reorder}, $14$
\texttt{compound\_email\_update}, $7$ \texttt{single\_offboard}, $5$
\texttt{compound\_transfer}, at most one elsewhere; full breakdown in the
raw log), and the exclusions concentrate in \texttt{compound\_reorder}
($n{=}13$ valid, zero exposures) while $57$ of $64$ exposures sit on
\texttt{compound\_cleanup} at $99/100$ valid. In per-total-run terms:
$64/1{,}000$ pooled ($0.064$), and on \texttt{compound\_cleanup} $57/100$
total ($0.57$); the conditional $0.90$ is $57/63$ \emph{called} runs, so
conditioning reports the leak-relevant rate rather than concealing the base.
One mediation implication a deployer should draw explicitly: a runtime that
routes model-invented tool names through a catch-all executor converts
schema nonadherence into an unmediated path, so unknown-tool handlers must
be wrapped or refused like any consequential tool---and under the structural
routes of Section~\ref{sec:design} such a handler's egress is refused by the
kernel exactly as an unwrapped tool's is.

\smallskip\noindent\textbf{Family versus checkpoint.} Only GPT-4o and
Claude carry the \emph{same} checkpoint on two independent serving paths, so
only their rates are checkpoint-level replications. The Gemini and DeepSeek
second paths are successor checkpoints on the vendors' own surfaces, so
those rows establish whether the \emph{family} reaches the shape on a
faithful path---Gemini does, DeepSeek does not---without cross-validating
the exact 2.5/V3.2 single-path rates; the Llama second path replicates the
same checkpoint through a different provider's tool-call handling.

\section{Responsible Disclosure}
\label{app:disclosure}
The behaviors this paper measures are properties of shipped control-flow
semantics, characterized throughout as \emph{contract mismatches} rather than
secret, exploitable defects (Section~\ref{sec:intro}); whether each is a bug or
an intended design is for maintainers to judge. The prevalence corpus
(Section~\ref{sec:prevalence}) consists of public reports filed by third
parties, so citing it creates no new vulnerability information; among our own
probe findings, resume re-execution is vendor-documented~\cite{lgdocs26} and
the missing multi-interrupt barrier is publicly tracked by a maintainer-filed
issue. One further defect we encountered is itself public and independently
corroborated: a report we opened on the LangGraph
tracker demonstrating that \texttt{durability="sync"} does not order a task's
pending-write persistence before the superseding checkpoint, so a crash during
checkpoint persistence produces host-dependent replay-versus-re-execution with
duplicated external side effects; several other developers reproduced it on
different operating systems and core counts, and multiple community pull
requests propose the write-before-checkpoint barrier. Because this report is
ours, it is excluded from the third-party occurrence count of
Section~\ref{sec:prevalence} and used only as upstream-trajectory evidence.
The probes are model-free,
use no user data, and exercise only local, inert effects.

% ---------------------------------------------------------------------
% Elsevier declarations (JSS). GenAI-in-writing declaration: uncomment and
% edit the block below only if applicable per Elsevier's GenAI policy.
%\section*{Declaration of generative AI and AI-assisted technologies in the writing process}
%During the preparation of this work the author used [TOOL] in order to
%[REASON]. After using this tool, the author reviewed and edited the content
%as needed and takes full responsibility for the content of the publication.

\section*{CRediT authorship contribution statement}
\textbf{Sajjad Khan:} Conceptualization, Methodology, Software, Validation,
Formal analysis, Investigation, Data curation, Writing -- original draft,
Writing -- review \& editing, Visualization.

\section*{Declaration of competing interest}
The author declares that he has no known competing financial interests or
personal relationships that could have appeared to influence the work
reported in this paper.

\section*{Funding}
This research did not receive any specific grant from funding agencies in the
public, commercial, or not-for-profit sectors.

\section*{Data availability}
The complete artifact---probes with per-framework transcripts, the
measurement harness and raw result logs, the formal models with checker
logs, the \sys{} reference implementation, and a single-command audit
(\texttt{reproduce.sh}) that re-derives every headline number from committed
data---is available for review at
\url{https://anonymous.4open.science/r/soundgate-paper-B576} and will be made
publicly available at \url{https://github.com/sajjadanwar0/soundgate-paper}
upon acceptance; the gate is
installable from PyPI (\texttt{pip install soundgate}).

% ---------------------------------------------------------------------
% References. SINGLE-FILE CONVENTION: all entries live inline in this
% thebibliography environment -- no external .bib, no bibtex pass. Add
% real entries directly here (IEEEtran numeric style by hand). Compile is
% pdflatex x2 only.


\begin{thebibliography}{68}
\bibitem{dewitt84} D.~J. DeWitt, R.~H. Katz, F.~Olken, L.~D. Shapiro,
M.~R. Stonebraker, and D.~A. Wood, ``Implementation techniques for main
memory database systems,'' in \emph{Proc. ACM SIGMOD}, 1984, pp. 1--8.
\bibitem{eventsourcing} M.~Fowler, ``Event sourcing,''
\url{https://martinfowler.com/eaaDev/EventSourcing.html}, 2005, accessed
July 2026.
\bibitem{envoy-extauthz} Envoy Project Authors, ``External authorization
(\texttt{ext\_authz}) filter,'' Envoy proxy documentation,
\url{https://www.envoyproxy.io/docs/envoy/latest/intro/arch_overview/security/ext_authz_filter},
accessed July 2026.
\bibitem{aicontrol23} R.~Greenblatt, B.~Shlegeris, K.~Sachan, and F.~Roger,
``AI control: Improving safety despite intentional subversion,''
arXiv:2312.06942, 2023.
\bibitem{tokio-cancel} Tokio contributors, ``\texttt{tokio::select!}:
cancellation safety,'' Tokio API documentation,
\url{https://docs.rs/tokio/latest/tokio/macro.select.html}, accessed July
2026.
\bibitem{loom} Tokio contributors, ``Loom: concurrency permutation testing
for Rust,'' \url{https://docs.rs/loom}, accessed July 2026.
\bibitem{yawl05} W. M. P. van der Aalst and A. H. M. ter Hofstede, ``YAWL:
Yet another workflow language,'' \emph{Information Systems}, vol. 30, no. 4,
pp. 245--275, 2005.
\bibitem{zanzibar19} R. Pang \emph{et al.}, ``Zanzibar: Google's consistent,
global authorization system,'' in \emph{Proc. USENIX ATC}, 2019,
pp. 33--46.
\bibitem{mcp} Anthropic, ``Model Context Protocol,'' specification,
\url{https://modelcontextprotocol.io}, 2024, accessed July 2026.
\bibitem{msaf26} Microsoft, ``Microsoft Agent Framework overview,''
\url{https://learn.microsoft.com/en-us/agent-framework/overview/}, 2026,
accessed July 2026.
\bibitem{sel4} G. Klein \emph{et al.}, ``seL4: Formal verification of an
OS kernel,'' in \emph{Proc. 22nd ACM SOSP}, 2009, pp. 207--220.
\bibitem{anderson72} J. P. Anderson, ``Computer security technology planning
study,'' U.S. Air Force Electronic Systems Division, Tech. Rep. ESD-TR-73-51,
1972.
\bibitem{javamop12} P. O. Meredith, D. Jin, D. Griffith, F. Chen, and
G. Ro\c{s}u, ``An overview of the MOP runtime verification framework,''
\emph{Int. J. Softw. Tools Technol. Transf.}, vol. 14, no. 3, pp. 249--289,
2012.
\bibitem{vdaalst98} W. M. P. van der Aalst, ``The application of Petri nets
to workflow management,'' \emph{J. Circuits Syst. Comput.}, vol. 8, no. 1,
pp. 21--66, 1998.
\bibitem{birrell84} A. D. Birrell and B. J. Nelson, ``Implementing remote
procedure calls,'' \emph{ACM Trans. Comput. Syst.}, vol. 2, no. 1,
pp. 39--59, 1984.
\bibitem{burrows06} M. Burrows, ``The Chubby lock service for
loosely-coupled distributed systems,'' in \emph{Proc. 7th USENIX OSDI},
2006, pp. 335--350.
\bibitem{stripe-idem} Stripe, ``Idempotent requests,'' API documentation,
\url{https://docs.stripe.com/api/idempotent_requests}, accessed July 2026.
\bibitem{k8s-admission} The Kubernetes Authors, ``Admission control in
Kubernetes,'' \url{https://kubernetes.io/docs/reference/access-authn-authz/admission-controllers/},
accessed July 2026.
\bibitem{toolemu24} Y. Ruan \emph{et al.}, ``Identifying the risks of LM
agents with an LM-emulated sandbox,'' in \emph{Proc. ICLR}, 2024
(arXiv:2309.15817).
\bibitem{greshake23} K. Greshake, S. Abdelnabi, S. Mishra, C. Endres,
T. Holz, and M. Fritz, ``Not what you've signed up for: Compromising
real-world LLM-integrated applications with indirect prompt injection,'' in
\emph{Proc. 16th ACM Workshop on Artificial Intelligence and Security
(AISec)}, 2023, pp. 79--90.
\bibitem{parasuraman97} R. Parasuraman and V. Riley, ``Humans and
automation: Use, misuse, disuse, abuse,'' \emph{Human Factors}, vol. 39,
no. 2, pp. 230--253, 1997.
\bibitem{lgdocs26} LangChain, ``LangGraph documentation: Graph API and
interrupts,'' \url{https://docs.langchain.com/oss/python/langgraph/graph-api}
and \url{https://docs.langchain.com/oss/python/langgraph/interrupts},
accessed July 2026.
\bibitem{atomix26} B. Mohammadi, N. Potamitis, L. Klein, A. Arora, and
L. Bindschaedler, ``Atomix: Timely, transactional tool use for reliable
agentic workflows,'' arXiv:2602.14849, 2026.
\bibitem{sagallm25} E. Y. Chang and L. Geng, ``SagaLLM: Context management,
validation, and transaction guarantees for multi-agent LLM planning,''
\emph{Proc. VLDB Endow.}, vol. 18, no. 12, pp. 4874--4886, 2025.
\bibitem{sagas87} H. Garcia-Molina and K. Salem, ``Sagas,'' in \emph{Proc.
ACM SIGMOD}, 1987, pp. 249--259.
\bibitem{agentspec25} H. Wang, C. M. Poskitt, and J. Sun, ``AgentSpec:
Customizable runtime enforcement for safe and reliable LLM agents,''
arXiv:2503.18666, 2025.
\bibitem{progent25} T. Shi, J. He, Z. Wang, H. Li, L. Wu, W. Guo, and
D. Song, ``Progent: Programmable privilege control for LLM agents,''
arXiv:2504.11703, 2025.
\bibitem{camel25} E. Debenedetti \emph{et al.}, ``Defeating prompt
injections by design,'' arXiv:2503.18813, 2025.
\bibitem{temporal} Temporal Technologies, ``Temporal: Durable execution
platform,'' \url{https://temporal.io}, accessed July 2026.
\bibitem{helland12} P. Helland, ``Idempotence is not a medical condition,''
\emph{ACM Queue}, vol. 10, no. 4, 2012.
\bibitem{toctou-mindgap25} D. Lilienthal and S. Hong, ``Mind the gap:
Time-of-check to time-of-use vulnerabilities in LLM-enabled agents,''
arXiv:2508.17155, 2025.
\bibitem{toctou-browser26} L. Jiang, Z. Liu, H. Luo, and Z. Lin, ``Atomicity
for agents: Exposing, exploiting, and mitigating TOCTOU vulnerabilities in
browser-use agents,'' arXiv:2603.00476, 2026.
\bibitem{mast25} M. Cemri \emph{et al.}, ``Why do multi-agent LLM systems
fail?'' arXiv:2503.13657, 2025.
\bibitem{khan26} S. Khan, ``Verified detection and prevention of concurrency
anomalies in multi-agent large language model systems,'' arXiv:2606.17182,
2026.
\bibitem{schneider00} F. B. Schneider, ``Enforceable security policies,''
\emph{ACM Trans. Inf. Syst. Secur.}, vol. 3, no. 1, pp. 30--50, 2000.
\bibitem{miller06} M. S. Miller, ``Robust composition: Towards a unified
approach to access control and concurrency control,'' Ph.D. dissertation,
Johns Hopkins University, 2006.
\bibitem{fencing-kleppmann} M. Kleppmann, \emph{Designing Data-Intensive
Applications}. O'Reilly, 2017, ch.~8--9 (fencing tokens and generation
numbers).
\bibitem{2pc-gray} J. Gray and A. Reuter, \emph{Transaction Processing:
Concepts and Techniques}. Morgan Kaufmann, 1993.
\bibitem{orseau16} L. Orseau and S. Armstrong, ``Safely interruptible
agents,'' in \emph{Proc. 32nd Conf. Uncertainty in Artificial Intelligence
(UAI)}, 2016, pp. 557--566.
\bibitem{soares15} N. Soares, B. Fallenstein, S. Armstrong, and
E. Yudkowsky, ``Corrigibility,'' in \emph{AAAI Workshop on AI and Ethics},
2015.
\bibitem{abstractalg26} Abstract Algorithms, ``Human-in-the-loop workflows
with LangGraph: Interrupts, approvals, and async execution,''
\url{https://www.abstractalgorithms.dev/langgraph-human-in-the-loop},
Apr. 2026, accessed July 2026.
\bibitem{seccomp} The Linux Kernel documentation, ``Seccomp BPF (SECure
COMPuting with filters),''
\url{https://www.kernel.org/doc/html/latest/userspace-api/seccomp_filter.html},
accessed July 2026.
\bibitem{gvisor19} E. G. Young, P. Zhu, T. Caraza-Harter,
A. C. Arpaci-Dusseau, and R. H. Arpaci-Dusseau, ``The true cost of
containing: A gVisor case study,'' in \emph{Proc. 11th USENIX Workshop on
Hot Topics in Cloud Computing (HotCloud)}, 2019.
\bibitem{wasi} Bytecode Alliance, ``WASI: The WebAssembly System
Interface,'' \url{https://wasi.dev}, accessed July 2026.
\bibitem{armstrong03} J. Armstrong, ``Making reliable distributed systems in
the presence of software errors,'' Ph.D. dissertation, KTH Royal Institute
of Technology, Stockholm, 2003.
\bibitem{structconc} N. J. Smith, ``Notes on structured concurrency, or: Go
statement considered harmful,''
\url{https://vorpus.org/blog/notes-on-structured-concurrency-or-go-statement-considered-harmful/},
2018, accessed July 2026.
\bibitem{ligatti05} J. Ligatti, L. Bauer, and D. Walker, ``Edit automata:
Enforcement mechanisms for run-time security policies,'' \emph{Int. J. Inf.
Secur.}, vol. 4, no. 1--2, pp. 2--16, 2005.
\bibitem{macaroons14} A.~Birgisson, J.~G. Politz, {\'U}.~Erlingsson, A.~Taly,
M.~Vrable, and M.~Lentczner, ``Macaroons: Cookies with contextual caveats for
decentralized authorization in the cloud,'' in \emph{Proc. Network and
Distributed System Security Symposium (NDSS)}, 2014.
\bibitem{capsicum10} R.~N.~M. Watson, J.~Anderson, B.~Laurie, and K.~Kennaway,
``Capsicum: Practical capabilities for UNIX,'' in \emph{Proc. 19th USENIX
Security Symposium}, 2010.
\bibitem{cheri15} R.~N.~M. Watson \emph{et al.}, ``CHERI: A hybrid
capability-system architecture for scalable software compartmentalization,''
in \emph{Proc. IEEE Symposium on Security and Privacy}, 2015.
\bibitem{outbox} C.~Richardson, \emph{Microservices Patterns}. Manning,
2018, ch.~3 (Transactional Outbox pattern).
\bibitem{mckeeman98} W.~M. McKeeman, ``Differential testing for software,''
\emph{Digital Technical Journal}, vol.~10, no.~1, pp. 100--107, 1998.
\bibitem{csmith11} X.~Yang, Y.~Chen, E.~Eide, and J.~Regehr, ``Finding and
understanding bugs in C compilers,'' in \emph{Proc. ACM SIGPLAN Conf.
Programming Language Design and Implementation (PLDI)}, 2011.
\bibitem{raft14} D.~Ongaro and J.~Ousterhout, ``In search of an
understandable consensus algorithm,'' in \emph{Proc. USENIX Annual
Technical Conference (ATC)}, 2014.
\bibitem{ajpf12} L.~A. Dennis, M.~Fisher, M.~P. Webster, and R.~H. Bordini,
``Model checking agent programming languages,'' \emph{Automated Software
Engineering}, vol.~19, no.~1, 2012.
\bibitem{agentbench24} X.~Liu \emph{et al.}, ``AgentBench: Evaluating LLMs
as agents,'' in \emph{Proc. Int. Conf. Learning Representations (ICLR)},
2024.
\bibitem{ironfleet15} C.~Hawblitzel, J.~Howell, M.~Kapritsos, J.~R. Lorch,
B.~Parno, M.~L. Roberts, S.~Setty, and B.~Zill, ``IronFleet: Proving
practical distributed systems correct,'' in \emph{Proc. ACM Symp. Operating
Systems Principles (SOSP)}, 2015.
\bibitem{verdi15} J.~R. Wilcox, D.~Woos, P.~Panchekha, Z.~Tatlock, X.~Wang,
M.~D. Ernst, and T.~Anderson, ``Verdi: A framework for implementing and
formally verifying distributed systems,'' in \emph{Proc. ACM SIGPLAN Conf.
Programming Language Design and Implementation (PLDI)}, 2015.
\bibitem{fscq15} H.~Chen, D.~Ziegler, T.~Chajed, A.~Chlipala, M.~F.
Kaashoek, and N.~Zeldovich, ``Using Crash Hoare Logic for certifying the
FSCQ file system,'' in \emph{Proc. ACM Symp. Operating Systems Principles
(SOSP)}, 2015.
\bibitem{cilium} Cilium Authors, ``Cilium: eBPF-based networking,
observability, and security,'' 2024. [Online]. Available:
https://cilium.io
\bibitem{falco} The Falco Authors, ``Falco: Cloud-native runtime security,''
CNCF, 2024. [Online]. Available: https://falco.org
\bibitem{iec61508} International Electrotechnical Commission, ``IEC 61508:
Functional safety of electrical/electronic/programmable electronic safety-related
systems,'' Edition 2.0, IEC, Geneva, 2010.

\bibitem{iso26262} International Organization for Standardization, ``ISO 26262:
Road vehicles---Functional safety,'' 2nd ed., ISO, Geneva, 2018.

\bibitem{iso10218} International Organization for Standardization, ``ISO
10218-1: Robots and robotic devices---Safety requirements for industrial
robots---Part 1: Robots,'' ISO, Geneva, 2011.

\bibitem{eswaran76} K.~P. Eswaran, J.~N. Gray, R.~A. Lorie, and I.~L.
Traiger, ``The notions of consistency and predicate locks in a database
system,'' \emph{Commun. ACM}, vol.~19, no.~11, pp.~624--633, 1976.

\bibitem{langsec16} F.~Momot, S.~Bratus, S.~M. Hallberg, and M.~L.
Patterson, ``The seven turrets of Babel: A taxonomy of LangSec errors and
how to expunge them,'' in \emph{Proc. IEEE Cybersecurity Development
(SecDev)}, 2016, pp.~45--52.

\bibitem{chatgptagent25} OpenAI, ``Introducing ChatGPT agent: bridging
research and action,'' Jul.~2025. [Online]. Available:
https://openai.com/index/introducing-chatgpt-agent/

\bibitem{taubench24} S.~Yao, N.~Shinn, P.~Razavi, and K.~Narasimhan,
``$\tau$-bench: A benchmark for tool-agent-user interaction in real-world
domains,'' arXiv:2406.12045, 2024.

\bibitem{wahbe93} R.~Wahbe, S.~Lucco, T.~E. Anderson, and S.~L. Graham,
``Efficient software-based fault isolation,'' in \emph{Proc.\ 14th ACM Symp.\
Operating Systems Principles (SOSP)}, 1993, pp.~203--216.

\bibitem{lucassen88} J.~M. Lucassen and D.~K. Gifford, ``Polymorphic effect
systems,'' in \emph{Proc.\ 15th ACM SIGPLAN-SIGACT Symp.\ Principles of
Programming Languages (POPL)}, 1988, pp.~47--57.

\bibitem{leucker09} M.~Leucker and C.~Schallhart, ``A brief account of
runtime verification,'' \emph{J.\ Logic and Algebraic Programming}, vol.~78,
no.~5, pp.~293--303, 2009.

\bibitem{millwheel13} T.~Akidau, A.~Balikov, K.~Bekiro\u{g}lu, S.~Chernyak,
J.~Haberman, R.~Lax, S.~McVeety, D.~Mills, P.~Nordstrom, and S.~Whittle,
``MillWheel: Fault-tolerant stream processing at Internet scale,''
\emph{Proc.\ VLDB Endowment}, vol.~6, no.~11, pp.~1033--1044, 2013.

\bibitem{newcombe15} C.~Newcombe, T.~Rath, F.~Zhang, B.~Munteanu,
M.~Brooker, and M.~Deardeuff, ``How Amazon Web Services uses formal
methods,'' \emph{Commun.\ ACM}, vol.~58, no.~4, pp.~66--73, 2015.

\bibitem{bpel07} OASIS, ``Web Services Business Process Execution Language
version 2.0,'' OASIS Standard, Apr.\ 2007.

\bibitem{wfpatterns03} W.~M.~P. van~der Aalst, A.~H.~M. ter~Hofstede,
B.~Kiepuszewski, and A.~P. Barros, ``Workflow patterns,''
\emph{Distributed and Parallel Databases}, vol.~14, no.~1, pp.~5--51, 2003.

\bibitem{asyncjs17} M.~C. Loring, M.~Marron, and D.~Leijen, ``Semantics of
asynchronous JavaScript,'' in \emph{Proc.\ 13th ACM SIGPLAN Int.\ Symp.\
Dynamic Languages (DLS)}, 2017, pp.~51--62.

\bibitem{flink17} P.~Carbone, S.~Ewen, G.~F\'ora, S.~Haridi, S.~Richter, and
K.~Tzoumas, ``State management in Apache Flink: Consistent stateful
distributed stream processing,'' \emph{Proc.\ VLDB Endowment}, vol.~10,
no.~12, pp.~1718--1729, 2017.

\bibitem{memgpt23} C.~Packer, S.~Wooders, K.~Lin, V.~Fang, S.~G. Patil,
I.~Stoica, and J.~E. Gonzalez, ``MemGPT: Towards LLMs as operating
systems,'' arXiv:2310.08560, 2023.

\bibitem{aios24} K.~Mei, Z.~Li, S.~Xu, R.~Ye, Y.~Ge, and Y.~Zhang, ``AIOS:
LLM agent operating system,'' arXiv:2403.16971, 2024.

\bibitem{rustbelt18} R.~Jung, J.-H. Jourdan, R.~Krebbers, and D.~Dreyer,
``RustBelt: Securing the foundations of the Rust programming language,''
\emph{Proc.\ ACM Program.\ Lang.}, vol.~2, no.~POPL, pp.~66:1--66:34, 2018.

\bibitem{bainbridge83} L.~Bainbridge, ``Ironies of automation,''
\emph{Automatica}, vol.~19, no.~6, pp.~775--779, 1983.

\bibitem{shuttle} AWS Labs, ``Shuttle: a library for testing concurrent Rust
code,'' \url{https://github.com/awslabs/shuttle}, accessed July 2026.

\bibitem{linkerd} W.~Morgan et al., ``Linkerd: a service mesh for Kubernetes,''
Cloud Native Computing Foundation, \url{https://linkerd.io}, accessed July 2026.

\bibitem{gray78} J. Gray, ``Notes on data base operating systems,'' in
\emph{Operating Systems: An Advanced Course}, ser. Lecture Notes in Computer
Science, vol.~60. Springer, 1978, pp. 393--481.

\bibitem{calvin12} A. Thomson, T. Diamond, S.-C. Weng, K. Ren, P. Shao, and
D.~J. Abadi, ``Calvin: Fast distributed transactions for partitioned database
systems,'' in \emph{Proc. ACM SIGMOD}, 2012, pp. 1--12.

\bibitem{endsley95} M.~R. Endsley, ``Toward a theory of situation awareness in
dynamic systems,'' \emph{Human Factors}, vol.~37, no.~1, pp. 32--64, 1995.

\bibitem{cummings04} M.~L. Cummings, ``Automation bias in intelligent time
critical decision support systems,'' in \emph{Proc. AIAA 1st Intelligent
Systems Technical Conference}, 2004.

\bibitem{temporal-docs} Temporal Technologies, ``Temporal documentation:
activity heartbeats and cancellation; activity timeouts,''
\url{https://docs.temporal.io}, accessed July 2026.

\bibitem{agentdojo24} E. Debenedetti, J. Zhang, M. Balunovi\'{c}, L. Beurer-Kellner,
M. Fischer, and F. Tram\`{e}r, ``AgentDojo: A dynamic environment to evaluate
prompt injection attacks and defenses for LLM agents,'' in \emph{Proc. NeurIPS
Datasets and Benchmarks}, 2024 (arXiv:2406.13352).

\bibitem{isolategpt25} Y. Wu, F. Roesner, T. Kohno, N. Zhang, and U. Iqbal,
``IsolateGPT: An execution isolation architecture for LLM-based agentic
systems,'' in \emph{Proc. Network and Distributed System Security Symposium
(NDSS)}, 2025 (arXiv:2403.04960).

\bibitem{erlingsson00} \'{U}. Erlingsson and F.~B. Schneider, ``IRM
enforcement of Java stack inspection,'' in \emph{Proc. IEEE Symp. Security
and Privacy}, 2000, pp. 246--255.
\bibitem{kafka-kip98} Apache Software Foundation, ``KIP-98: Exactly-once
delivery and transactional messaging,'' Apache Kafka Improvement Proposal,
2017. [Online]. Available:
\url{https://cwiki.apache.org/confluence/display/KAFKA/KIP-98+-+Exactly+Once+Delivery+and+Transactional+Messaging}
\bibitem{otel} OpenTelemetry Authors, ``The OpenTelemetry specification.''
[Online]. Available: \url{https://opentelemetry.io/docs/specs/otel/}
\end{thebibliography}
\end{document}